\renewcommand{\tilde}{\widetilde}
\renewcommand{\hat}{\widehat}
\def\beq{\begin{equation}}
\def\eeq{\end{equation}}
\def\beqa{\begin{eqnarray}}
\def\eeqa{\end{eqnarray}}
\def\beqan{\begin{eqnarray*}}
\def\eeqan{\end{eqnarray*}}
\def\R{{\mathbb{R}}}
\def\argmin{\mathop{\mathrm{arg\,min}}}
\def\argmax{\mathop{\mathrm{arg\,max}}}
\def\x{\times}
\newtheorem{definition}{Definition}
\newtheorem{theorem}{Theorem}
\newtheorem{lemma}{Lemma}
\newtheorem{assumption}{Assumption}
\def\xhat{\hat{x}}
\def\arr{\rightarrow}
\def\Exp{\mathbb{E}}
\def\alphabar{\overline{\alpha}}
\def\etabar{\overline{\eta}}
\def\gammabar{\overline{\gamma}}
\def\tm1{t\! - \! 1}
\def\tp1{t\! + \! 1}
\def\km1{k\! - \! 1}
\def\kp1{k\! + \! 1}
\def\ip1{i\! + \! 1}
\def\im1{i\! - \! 1}
\def\lm1{\ell\! - \! 1}
\def\lp1{\ell\! + \! 1}
\newcommand{\zero}{\mathbf{0}}
\newcommand{\bbf}{\mathbf{b}}
\newcommand{\bbfhat}{\hat{\mathbf{b}}}
\newcommand{\cbf}{\mathbf{c}}
\newcommand{\cbfhat}{\hat{\mathbf{c}}}
\newcommand{\fbf}{\mathbf{f}}
\newcommand{\gbf}{\mathbf{g}}
\newcommand{\hbf}{\mathbf{h}}
\newcommand{\pbf}{\mathbf{p}}
\newcommand{\qbf}{\mathbf{q}}
\newcommand{\rbf}{\mathbf{r}}
\newcommand{\sbf}{\mathbf{s}}
\newcommand{\ubf}{\mathbf{u}}
\newcommand{\ubfhat}{\hat{\mathbf{u}}}
\newcommand{\vbf}{\mathbf{v}}
\newcommand{\wbf}{\mathbf{w}}
\newcommand{\xbf}{\mathbf{x}}
\newcommand{\xbfhat}{\hat{\mathbf{x}}}
\newcommand{\ybf}{\mathbf{y}}
\newcommand{\zbf}{\mathbf{z}}
\newcommand{\Abf}{\mathbf{A}}
\newcommand{\Bbf}{\mathbf{B}}
\newcommand{\Dbf}{\mathbf{D}}
\newcommand{\Hbf}{\mathbf{H}}
\newcommand{\Ibf}{\mathbf{I}}
\newcommand{\Jbf}{\mathbf{J}}
\newcommand{\Pbf}{\mathbf{P}}
\newcommand{\Qbf}{\mathbf{Q}}
\newcommand{\Rbf}{\mathbf{R}}
\newcommand{\Sbf}{\mathbf{S}}
\newcommand{\Ubf}{\mathbf{U}}
\newcommand{\Vbf}{\mathbf{V}}
\newcommand{\Xbf}{\mathbf{X}}
\def\etabf{{\boldsymbol \eta}}
\def\xibf{{\boldsymbol \xi}}
\def\Phibf{{\boldsymbol{\Phi}}}
\def\Psibf{{\boldsymbol{\Psi}}}
\newcommand{\phibf}{{\bm{\phi}}}
\newcommand{\tran}{^{\text{\sf T}}}
\newcommand{\defn}{:=}
\def\eqd{\stackrel{d}{=}}
\def\Norm{{\mathcal N}}
\def\Range{\mathrm{Range}}
\def\Diag{\mathrm{Diag}}
\def\alphabar{\overline{\alpha}}
\def\Ecal{{\mathcal E}}
\def\Kit{K_{\rm it}}
\def\ONinv{O(\tfrac{1}{\sqrt{N}})}
\newcommand{\bkt}[1]{{\langle #1 \rangle}}
\title{Plug-in Estimation in High-Dimensional Linear Inverse Problems:  A Rigorous Analysis}
\author{
 Alyson K. Fletcher\\
 Dept.\ Statistics\\
 UC Los Angeles\\
 \texttt{akfletcher@ucla.edu} \\
 \And
 Sundeep Rangan\\
 Dept.\ ECE\\
 NYU\\
 \texttt{srangan@nyu.edu} \\
 \And
 Subrata Sarkar\\
 Dept.\ ECE\\
 The Ohio State Univ.\\
 \texttt{sarkar.51@osu.edu}
 \And
 Philip Schniter\\
 Dept.\ ECE\\
 The Ohio State Univ.\\
 \texttt{schniter.1@osu.edu}
}
\begin{document}

\maketitle

\begin{abstract}
Estimating a vector $\xbf$ from noisy linear measurements $\Abf\xbf+\wbf$
often requires use of prior knowledge or structural constraints
on $\xbf$ for accurate reconstruction.
Several recent works have considered combining 
linear least-squares estimation with a generic or ``plug-in'' denoiser function
that can be designed in a modular manner based on the prior knowledge about $\xbf$.
While these methods have shown excellent performance, it has been difficult 
to obtain rigorous performance guarantees.
This work considers plug-in denoising combined with the
recently-developed Vector Approximate Message Passing (VAMP) algorithm,
which is itself derived via Expectation Propagation techniques.
It shown that the mean squared error of this ``plug-and-play" 
VAMP can be exactly predicted for high-dimensional right-rotationally invariant random $\Abf$ and Lipschitz denoisers.
The method is demonstrated on applications in image recovery and parametric bilinear estimation.
\end{abstract}

\section{Introduction}

The estimation of an unknown vector $\xbf^0 \in \R^N$
from noisy linear measurements $\ybf$ of the form
\beq \label{eq:yAx}
    \ybf = \Abf\xbf^0 + \wbf \in \R^M,
\eeq
where $\Abf \in \R^{M \x N}$ is a known transform and $\wbf$ is disturbance,
arises in a wide-range of learning and inverse problems.
In many high-dimensional situations, such as when the measurements are fewer than the unknown
parameters (i.e., $M\ll N$), it is essential to incorporate known structure on $\xbf^0$
in the estimation process.
A fundamental challenge is how to perform structured estimation
of $\xbf^0$ while maintaining computational efficiency and a tractable analysis.

\emph{Approximate message passing} (AMP), originally proposed in \cite{DonohoMM:09},
refers to a powerful class of algorithms that can be applied to reconstruction
of $\xbf^0$ from \eqref{eq:yAx} that can easily incorporate a wide class of
statistical priors.
In this work, we restrict our attention to $\wbf\sim\Norm(\zero,\gamma_w^{-1}\Ibf)$,
noting that AMP was extended to non-Gaussian measurements in
\cite{Rangan:11-ISIT,RanSRFC:13-ISIT,RanSchFle:14-ISIT}.
AMP is computationally efficient, in that it generates a sequence of estimates
$\{\xbfhat_k\}_{k=0}^\infty$ by iterating the steps
\begin{subequations} \label{eq:AMP}
\begin{align}
  \xbfhat_{k}
  &= \gbf(\rbf_{k},\gamma_k) \\
  \vbf_{k}
  &= \ybf - \Abf\xbfhat_{k} + \tfrac{N}{M}
     \langle \nabla\gbf(\rbf_{k-1},\gamma_{k-1})\rangle \vbf_{k-1} \\
  \rbf_{k+1}
  &= \xbfhat_k + \Abf\tran\vbf_k , \quad \gamma_{k+1} = M/\|\vbf_k\|^2 ,
\end{align}
\end{subequations}
initialized with
$\rbf_0=\Abf\tran\ybf$, $\gamma_0=M/\|\ybf\|^2$, $\vbf_{-1}=\zero$, and
assuming $\Abf$ is scaled so that $\|\Abf\|_F^2\approx N$.
In \eqref{eq:AMP}, $\gbf:\R^N \x \R \rightarrow \R^N$ is an estimation function
chosen based on prior knowledge about $\xbf^0$, and
$\langle \nabla\gbf(\rbf,\gamma) \rangle \defn \frac{1}{N}\sum_{n=1}^N \frac{\partial g_n(\rbf,\gamma)}{\partial r_n}$
denotes the divergence of $\gbf(\rbf,\gamma)$.
For example, if $\xbf^0$ is known to be sparse, then it is common to
choose $\gbf(\cdot)$ to be the componentwise soft-thresholding function,
in which case AMP iteratively solves the LASSO \cite{Tibshirani:96} problem.

Importantly, for large, i.i.d., sub-Gaussian random matrices $\Abf$
and Lipschitz denoisers $\gbf(\cdot)$, the performance of AMP
can be exactly predicted by a scalar \emph{state evolution} (SE),
which also provides testable conditions for optimality
\cite{BayatiM:11,javanmard2013state,berthier2017state}.
The initial work \cite{BayatiM:11,javanmard2013state} focused on the case where
$\gbf(\cdot)$ is a separable function with identical components
(i.e., $[\gbf(\rbf,\gamma)]_n=g(r_n,\gamma)~\forall n$),
while the later work \cite{berthier2017state} allowed non-separable $\gbf(\cdot)$.
Interestingly, these SE analyses establish the fact that
\begin{equation}
\label{eq:rk}
\rbf_k = \xbf^0 + \Norm(\zero,\Ibf/\gamma_k),
\end{equation}
leading to the important interpretation that $\gbf(\cdot)$ acts as a \emph{denoiser}.
This interpretation provides guidance on how to choose $\gbf(\cdot)$.
For example, if $\xbf$ is i.i.d.\ with a known prior, then
\eqref{eq:rk} suggests to choose
a separable $\gbf(\cdot)$ composed of minimum mean-squared error (MMSE) scalar denoisers
$g(r_n,\gamma)=\Exp(x_n|r_n=x_n+\Norm(0,1/\gamma))$.
In this case, \cite{BayatiM:11,javanmard2013state} established that,
whenever the SE has a unique fixed point, the estimates $\xbfhat_k$
generated by AMP converge to the Bayes optimal estimate of $\xbf^0$ from $\ybf$.
As another example, if $\xbf$ is a natural image,
for which an analytical prior is lacking,
then \eqref{eq:rk} suggests to choose $\gbf(\cdot)$ as a sophisticated
image-denoising algorithm like BM3D \cite{dabov2007image} 
or DnCNN \cite{zhang2017beyond},
as proposed in \cite{Metzler2016denoise}.
Many other examples of structured estimators $\gbf(\cdot)$ can be considered;
we refer the reader to \cite{berthier2017state} and Section~\ref{sec:num}.
Prior to \cite{berthier2017state}, AMP SE results were established for
special cases of $\gbf(\cdot)$ in \cite{DonohoJM:13,ma2017analysis}.
Plug-in denoisers have been combined in related algorithms
\cite{venkatakrishnan2013plug,chen2017bm3d,wang2017parameter}.

An important limitation of AMP's SE is that it holds only for large,
i.i.d., sub-Gaussian $\Abf$.
AMP itself often fails to converge with small deviations
from i.i.d.\ sub-Gaussian $\Abf$, such as when
$\Abf$ is mildly ill-conditioned or non-zero-mean
\cite{RanSchFle:14-ISIT,Caltagirone:14-ISIT,Vila:ICASSP:15}.
Recently, a robust alternative to AMP called \emph{vector AMP} (VAMP) was
proposed and analyzed in \cite{rangan2017vamp}, based closely on 
expectation propagation \cite{OppWin:05}---see also
\cite{fletcher2016expectation,ma2017orthogonal,takeuchi2017rigorous}.
There it was established that,
if $\Abf$ is a large right-rotationally invariant random matrix and
$\gbf(\cdot)$ is a separable Lipschitz denoiser, 
then VAMP's performance can be exactly predicted by a scalar SE, which also
provides testable conditions for optimality.
Importantly, VAMP applies to arbitrarily conditioned matrices $\Abf$,
which is a significant benefit over AMP,
since it is known that ill-conditioning is one of AMP's main failure
mechanisms~\cite{RanSchFle:14-ISIT,Caltagirone:14-ISIT,Vila:ICASSP:15}.


Unfortunately, the SE analyses of VAMP in \cite{rangan2016vamp} and its
extension in \cite{fletcher2017rigorous} are limited to separable denoisers.
This limitation prevents a full understanding of VAMP's behavior when used with
non-separable denoisers, such as state-of-the-art image-denoising methods
as recently suggested in \cite{schniter2017denoising}.
The main contribution of this work is to show that the SE analysis
of VAMP can be extended to a large class of non-separable denoisers
that are Lipschitz continuous and satisfy a certain convergence property.
The conditions are similar to those used in the analysis of
AMP with non-separable denoisers in \cite{berthier2017state}.
We show that there are several interesting non-separable denoisers that
satisfy these conditions, including group-structured and convolutional
neural network based denoisers.

For space considerations, all proofs and many details are provided in
Appendices in the Supplementary Materials section.

\section{Review of Vector AMP}

\begin{algorithm}[t]
\caption{Vector AMP (LMMSE form)}
\begin{algorithmic}[1]  \label{algo:vamp}
\REQUIRE{
LMMSE estimator
$\gbf_2(\cdot,\gamma_{2k})$ from \eqref{eq:g2slr},
denoiser $\gbf_1(\cdot,\gamma_{1k})$,
and
number of iterations $\Kit$.}%
\STATE{Select initial $\rbf_{10}$ and $\gamma_{10}\geq 0$.}
\FOR{$k=0,1,\dots,\Kit$}
    \STATE{// Denoising }
    \STATE{$\xbfhat_{1k} = \gbf_1(\rbf_{1k},\gamma_{1k})$}
        \label{line:x1}
    \STATE{$\alpha_{1k} = \bkt{ \nabla \gbf_1(\rbf_{1k},\gamma_{1k}) }$}
        \label{line:a1}
    \STATE{$\eta_{1k} = \gamma_{1k}/\alpha_{1k}$, $\gamma_{2k} = \eta_{1k} - \gamma_{1k}$}
        \label{line:gam2}
    \STATE{$\rbf_{2k} = (\eta_{1k}\xbfhat_{1k} - \gamma_{1k}\rbf_{1k})/\gamma_{2k}$}
        \label{line:r2}
    \STATE{ }
    \STATE{// LMMSE estimation }
    \STATE{$\xbfhat_{2k} = \gbf_2(\rbf_{2k},\gamma_{2k})$}
        \label{line:x2}
    \STATE{$\alpha_{2k} = \bkt{ \nabla \gbf_2(\rbf_{2k},\gamma_{2k}) } $}
        \label{line:a2}
    \STATE{$\eta_{2k} = \gamma_{2k}/\alpha_{2k}$, $\gamma_{1,\kp1} = \eta_{2k} - \gamma_{2k}$}
        \label{line:gam1}
    \STATE{$\rbf_{1,\kp1} = (\eta_{2k}\xbfhat_{2k} - \gamma_{2k}\rbf_{2k})/\gamma_{1,\kp1}$}
        \label{line:r1}
\ENDFOR
\STATE{Return $\xbfhat_{1\Kit}$.}
\end{algorithmic}
\end{algorithm}

The steps of VAMP algorithm of \cite{rangan2017vamp}
are shown in Algorithm~\ref{algo:vamp}.
Each iteration has two parts:  A denoiser step and a Linear MMSE (LMMSE)
step.  These are characterized by \emph{estimation functions} $\gbf_1(\cdot)$ and $\gbf_2(\cdot)$
producing estimates $\xbfhat_{1k}$ and $\xbfhat_{2k}$.  The estimation functions
take inputs $\rbf_{1k}$ and $\rbf_{2k}$ that we call \emph{partial estimates}.
The LMMSE estimation function is given by,
\begin{align}
\gbf_2(\rbf_{2k},\gamma_{2k})
&:= \left( \gamma_w \Abf\tran\Abf + \gamma_{2k}\Ibf\right)^{-1}
        \left( \gamma_w\Abf\tran\ybf + \gamma_{2k}\rbf_{2k} \right),
\label{eq:g2slr}
\end{align}
where $\gamma_w > 0$ is a parameter representing an estimate of the precision (inverse variance)
of the noise $\wbf$ in \eqref{eq:yAx}.  The estimate $\xbfhat_{2k}$ is thus an MMSE estimator,
treating the $\xbf$ as having a Gaussian prior with mean given by the partial estimate
$\rbf_{2k}$.  The estimation function $\gbf_1(\cdot)$ is called the \emph{denoiser} and
can be designed identically to the denoiser $\gbf(\cdot)$ in the AMP iterations
\eqref{eq:AMP}.
In particular, the denoiser is used to incorporate the structural or prior information
on $\xbf$.  As in AMP,
in lines~\ref{line:a1} and \ref{line:a2},  $\bkt{ \nabla \gbf_i }$ denotes the
normalized divergence.

The main result of \cite{rangan2016vamp} is that, under suitable conditions, VAMP
admits a state evolution (SE) analysis that precisely describes the mean squared error (MSE)
of the estimates $\xbfhat_{1k}$ and $\xbfhat_{2k}$ in a certain large system limit (LSL).
Importantly, VAMP's SE analysis applies to arbitrary right rotationally invariant $\Abf$.
This class is considerably larger than the set of sub-Gaussian i.i.d.\ matrices
for which AMP applies. However, the SE analysis
in \cite{rangan2016vamp} is restricted
separable Lipschitz denoisers that can be described as follows:
Let $g_{1n}(\rbf_1,\gamma_1)$ be
the $n$-th component of the output of $\gbf_1(\rbf_1,\gamma_1)$.  Then, it is assumed that,
\beq \label{eq:g1sep}
    \xhat_{1n} = g_{1n}(\rbf_1,\gamma_1) = \phi(r_{1n},\gamma_1),
\eeq
for some function scalar-output function $\phi(\cdot)$ that does not depend on the component index $n$.
Thus, the estimator is separable in the sense that the
$n$-th component of the estimate, $\xhat_{1n}$ depends only on the $n$-th component of the input $r_{1n}$ as well as the precision level $\gamma_1$.  In addition, it is assumed that
$\phi(r_1,\gamma_1)$ satisfies a certain Lipschitz condition.  The separability assumption
precludes the analysis of more general denoisers mentioned in the Introduction.

%
%

\section{Extending the Analysis to Non-Separable Denoisers}
\label{sec:denoiser_ex}

The main contribution of the paper is to extend the state evolution analysis of VAMP
to a class of denoisers
that we call \emph{uniformly Lipschitz} and
\emph{convergent under Gaussian noise}.  This class is significantly
larger than separable Lipschitz denoisers used in \cite{rangan2016vamp}.
To state these conditions precisely,
consider a sequence of estimation problems, indexed by a vector dimension $N$.
For each $N$, suppose there is some ``true" vector $\ubf = \ubf(N) \in \R^N$ that we wish to estimate
from noisy measurements of the form, $\rbf = \ubf + \zbf$, where $\zbf \in \R^N$ is
Gaussian noise.
Let $\ubfhat = \gbf(\rbf,\gamma)$ be some estimator, parameterized by $\gamma$.

\begin{definition}  \label{def:uniflip}
The sequence of estimators $\gbf(\cdot)$ are said to
be \underline{uniformly Lipschitz continuous} if
there exists constants $A$, $B$ and $C > 0$, such that
\beq \label{eq:lipcond}
    \|\gbf(\rbf_2,\gamma_2)-\gbf(\rbf_1,\gamma_1)\| \leq
    (A + B|\gamma_2-\gamma_1|)\|\rbf_2-\rbf_1\| + C\sqrt{N}|\gamma_2-\gamma_1|,
\eeq
for any $\rbf_1,\rbf_2,\gamma_1,\gamma_2$ and $N$.
\end{definition}

\begin{definition}  \label{def:conv}
The sequence of random vectors $\ubf$ and estimators $\gbf(\cdot)$ are said to
be \underline{convergent under Gaussian noise} if the following condition holds:
Let $\zbf_1,\zbf_2 \in \R^N$ be two sequences
where $(z_{1n},z_{2n})$ are i.i.d.\ with  $(z_{1n},z_{2n})=\Norm(0,\Sbf)$
for some positive definite covariance $\Sbf \in \R^{2 \x 2}$.
Then, all the following limits exist almost surely:
\begin{subequations}\label{eq:convlim}
\begin{align}
    &\lim_{N \arr \infty} \frac{1}{N} \gbf(\ubf+\zbf_1,\gamma_1)\tran \gbf(\ubf+\zbf_2,\gamma_2),
    \quad
    \lim_{N \arr \infty} \frac{1}{N} \gbf(\ubf+\zbf_1,\gamma_1)\tran \ubf,
    \label{eq:momlim} \\
   & \lim_{N \arr \infty} \frac{1}{N} \ubf\tran\zbf_1, \quad
    \lim_{N \arr \infty} \frac{1}{N} \|\ubf\|^2   \\
    &\lim_{N \arr \infty} \bkt{\nabla \gbf(\ubf+\zbf_1,\gamma_1)}
        = \frac{1}{NS_{12}} \gbf(\ubf+\zbf_1,\gamma_1)\tran \zbf_2,
        \label{eq:divlim}
\end{align}
\end{subequations}
for all $\gamma_1,\gamma_2$ and covariance matrices $\Sbf$. 
Moreover, the values of the limits are continuous in $\Sbf$, $\gamma_1$ and $\gamma_2$.
\end{definition}

\medskip
With these definitions, we make the following key assumption on the denoiser.

\begin{assumption} \label{as:denoiser}
For each $N$, suppose that we have a ``true" random vector $\xbf^0 \in \R^N$
and a denoiser $\gbf_1(\rbf_1,\gamma_1)$ acting on signals $\rbf_1 \in \R^N$.
Following Definition~\ref{def:uniflip}, we assume
the sequence of denoiser functions
indexed by $N$, is uniformly Lipschitz continuous.
In addition, the sequence of true
vectors $\xbf^0$ and denoiser functions are convergent under Gaussian noise
following Definition~\ref{def:conv}.
\end{assumption}

\medskip
The first part of Assumption~\ref{as:denoiser} is relatively standard:
Lipschitz and uniform Lipschitz continuity of the denoiser is assumed several AMP-type
analyses including \cite{BayatiM:11,KamRanFU:12-IT,rangan2016vamp}
What is new is the assumption in Definition~\ref{def:conv}.
This assumption
relates to the behavior of the denoiser $\gbf_1(\rbf_1,\gamma_1)$ in the case
when the input is of the form, $\rbf_1 = \xbf^0 + \zbf$.  That is,
the input is the true signal with a Gaussian noise perturbation.
In this setting, we will be requiring that certain correlations converge.
Before continuing our analysis, we briefly show that separable denoisers as well
as several interesting
non-separable denoisers satisfy these conditions.

\paragraph*{Separable Denoisers.}
We first show that the class of denoisers satisfying Assumption~\ref{as:denoiser}
includes the separable
Lipschitz denoisers studied in most AMP analyses such as \cite{BayatiM:11}.
Specifically, suppose that the true vector $\xbf^0$ has i.i.d.\ components
with bounded second moments and the denoiser
$\gbf_1(\cdot)$ is separable in that it is of the form \eqref{eq:g1sep}.
Under a certain uniform Lipschitz condition, it is shown in
Appendix~\ref{sec:denoiser_ex_det} that the denoiser satisfies Assumption~\ref{as:denoiser}.

\paragraph*{Group-Based Denoisers.}
As a first non-separable example, let us suppose that the vector
$\xbf^0$ can be represented as an $L \x K$ matrix.
Let $\xbf^0_\ell \in \R^K$ denote the $\ell$-th row and assume that the rows are i.i.d.
Each row can represent a \emph{group}.
Suppose that the denoiser $\gbf_1(\cdot)$ is \emph{groupwise separable}.
That is, if we denote by $\gbf_{1\ell}(\rbf,\ell)$ the
$\ell$-th row of the output of the denoiser, we assume that
\beq \label{eq:ggroup}
    \gbf_{1\ell}(\rbf,\gamma) = \mathbf{\phi}(\rbf_\ell,\gamma) \in \R^K,
\eeq
for a vector-valued function $\mathbf{\phi}(\cdot)$ that is the same for all rows.
Thus, the $\ell$-th row output $\gbf_\ell(\cdot)$ depends only on the $\ell$-th row input.
Such groupwise denoisers have been used in AMP and EP-type methods for group LASSO
and other structured estimation problems~\cite{taeb2013maximin,andersen2014bayesian,rangan2017hybrid}.
Now, consider the limit where the group size $K$ is fixed, and the number of groups $L \arr \infty$.
Then, under suitable Lipschitz continuity conditions,
Appendix~\ref{sec:denoiser_ex_det} shows that groupwise separable denoiser
also satisfies Assumption~\ref{as:denoiser}.

\paragraph*{Convolutional Denoisers.}
As another non-separable denoiser,
suppose that, for each $N$, $\xbf^0$ is an $N$ sample segment of a stationary,
ergodic process with bounded second moments.
Suppose that the denoiser is given by a linear convolution,
\beq \label{eq:gconvolution}
    \gbf_1(\rbf_1) := T_N(\hbf * \rbf_1),
\eeq
where $\hbf$ is a finite length filter and $T_N(\cdot)$ truncates the signal to its
first $N$ samples.  For simplicity,
we assume there is no dependence on $\gamma_1$.
Convolutional denoising arises in many standard linear estimation operations
on wide sense stationary
processes such as Weiner filtering and smoothing~\cite{scharf1991statistical}.
If we assume that $\hbf$ remains constant and $N \arr \infty$,
Appendix~\ref{sec:denoiser_ex_det} shows that the sequence of random vectors
$\xbf^0$ and convolutional denoisers $\gbf_1(\cdot)$ satisfies Assumption~\ref{as:denoiser}.

\paragraph*{Convolutional Neural Networks.}  In recent years, there has been considerable interest
in using trained deep convolutional neural networks for image denoising
\cite{xie2012image,xu2014deep}.  As a simple model for such a denoiser,
suppose that the denoiser is a composition of maps,
\beq \label{eq:gcnn}
    \gbf_1(\rbf_1) = (F_L \circ F_{L-1} \circ \cdots \circ F_1)(\rbf_1),
\eeq
where  $F_\ell(\cdot)$ is a sequence of layer maps where each layer is either
a multi-channel convolutional operator or Lipschitz separable activation function,
such as sigmoid or ReLU.  Under mild assumptions on the maps,
it is shown in Appendix~\ref{sec:denoiser_ex_det}
the estimator sequence $\gbf_1(\cdot)$ can also satisfy Assumption~\ref{as:denoiser}.

\paragraph*{Singular-Value Thresholding (SVT) Denoiser.} 
Consider the estimation of a low-rank matrix $\Xbf^0$ from linear measurements $\ybf=\mathcal{A}(\Xbf^0)$, where $\mathcal{A}$ is some linear operator~\cite{cai2010singular}. 
Writing the SVD of $\Rbf$ as $\Rbf=\sum_i\sigma_i\ubf_i\vbf_i\tran$, the SVT denoiser is defined as
\begin{align}\label{eq:svt}
    \gbf_1(\Rbf,\gamma)\defn\sum_i(\sigma_i-\gamma)_+\ubf_i\vbf_i\tran,
\end{align}
where $(x)_+\defn\text{max}\{0,x\}$. 
In Appendix~\ref{sec:denoiser_ex_det}, we show that $\gbf_1(\cdot)$ satisfies Assumption~\ref{as:denoiser}.

\section{Large System Limit Analysis}

\subsection{System Model}

Our main theoretical contribution is to show that the SE analysis of VAMP in
\cite{rangan2017vamp} can be extended to the non-separable case.
We consider a sequence of problems indexed by the vector dimension $N$.
For each $N$, we assume that there is a ``true" random
vector $\xbf^0\in\R^N$ observed through measurements $\ybf\in\R^M$ of the form in \eqref{eq:yAx}
where $\wbf \sim \Norm(\mathbf{0},\gamma_{w0}^{-1}\Ibf)$.
We use $\gamma_{w0}$ to denote the ``true" noise precision to distinguish this from the postulated
precision, $\gamma_w$, used in the LMMSE estimator \eqref{eq:g2slr}.
Without loss of generality (see below), we assume that $M=N$.
We assume that $\Abf$ has an SVD,
\beq \label{eq:ASVD}
    \Abf=\Ubf\Sbf\Vbf\tran, \quad \Sbf=\mathrm{diag}(\sbf), \quad \sbf = (s_1,\ldots,s_N),
\eeq
where $\Ubf$ and $\Vbf$ are orthogonal and $\Sbf$ is non-negative and diagonal.
The matrix $\Ubf$ is arbitrary, $\sbf$ is an i.i.d.\ random vector with components
$s_i \in [0,s_{max}]$ almost surely.  Importantly, we assume that $\Vbf$ is Haar distributed,
meaning that it is uniform on the $N \x N$ orthogonal matrices.
This implies that $\Abf$ is \emph{right rotationally invariant} meaning that
$\Abf \eqd \Abf\Vbf_0$ for any orthogonal matrix $\Vbf_0$.   We also assume that
$\wbf$, $\xbf^0$, $\sbf$ and $\Vbf$ are all independent.  As in \cite{rangan2017vamp},
we can handle the case of rectangular $\Vbf$ by zero padding $\sbf$.

These assumptions are similar to those in \cite{rangan2017vamp}.
The key new assumption is Assumption~\ref{as:denoiser}.
Given such a denoiser and postulated variance $\gamma_w$, we run the VAMP algorithm,
Algorithm~\ref{algo:vamp}.  We assume that the initial condition is given by,
\beq \label{eq:r1init}
    \rbf = \xbf^0 + \Norm(\mathbf{0},\tau_{10}\Ibf),
\eeq
for some initial error variance $\tau_{10}$.  In addition, we assume
\beq \label{eq:gam10lim}
    \lim_{N \arr \infty} \gamma_{10} = \gammabar_{10},
\eeq
almost surely for some $\gammabar_{10} \geq 0$.

Analogous to \cite{rangan2016vamp}, we define two key functions: \emph{error functions}
and \emph{sensitivity functions}.
The error functions characterize the MSEs of the denoiser and LMMSE estimator under AWGN measurements.
For the denoiser $\gbf_1(\cdot,\gamma_1)$, we define the error function as
\begin{align}
    \MoveEqLeft \Ecal_1(\gamma_1,\tau_1)
    := \lim_{N \arr \infty} \frac{1}{N} \|\gbf_1(\xbf^0+\zbf,\gamma_1)-\xbf^0\|^2, \quad
    \zbf \sim \Norm(\mathbf{0}, \tau_1\Ibf),
\label{eq:eps1}
\end{align}
and, for the LMMSE estimator, as
\begin{align}
    \MoveEqLeft \Ecal_2(\gamma_2,\tau_2)  := \lim_{N \arr \infty}
        \frac{1}{N} \Exp \| \gbf_2(\rbf_2,\gamma_2) -\xbf^0 \|^2, \nonumber \\
    & \rbf_2 = \xbf^0 + \Norm(0,\tau_2 \Ibf), \quad
    \ybf = \Abf\xbf^0 + \Norm(0,\gamma_{w0}^{-1} \Ibf).
    \label{eq:eps2}
\end{align}
The limit \eqref{eq:eps1} exists almost surely due to the assumption of $\gbf_1(\cdot)$ being
convergent under Gaussian noise.
Although $\Ecal_2(\gamma_2,\tau_2)$ implicitly depends on the precisions $\gamma_{w0}$ and $\gamma_w$, we omit this dependence to simplify the notation.
We also define the \emph{sensitivity functions} as
\beq \label{eq:Ai}
    {\mathcal A}_i(\gamma_i,\tau_i) := \lim_{N \arr \infty}
        \bkt{ \nabla \gbf_i(\xbf^0 + \zbf_i,\gamma_i) }, \quad \zbf_i \sim \Norm(\mathbf{0},\tau_i\Ibf).
\eeq
The LMMSE error function \eqref{eq:eps2} and sensitivity functions \eqref{eq:Ai} are
identical to those in the VAMP analysis \cite{rangan2017vamp}.  The denoiser
error function \eqref{eq:eps1} generalizes the error function in \cite{rangan2017vamp} for non-separable
denoisers.

\subsection{State Evolution of VAMP}

We now show that the VAMP algorithm with a non-separable denoiser
follows the identical state evolution equations as the separable case given in \cite{rangan2017vamp}.
Define the error vectors,
\beq \label{eq:pqdef}
    \pbf_k := \rbf_{1k}-\xbf^0, \quad \qbf_k := \Vbf\tran(\rbf_{2k}-\xbf^0).
\eeq
Thus, $\pbf_k$ represents the error between the partial estimate $\rbf_{1k}$ and
the true vector $\xbf^0$.  The error vector $\qbf_k$ represents the
transformed error $\rbf_{2k}-\xbf^0$.  The SE analysis will show that these errors
are asymptotically Gaussian.  In addition, the analysis will exactly predict the
variance on the partial estimate errors \eqref{eq:pqdef} and
estimate errors, $\xbfhat_i-\xbf^0$.  These variances are computed recursively
through what we will call the \emph{state evolution} equations:
\begin{subequations} \label{eq:se}
\begin{align}
    \alphabar_{1k} &= {\mathcal A}_1(\gammabar_{1k},\tau_{1k}), \quad
    \etabar_{1k} = \frac{\gammabar_{1k}}{\alphabar_{1k}}, \quad
    \gammabar_{2k} = \etabar_{1k} - \gammabar_{1k} \label{eq:a1se} \\
    \tau_{2k} &= \frac{1}{(1-\alphabar_{1k})^2}\left[
        \Ecal_1(\gammabar_{1k},\tau_{1k}) - \alphabar_{1k}^2\tau_{1k} \right],
            \label{eq:tau2se} \\
    \alphabar_{2k} &= {\mathcal A}_2(\gammabar_{2k},\tau_{2k}), \quad
    \etabar_{2k} = \frac{\gammabar_{2k}}{\alphabar_{2k}}, \quad
    \gammabar_{1,\kp1} = \etabar_{2k} - \gammabar_{2k} \label{eq:a2se} \\
    \tau_{1,\kp1} &= \frac{1}{(1-\alphabar_{2k})^2}\left[
        \Ecal_2(\gammabar_{2k},\tau_{2k}) - \alphabar_{2k}^2\tau_{2k} \right],
            \label{eq:tau1se}
\end{align}
\end{subequations}
which are initialized with $k=0$, $\tau_{10}$ in \eqref{eq:r1init}
and  $\gammabar_{10}$ defined from the limit \eqref{eq:gam10lim}.
The SE equations in \eqref{eq:se} are identical to those in \cite{rangan2017vamp}
with the new error and sensitivity functions for the non-separable denoisers.
We can now state our main result.

\begin{theorem} \label{thm:se}
Under the above assumptions and definitions, assume
that the sequence of true random vectors $\xbf^0$ and denoisers
$\gbf_1(\rbf_1,\gamma_1)$ satisfy Assumption~\ref{as:denoiser}.
Assume additionally that, for all iterations $k$,
the solution $\alphabar_{1k}$ from the SE equations \eqref{eq:se} satisfies
$\alphabar_{1k} \in (0,1)$ and $\gammabar_{ik} > 0$.
Then,
\begin{enumerate}[(a)]
\item For any $k$, the error vectors on the partial estimates,
$\pbf_k$ and $\qbf_k$ in \eqref{eq:pqdef} can be written as,
\beq \label{eq:pqgauss}
    \pbf_k = \tilde{\pbf}_k + \ONinv, \quad \qbf_k = \tilde{\qbf}_k + \ONinv,
\eeq
where, $\tilde{\pbf}_k$ and $\tilde{\qbf}_k \in \R^N$ are each
 i.i.d.\ Gaussian random vectors with zero
mean and per component variance $\tau_{1k}$ and $\tau_{2k}$, respectively.

\item For any fixed iteration $k \geq 0$, and $i=1,2$, we have, almost surely 
\beq \label{eq:aplim}
    \lim_{N \arr \infty} \frac{1}{N} \|\xbfhat_i-\xbf^0\|^2 = \frac{1}{\etabar_{ik}}, \quad
    \lim_{N \arr \infty} (\alpha_{ik},\eta_{ik},\gamma_{ik}) =
    (\alphabar_{ik},\etabar_{ik}, \gammabar_{ik}).
\eeq

\end{enumerate}
\end{theorem}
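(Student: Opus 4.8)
The plan is to recast the VAMP recursion into the abstract rotationally-invariant recursion framework that underlies the separable analysis of \cite{rangan2017vamp}, and then to replace the componentwise convergence arguments used there with the convergence guaranteed by Definition~\ref{def:conv}. First I would observe that, because the LMMSE map \eqref{eq:g2slr} depends on $\Abf$ only through $\Abf\tran\Abf$ and $\Abf\tran\ybf$, the SVD \eqref{eq:ASVD} together with $\ybf=\Ubf\Sbf\Vbf\tran\xbf^0+\wbf$ lets me express every quantity produced by the algorithm through the action of $\Vbf$ and $\Vbf\tran$ on a handful of vectors. Writing the iterates via the error vectors $\pbf_k=\rbf_{1k}-\xbf^0$ and $\qbf_k=\Vbf\tran(\rbf_{2k}-\xbf^0)$ from \eqref{eq:pqdef}, the denoiser step becomes a (non-separable) Lipschitz map applied to $\xbf^0+\pbf_k$, while the LMMSE step becomes a diagonal operation in the $\Vbf$-basis followed by multiplication by $\Vbf$. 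This is exactly the form of the abstract recursion, with the nonlinear block now being the full denoiser $\gbf_1$ rather than a scalar function.

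Second, I would set up an induction over the iteration index $k$, whose hypothesis is the claim \eqref{eq:pqgauss}: that $\pbf_k$ and $\qbf_k$ are, up to $\ONinv$ corrections, i.i.d.\ Gaussian vectors of per-component variance $\tau_{1k}$ and $\tau_{2k}$ given by the SE equations \eqref{eq:se}, together with the limits \eqref{eq:aplim}. The engine of the induction is the standard conditioning argument: conditioned on the $\sigma$-field generated by all vectors produced through the current half-iteration, the Haar matrix $\Vbf$ is constrained only by finitely many linear equations, so it remains Haar-distributed on the orthogonal complement. This yields that the next transformed error is, conditionally, an independent Gaussian plus a deterministic term lying in the span of previous iterates; that deterministic part is precisely cancelled by the debiasing built into lines~\ref{line:r2} and \ref{line:r1} of Algorithm~\ref{algo:vamp} (the hypotheses $\alphabar_{1k}\in(0,1)$ and $\gammabar_{ik}>0$ keeping the resulting SE updates well-defined). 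The bookkeeping of these cancellations and the $\ONinv$ remainder bounds is identical to \cite{rangan2017vamp} and would be carried over.

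Third---and this is where Assumption~\ref{as:denoiser} does the new work---I would show that the scalar parameters updated by the algorithm converge to their SE values. In the separable case these are empirical averages of a pseudo-Lipschitz scalar function evaluated componentwise, which converge by a law of large numbers. Here, the denoiser input at the key step has the form $\xbf^0+\zbf$ with $\zbf$ asymptotically Gaussian by the inductive hypothesis, so the required limits---$\tfrac1N\|\gbf_1(\xbf^0+\zbf,\gamma_1)-\xbf^0\|^2\to\Ecal_1$, the sensitivity $\bkt{\nabla\gbf_1}\to{\mathcal A}_1$, and the cross-correlations that determine $\tau_{2k}$---are exactly the quantities asserted to converge in \eqref{eq:convlim}. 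The divergence identity \eqref{eq:divlim} is what lets me replace $\bkt{\nabla\gbf_1}$ by a correlation with an independent noise direction, which is the ingredient needed to verify that the debiasing cancels the deterministic component in the conditioning step. Uniform Lipschitz continuity (Definition~\ref{def:uniflip}) then guarantees that passing from the exact conditionally-Gaussian input to the ideal i.i.d.\ Gaussian input perturbs these quantities by only $\ONinv$, closing the induction.

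The step I expect to be the main obstacle is reconciling the conditioning argument with the non-separable convergence hypothesis. In the conditioning step the effective noise added to $\xbf^0$ is not the clean i.i.d.\ Gaussian vector appearing in Definition~\ref{def:conv}, but a conditionally Gaussian vector whose covariance is itself random and converges only in the limit; moreover two successive denoiser inputs are jointly Gaussian with a correlation structure, captured by the $2\times 2$ covariance $\Sbf$ of Definition~\ref{def:conv}, that must be tracked across iterations. The delicate part is therefore to justify that the limits \eqref{eq:convlim}, stated for fixed jointly Gaussian $(\zbf_1,\zbf_2)$, may be applied to these data-dependent, asymptotically Gaussian inputs, using uniform Lipschitz continuity to control the discrepancy and the asserted continuity of the limits in $\Sbf,\gamma_1,\gamma_2$ to pass to the limit. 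Once this transfer is justified, the remaining algebra coincides with the separable proof in \cite{rangan2017vamp} and yields the stated SE recursion.
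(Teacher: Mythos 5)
Your plan follows the paper's proof architecture essentially step for step: rewrite VAMP through the error vectors \eqref{eq:pqdef}, reduce to a general rotationally-invariant recursion (the paper's \eqref{eq:algoGen} and Theorem~\ref{thm:genConv}, with $\fbf_p$ the shifted denoiser and $\fbf_q$ the diagonalized LMMSE map), run an induction in which the Haar matrix $\Vbf$ is re-conditioned on the linear constraints revealed by past iterates (Lemma~\ref{lem:orthogLin}), and use Assumption~\ref{as:denoiser}---uniform Lipschitz continuity plus convergence under Gaussian noise, with continuity of the limits in $\Sbf,\gamma_1,\gamma_2$---to transfer the limits \eqref{eq:convlim} from ideal i.i.d.\ Gaussian inputs to the actual, only asymptotically Gaussian iterates. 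The ``main obstacle'' you identify in your last paragraph is real, and the paper resolves it exactly as you anticipate: replace $\pbf_k$ by $\tilde{\pbf}_k$ at cost $\ONinv$ via uniform Lipschitz continuity, then invoke continuity in $\Sbf$ and $\gamma$.

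There is, however, one genuine error in your description of the conditioning step: the deterministic part is \emph{not} cancelled by the debiasing. In the paper's decomposition $\qbf_k=\qbf_k^{\rm det}+\qbf_k^{\rm ran}$ (Lemmas~\ref{lem:qconvdet}--\ref{lem:qconvran}), what the debiasing in \eqref{eq:vupgen} buys---through the divergence identity \eqref{eq:divlim}, i.e., the Stein-type relation---is only the orthogonality $\lim_{N\arr\infty}\frac{1}{N}\vbf_i\tran\pbf_j=({\mathcal A}_p(\tau_{1i},\gammabar_{1i})-\alphabar_{1i})\,\mathrm{cov}(\tilde{p}_{ni},\tilde{p}_{nj})=0$ of Lemma~\ref{lem:vpcorr}. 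This kills only the coefficients multiplying $\Ubf_k$ in $\Bbf_k=[\Ubf_k~\Qbf_{\km1}]$ (the zero top block in \eqref{eq:Vsmult1}); the component of $\qbf_k^{\rm det}$ along the previous iterates $\tilde{\qbf}_0,\dots,\tilde{\qbf}_{\km1}$ survives with generally nonzero coefficients $\beta_k=[\Qbf^v]^{-1}\bbf^v$, because the errors $\vbf_j$ at distinct iterations are correlated ($\bbf^v\neq 0$). Had you tried to prove $\qbf_k^{\rm det}=\ONinv$, that step would fail, and the variance of the random part alone, $\rho_k=\tau_{2k}-(\bbf^v)\tran[\Qbf^v]^{-1}\bbf^v$, falls short of $\tau_{2k}$ in general. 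Gaussianity with the correct variance $\tau_{2k}$ holds for a different reason: the surviving deterministic part is a linear combination of previous $\tilde{\qbf}_\ell$, which are jointly Gaussian by induction, while $\qbf_k^{\rm ran}$ is a fresh independent Gaussian, so the sum is again Gaussian. Moreover, this joint Gaussianity across iterations is precisely what lets the induction feed correlated pairs $(\zbf_1,\zbf_2)\sim\Norm(0,\Sbf)$ into Definition~\ref{def:conv} at later steps; your plan as written, by discarding the deterministic part, loses the joint covariance structure that the argument must carry forward.
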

\begin{proof}
See Appendix \ref{sec:sepf}.
\end{proof}

In \eqref{eq:pqgauss}, we have used the notation, that when
$\ubf,\tilde{\ubf} \in \R^N$ are sequences of random vectors,
$\ubf = \tilde{\ubf} + \ONinv$ means
$\lim_{N \arr \infty} \tfrac{1}{N} \|\ubf -\tilde{\ubf}\|^2 = 0$
almost surely.
Part (a) of Theorem~\ref{thm:se} thus shows that the error vectors
$\pbf_k$ and $\qbf_k$ in \eqref{eq:pqdef} are approximately i.i.d.\ Gaussian.
The result is a natural extension to the main result on separable denoisers
in \cite{rangan2017vamp}. Moreover, the variance on the variance on the errors,
along with the mean squared error (MSE) of the estimates $\xbfhat_{ik}$ can
be exactly predicted by the same SE equations as the separable case.
The result thus provides an asymptotically exact analysis of VAMP extended
to non-separable denoisers.

\section{Numerical Experiments} \label{sec:num}

\subsection{Compressive Image Recovery}

We first consider the problem of compressive image recovery, 
where the goal is to recover an image $\xbf^0\in\R^N$ from 
measurements $\ybf\in\R^M$ of the form \eqref{eq:yAx} with $M\ll N$.
This problem arises in many imaging applications, such as
magnetic resonance imaging, radar imaging, computed tomography, etc.,
although the details of $\Abf$ and $\xbf^0$ change in each case.

\begin{figure}[t]
\newcommand{\sz}{0.6}
\centering
\begin{subfigure}[t]{0.47\textwidth}
  \centering
  \psfrag{sampling rate M/N}[t][t][\sz]{\sf sampling ratio $M/N$}
  \psfrag{PSNR}[t][t][\sz]{\sf PSNR}
  \psfrag{runtime (sec)}[t][t][\sz]{\sf runtime (sec)}
  \includegraphics[width=\columnwidth,
                   clip=true]{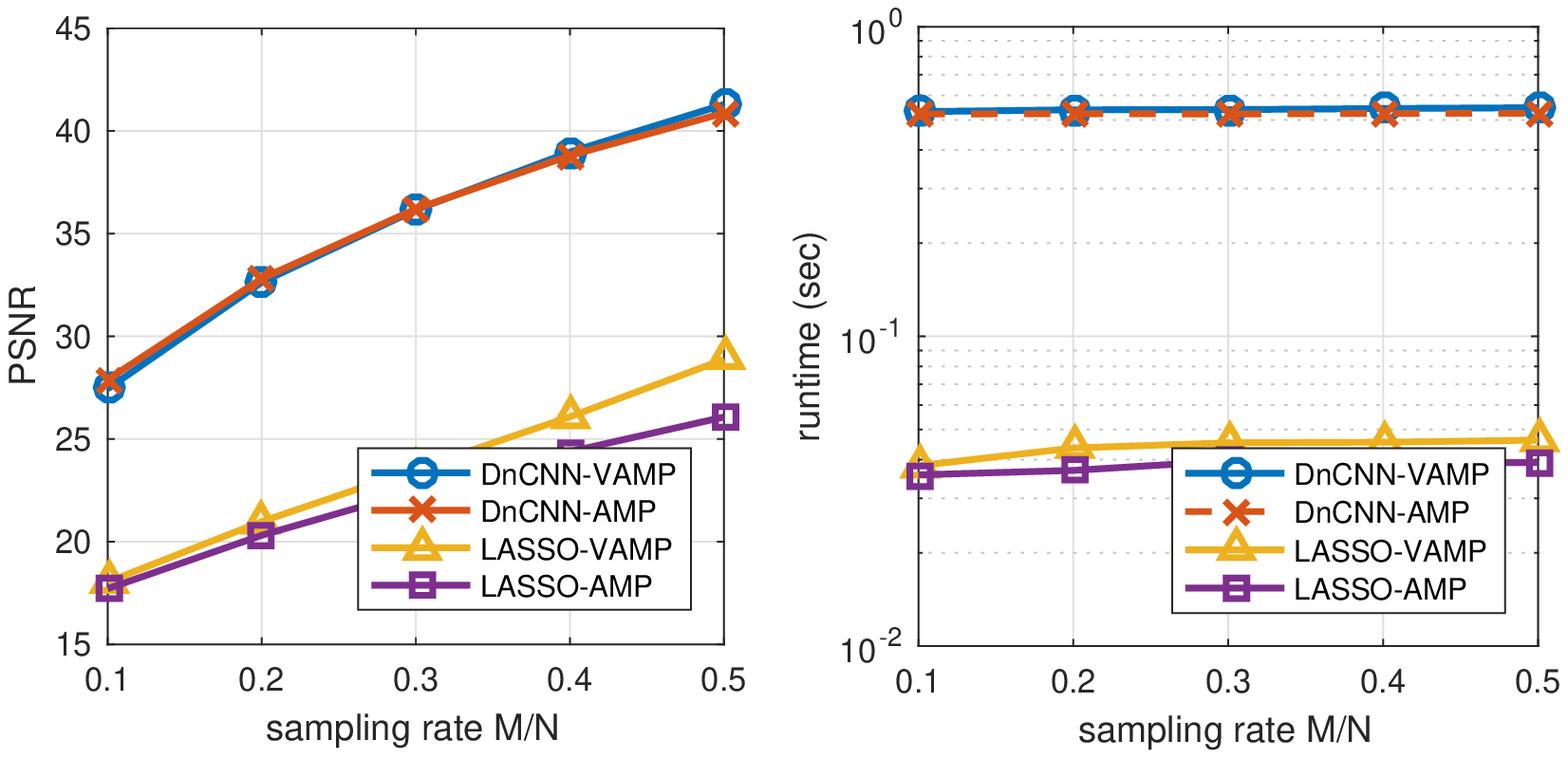}
  \caption{Average PSNR and runtime with vs.\ $M/N$ 
           with well-conditioned $\Abf$ and no noise after 12 iterations.}
  \label{fig:vsRate}
\end{subfigure}
\hfill
\begin{subfigure}[t]{0.47\textwidth}
  \centering
  \psfrag{condition number}[t][t][\sz]{\sf $\text{\sf cond}(\Abf)$}
  \psfrag{PSNR}[t][t][\sz]{\sf PSNR}
  \psfrag{runtime (sec)}[t][t][\sz]{\sf runtime (sec)}
  \includegraphics[width=0.98\columnwidth,
                   clip=true]{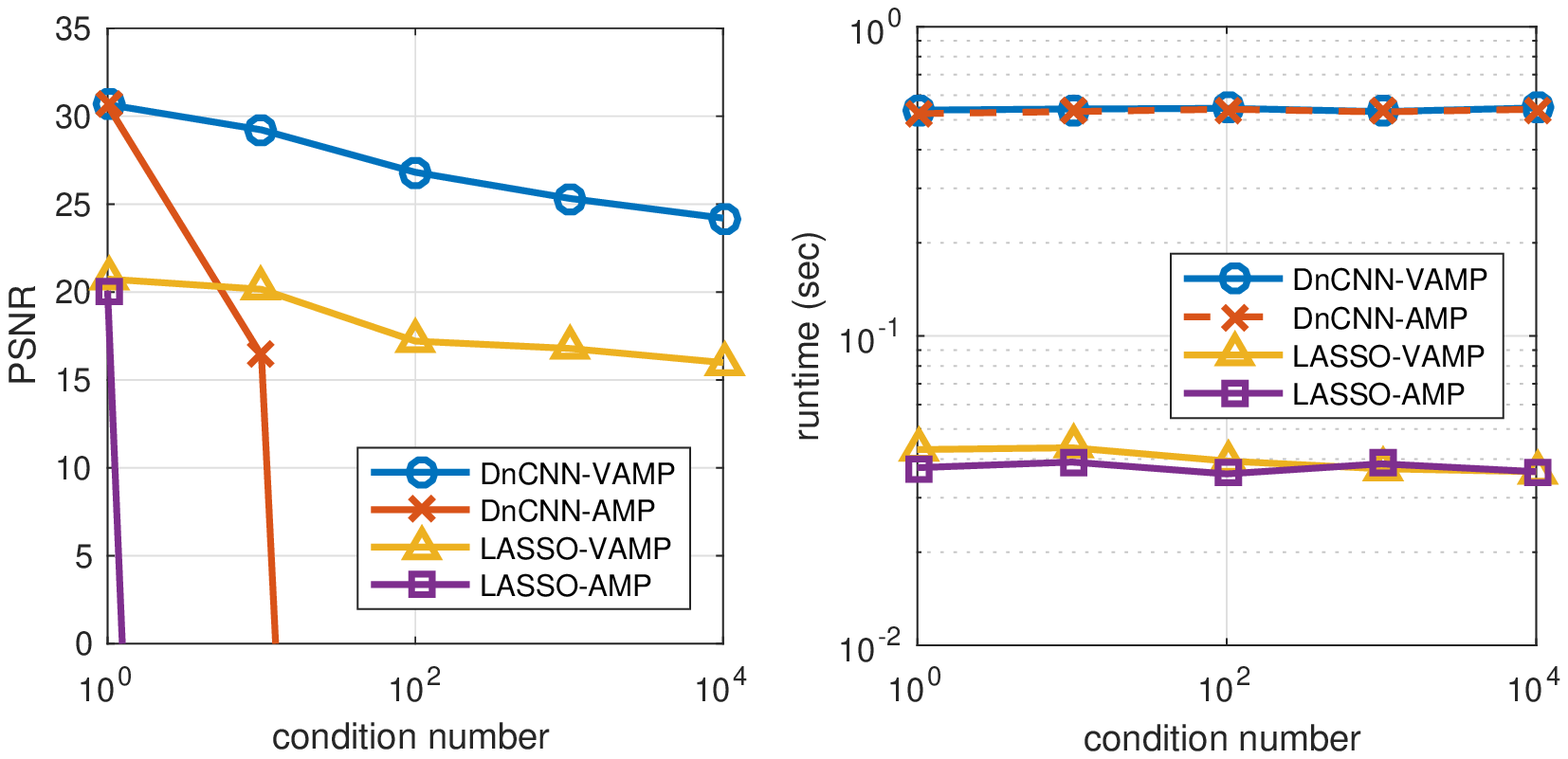}
  \caption{Average PSNR and runtime versus $\text{cond}(\Abf)$
           at $M/N=0.2$ and no noise after 10 iterations.}
  \label{fig:vsCondNum}
\end{subfigure}
\caption{Compressive image recovery: PSNR and runtime vs.\ rate $M/N$ and $\text{cond}(\Abf)$}
\end{figure}

One of the most popular approaches to image recovery is to exploit sparsity 
in the wavelet transform coefficients $\cbf\defn\Psibf\xbf^0$, 
where $\Psibf$ is a suitable orthonormal wavelet transform.
Rewriting \eqref{eq:yAx} as $\ybf=\Abf\Psibf\cbf + \wbf$, the idea is to first
estimate $\cbf$ from $\ybf$ (e.g., using LASSO) and then form the 
image estimate via $\xbfhat=\Psibf\tran\cbfhat$.
Although many algorithms exist to solve the LASSO problem, the AMP algorithms
are among the fastest (see, e.g., \cite[Fig.1]{borgerding2017amp}).
As an alternative to the sparsity-based approach, it was recently suggested
in \cite{Metzler2016denoise}
to recover $\xbf^0$ directly using AMP \eqref{eq:AMP} by choosing the 
estimation function $\gbf$ as a sophisticated image-denoising algorithm like 
BM3D \cite{dabov2007image} or
DnCNN \cite{zhang2017beyond}.

Figure~\ref{fig:vsRate} compares the LASSO- and DnCNN-based versions
of AMP and VAMP for 128$\times$128 image recovery under 
well-conditioned $\Abf$
and no noise.
Here, 
$\Abf=\Jbf\Pbf\Hbf\Dbf$, where
$\Dbf$ is a diagonal matrix with random $\pm 1$ entries,
$\Hbf$ is a discrete Hadamard transform (DHT),
$\Pbf$ is a random permutation matrix, and
$\Jbf$ contains the first $M$ rows of $\Ibf_N$. 
The results average over the well-known
\textsl{lena}, \textsl{barbara}, \textsl{boat},
\textsl{house}, and \textsl{peppers} images 
using 10 random draws of $\Abf$ for each.
The figure shows that AMP and VAMP have very similar runtimes and PSNRs 
when $\Abf$ is 
well-conditioned,
and that
the DnCNN approach is about 10~dB more accurate, but 10$\times$ as slow, as the LASSO approach. 
Figure~\ref{fig:stateEvo} shows the state-evolution prediction of VAMP's PSNR on the \emph{barbara} image at $M/N=0.5$,
averaged over 50 draws of $\Abf$. 
The state-evolution accurately predicts the PSNR of VAMP.

To test the robustness to the condition number of $\Abf$, 
we repeated the experiment from Fig.~\ref{fig:vsRate} using 
$\Abf=\Jbf\Diag(\sbf)\Pbf\Hbf\Dbf$, 
where 
$\Diag(\sbf)$ is a diagonal matrix of singular values.
The singular values were geometrically spaced,
i.e., $s_m/s_{m-1}=\rho~\forall m$, with $\rho$ chosen to achieve a desired 
$\text{cond}(\Abf)\defn s_1/s_M$.
The sampling rate was fixed at $M/N=0.2$, and
the measurements were noiseless, as before.
The results, shown in Fig.~\ref{fig:vsCondNum}, show that
AMP diverged when $\text{cond}(\Abf)\geq 10$, while 
VAMP exhibited only a mild PSNR degradation due to ill-conditioned $\Abf$.
The original images and example image recoveries are included in 
Appendix~\ref{sec:image} of the supplementary material.

\subsection{Bilinear Estimation via Lifting}

We now use the structured linear estimation model \eqref{eq:yAx} 
to tackle problems in \emph{bilinear} estimation through 
a technique known as ``lifting'' 
\cite{candes2013phaselift,ahmed2014blind,ling2015self,davenport2016overview}.
In doing so, we are motivated by applications like
blind deconvolution \cite{Haykin:94},
self-calibration \cite{ling2015self},
compressed sensing (CS) with matrix uncertainty \cite{zhu2011sparsity},
and joint channel-symbol estimation \cite{sun2018joint}.
All cases yield measurements $\ybf$ of the form
\begin{equation}
\label{eq:bilinear}
\ybf = \textstyle \big(\sum_{l=1}^L b_l \Phibf_l \big) \cbf + \wbf \in \R^M,
\end{equation}
where 
$\{\Phibf_l\}_{l=1}^L$ are known, 
$\wbf\sim\Norm(\zero,\Ibf/\gamma_w)$, 
and the objective is to recover both 
$\bbf\defn[b_1,\dots,b_L]\tran$ and $\cbf\in\R^P$.
This bilinear problem can be ``lifted'' into a linear problem 
of the form \eqref{eq:yAx} by setting
\begin{equation}
\label{eq:lifting}
\Abf = \left[\begin{matrix}\Phibf_1&
                           \Phibf_2&\cdots&
                           \Phibf_L\end{matrix}
       \right]\in\R^{M\times LP}
\text{~and~}
\xbf = \text{vec}(\cbf\bbf\tran) \in \R^{LP},
\end{equation}
where $\text{vec}(\Xbf)$ vectorizes $\Xbf$ by concatenating its columns.
When $\bbf$ and $\cbf$ are i.i.d.\ with known~priors, the MMSE denoiser 
$\gbf(\rbf,\gamma) =\Exp(\xbf|\rbf=\xbf+\Norm(\zero,\Ibf/\gamma))$
can be implemented near-optimally by the rank-one AMP algorithm 
from \cite{RanganF:12-ISIT} (see also 
\cite{deshpande2014information,matsushita2013lowrank,lesieur2015phase}),
with divergence estimated as in \cite{Metzler2016denoise}. \quad 

We first consider \emph{CS with matrix uncertainty}
\cite{zhu2011sparsity}, where $b_1$ is known. 
For these experiments, we generated 
the unknown $\{b_l\}_{l=2}^L$ as i.i.d.\ $\Norm(0,1)$ and
the unknown $\cbf\in\R^P$ as $K$-sparse with $\Norm(0,1)$ nonzero entries.
Fig.~\ref{fig:stateEvo} shows that the MSE on $\xbf$ of lifted VAMP is very close to its SE prediction when $K=12$. 
We then compared lifted VAMP to PBiGAMP from \cite{parker2016bilinear}, 
which applies AMP directly to the (non-lifted) bilinear problem, and
to WSS-TLS from \cite{zhu2011sparsity}, 
which uses non-convex optimization.
We also compared to MMSE estimation of $\bbf$ under oracle knowledge of $\cbf$, 
and MMSE estimation of $\cbf$ under 
oracle knowledge of $\text{support}(\cbf)$ and $\bbf$.
For $b_1=\sqrt{20}$, $L=11$, $P=256$, $K=10$, 
i.i.d.\ $\Norm(0,1)$ matrix $\Abf$, and SNR $=$ 40~dB, 
Fig.~\ref{fig:csmu_delta} shows the normalized MSE on $\bbf$ 
(i.e., $\text{\sf NMSE}(\bbf)\defn\Exp\|\bbfhat-\bbf^0\|^2/\Exp\|\bbf^0\|^2$) 
and $\cbf$ versus sampling ratio $M/P$.
This figure demonstrates that lifted VAMP and PBiGAMP perform close to the oracles and much better than WSS-TLS.

Although lifted VAMP performs similarly to PBiGAMP in Fig.~\ref{fig:csmu_delta},
its advantage over PBiGAMP becomes apparent with non-i.i.d.\ $\Abf$.
For illustration, we repeated the previous experiment, 
but with $\Abf$ constructed using the SVD $\Abf=\Ubf\Diag(\sbf)\Vbf\tran$ 
with Haar distributed $\Ubf$ and $\Vbf$ and geometrically spaced $\sbf$.
Also, to make the problem more difficult, we set $b_1=1$.
Figure~\ref{fig:csmu_condA} shows the normalized MSE on $\bbf$ and $\cbf$ 
versus $\text{cond}(\Abf)$ at $M/P=0.6$.
There it can be seen that lifted VAMP is much more robust than PBiGAMP 
to the conditioning of $\Abf$.

\begin{figure}[t]
\begin{minipage}{0.48\textwidth}
  \newcommand{\sz}{0.6}
  \psfrag{iterations}[t][t][\sz]{\sf iteration}
  \psfrag{PSNR (dB)}[b][b][\sz]{\sf PSNR in dB}
  \psfrag{NMSE}[b][b][\sz]{\sf NMSE in dB}
  \psfrag{imaging}[b][b][\sz]{\sf image recovery}
  \psfrag{CSMU}[b][b][\sz]{\sf CS with matrix uncertainty}
  \includegraphics[scale=0.205,
                  trim=8mm 0mm 15mm 3mm, 
                  clip=true]{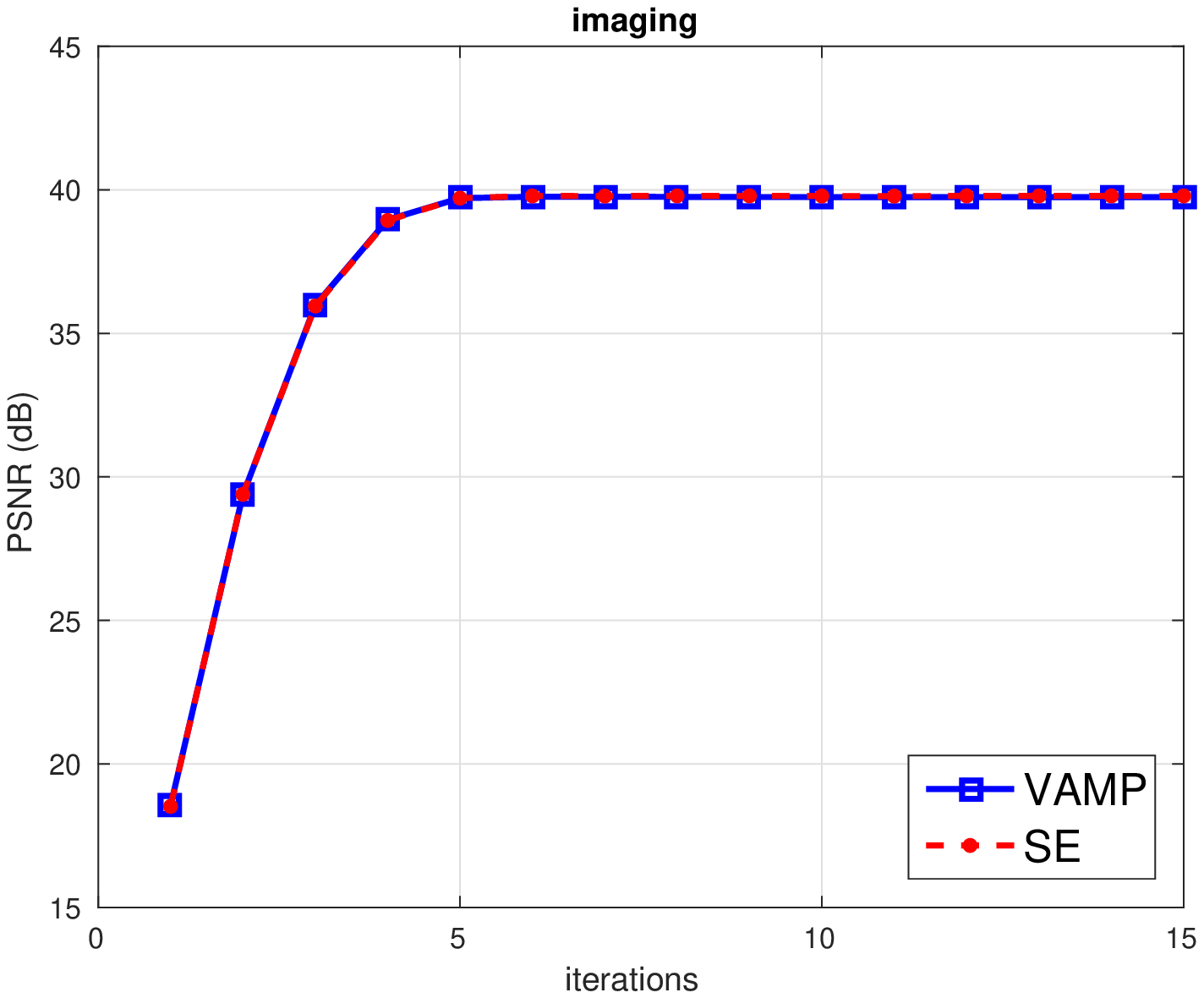}
  \hspace{3mm}
  \includegraphics[scale=0.195,
                  trim=8mm 0mm 15mm 3mm, 
                  clip=true]{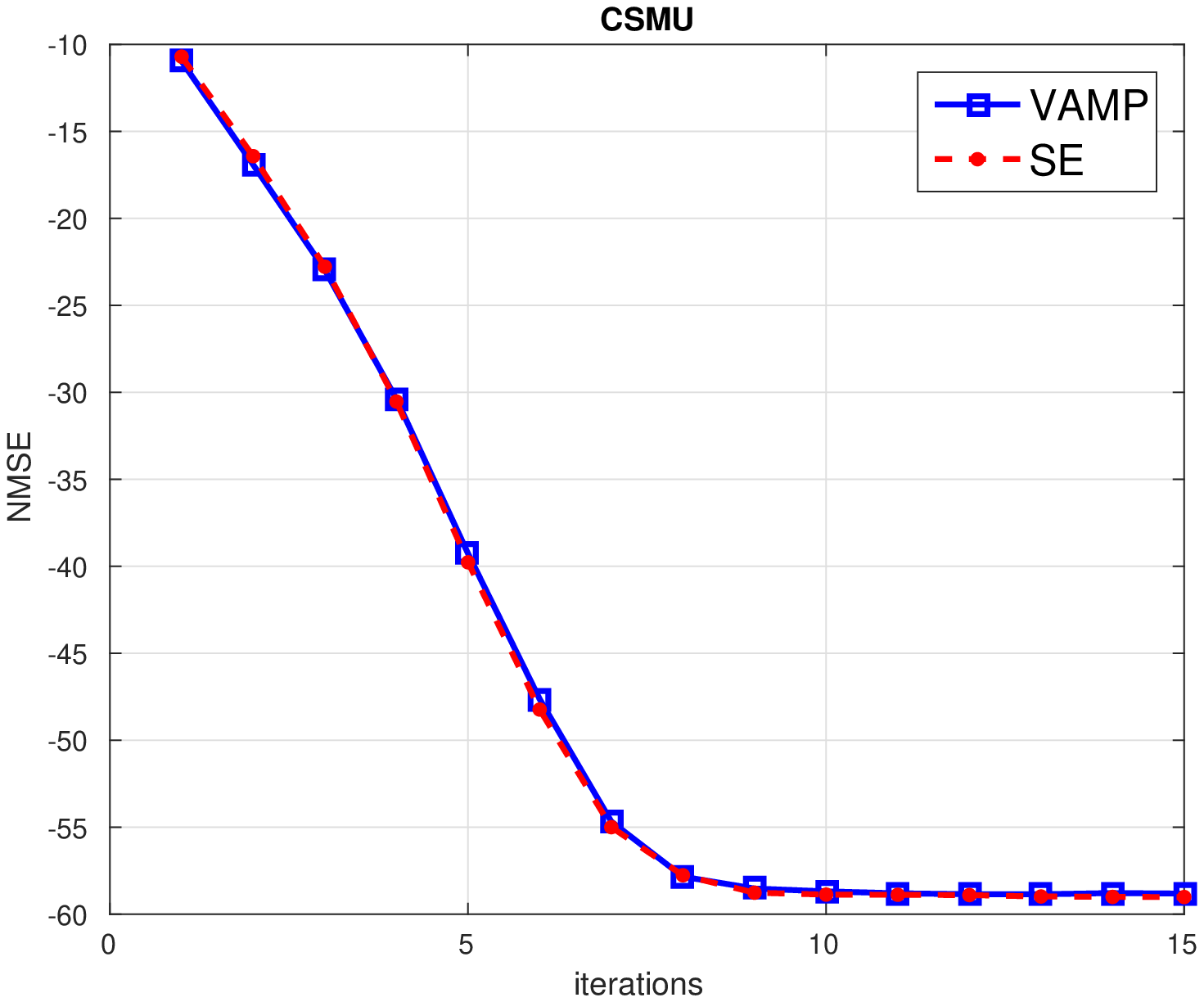}
  \hspace{3mm}
  \caption{SE prediction \& VAMP for image recovery and CS with matrix uncertainty}
  \label{fig:stateEvo}
\end{minipage}
\hfill
\begin{minipage}{0.48\textwidth}
  \centering
  \newcommand{\sz}{0.6}
  \psfrag{Nb}[c][b][\sz]{\sf subspace dimension $L$}
  \psfrag{K}[t][t][\sz]{\sf sparsity $K$}
  \psfrag{VAMP-Lift}[b][b][\sz]{\sf Lifted VAMP}
  \psfrag{SparseLift}[b][b][\sz]{\sf SparseLift}
  \includegraphics[scale=0.38,
                   trim=19mm 19mm 22mm 20mm, 
                   clip=true]{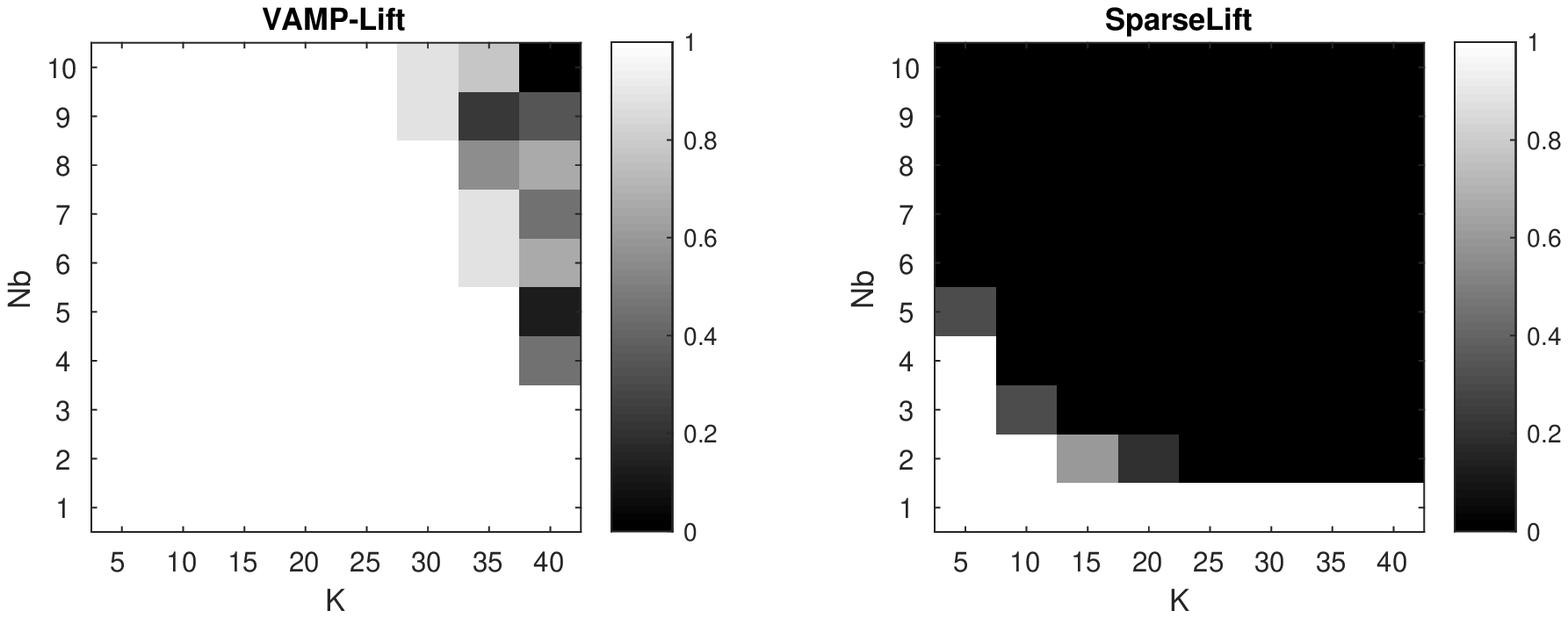}
  \caption{Self-calibration: Success rate vs.\ sparsity $K$ and subspace dimension $L$}
  \label{fig:calibration}
\end{minipage}
\end{figure}

\begin{figure}[t]
\newcommand{\sz}{0.6}
\centering
\begin{subfigure}[t]{0.47\textwidth}
  \centering
  \psfrag{sampling ratio M/N}[t][t][\sz]{\sf sampling ratio $M/P$}
  \psfrag{NMSE(b) [dB]}[b][b][\sz]{\sf NMSE($\bbf$) in dB}
  \psfrag{NMSE(c) [dB]}[b][b][\sz]{\sf NMSE($\cbf$) in dB}
  \includegraphics[width=\columnwidth,
                   trim=5mm 5mm 15mm 10mm, 
                   clip=true]{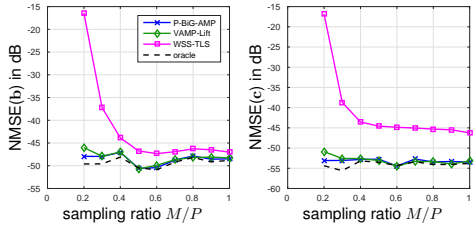}
  \caption{NMSE vs.\ $M/P$ with i.i.d.~$\Norm(0,1)$ $\Abf$.}
  \label{fig:csmu_delta}
\end{subfigure}
\hspace{3mm}
\begin{subfigure}[t]{0.47\textwidth}
  \centering
  \psfrag{cond(A)}[t][t][\sz]{\sf $\text{\sf cond}(\Abf)$}
  \psfrag{NMSE(b) [dB]}[b][b][\sz]{\sf NMSE($\bbf$) in dB}
  \psfrag{NMSE(c) [dB]}[b][b][\sz]{\sf NMSE($\cbf$) in dB}
  \includegraphics[width=\columnwidth,
                   trim=5mm 5mm 15mm 10mm, 
                   clip=true]{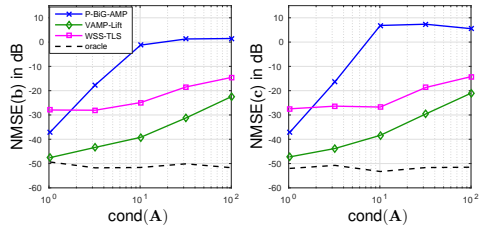}
  \caption{NMSE vs.\ $\text{cond}(\Abf)$ at $M/P=0.6$.}
  \label{fig:csmu_condA}
\end{subfigure}
\caption{Compressive sensing with matrix uncertainty}
\end{figure}

We next consider the \emph{self-calibration} problem \cite{ling2015self}, 
where the measurements take the form
\begin{equation}
\label{eq:selfcal}
\ybf = \textstyle \Diag(\Hbf\bbf) \Psibf \cbf + \wbf \in \R^M .
\end{equation}
Here the matrices $\Hbf\in\R^{M\times L}$ and $\Psibf\in\R^{M\times P}$ are 
known and the objective is to recover the unknown vectors $\bbf$ and $\cbf$.
Physically, the vector $\Hbf\bbf$ represents unknown calibration gains 
that lie in a known subspace, specified by $\Hbf$.
Note that \eqref{eq:selfcal} is an instance of \eqref{eq:bilinear} with $\Phibf_l=\Diag(\hbf_l)\Psibf$,
where $\hbf_l$ denotes the $l$th column of $\Hbf$.
Different from ``CS with matrix uncertainty,'' all elements in $\bbf$ 
are now unknown, and so WSS-TLS \cite{zhu2011sparsity} cannot be applied.
Instead, we compare lifted VAMP to the SparseLift approach from 
\cite{ling2015self}, which is based on convex relaxation and has provable
guarantees.
For our experiment, we generated 
$\Psibf$ and $\bbf\in\R^L$ as i.i.d.\ $\Norm(0,1)$; 
$\cbf$ as $K$-sparse with $\Norm(0,1)$ nonzero entries; 
$\Hbf$ as randomly chosen columns of a Hadamard matrix; 
and $\wbf=\zero$.
Figure~\ref{fig:calibration} plots the success rate versus $L$ and $K$,
where ``success'' is defined as 
$\Exp\|\cbfhat\bbfhat\tran-\cbf^0(\bbf^0)\tran\|_F^2/
 \Exp\|\cbf^0(\bbf^0)\tran\|_F^2 <-60$~dB.
The figure shows that, relative to SparseLift, lifted VAMP gives successful recoveries for a wider range of $L$ and $K$.

\section{Conclusions}  We have extended the analysis of the method
in \cite{rangan2016vamp} to a class of non-separable denoisers.
The method provides a computational efficient method for reconstruction
where structural information and constraints on the unknown vector can
be incorporated in a modular manner.  Importantly, the method admits
a rigorous analysis that can provide precise predictions on the performance
in high-dimensional random settings.

\bibliographystyle{IEEEtran}
\bibliography{../bibl}

\vfill

\pagebreak

{\bf \Large Supplementary Material}

\appendix

\section{Details on Example Denoisers} \label{sec:denoiser_ex_det}

In this section, we provide more details on the denoiser
examples in Section~\ref{sec:denoiser_ex}.
We also provide conditions under which these denoisers satisfy
Assumption~\ref{as:denoiser}.

\paragraph*{Separable Denoisers.}
Assume that $\phi(\cdot)$ satisfies a Lipschitz condition,
\beq \label{eq:philip}
    |\phi(r_2,\gamma_2)-\phi(r_1,\gamma_1)| \leq (A + B|\gamma_2-\gamma_1|)|r_2-r_1| + C|\gamma_2-\gamma_1|,
\eeq
for some constants $A$, $B$ and $C > 0$.
Applying the triangle inequality to \eqref{eq:philip} shows that
$\gbf_1(\cdot)$ satisfies \eqref{eq:lipcond}.
Therefore, $\gbf_1(\cdot)$ satisfies the condition in Definition~\ref{def:uniflip}.
Also, the first limit in \eqref{eq:convlim} is given by,
\[
    \frac{1}{N} \sum_{n=1}^N \phi(x^0_n + z_{1n},\gamma_1)\phi(x^0_n + z_{2n},\gamma_2) =
    \Exp\left[ \phi(x^0_n + z_{1n},\gamma_1)\phi(x^0_n + z_{2n},\gamma_2)\right],
\]
which follows from the Strong Law of Large Numbers
and the fact that we have assumed that the components of $\xbf^0$ are i.i.d.
The remaining limits in\eqref{eq:convlim} can be shown to similar converge.
In particular,
\begin{align*}
   & \lim_{N \arr \infty} \frac{1}{N}\gbf_1(\xbf^0+\zbf_1,\gamma_1)\tran \zbf_2
    = \Exp\left[ \phi(x^0_n + z_{1n},\gamma_1)z_{2n}\right], \\
   & \lim_{N \arr \infty} \bkt{ \nabla \gbf_1(\xbf^0+\zbf_1,\gamma_1) }
    = \Exp\left[ \phi'(x^0_n + z_{1n},\gamma_1) \right],
\end{align*}
where $\phi'(\cdot)$ is the derivative with respect to the first argument.
Moreover, from Stein's lemma,
\[
    \Exp\left[ \phi(x^0_n + z_{1n},\gamma_1)z_{2n}\right] =
    \Exp\left[ \phi'(x^0_n + z_{1n},\gamma_1) \right] \Exp\left[ z_{1n}z_{2n} \right]
    = \Exp\left[ \phi'(x^0_n + z_{1n},\gamma_1) \right] S_{12},
\]
which shows that equality of the two limits in \eqref{eq:divlim}.
This shows that separable, uniform Lipschitz denoisers $\gbf_1(\cdot)$
with i.i.d.\ true signal $\xbf^0$ satisfy Assumption~\ref{as:denoiser}.

\paragraph*{Group-based Denoisers.}
For the groupwise denoiser case, assume that $\phibf(\cdot)$ in
\eqref{eq:ggroup} satisfies a Lipschitz condition,
\beq \label{eq:philipgrp}
    \|\phibf(\rbf_2,\gamma_2)-\phi(\rbf_1,\gamma_1)\|
    \leq (A + B|\gamma_2-\gamma_1|)\|\rbf_2-\rbf_1\| + C|\gamma_2-\gamma_1|,
\eeq
for constants $A,B,C > 0$ and any $\rbf_1,\rbf_2 \in \R^K$.
Then, it is easily verified that $\gbf_1(\cdot)$ is uniformly Lipschitz
according to Definition~\ref{def:uniflip}.
To prove that the denoiser satisfies the convergent conditions in Definition~\ref{def:conv},
let $\zbf_1$ and $\zbf_2$ be two sequence of vectors as in Definition~\ref{def:conv}.
Each $\zbf_i$ can be viewed as a $L \x K$ matrix.  We
let $\zbf_{i\ell}$  be the $\ell$-th row of $\zbf_i$.
With these definitions, the first sum in \eqref{eq:convlim} is given by,
\[
    \frac{1}{LK} \sum_{\ell=1}^L \phibf(\xbf^0_\ell+ \zbf_{1\ell})\phibf(\xbf^0_\ell+ \zbf_{2\ell}),
\]
which is a sum of i.i.d.\ terms.  Hence, the sum converges as $L \arr \infty$.  The convergence
of the other sums can be proven similarly.

\paragraph*{Convolutional Denoisers.}
To prove that $\gbf_1(\rbf_1)$ in \eqref{eq:gconvolution} satisfies Assumption~\ref{as:denoiser},
first observe that since $\gbf_1(\rbf)$ is linear.  Moreover, since it is realized from
a truncated linear filter, its norm is given by,
\[
    \|\gbf_1(\rbf)\| \leq A\|\rbf\|, \quad A := \argmax_{\theta \in [0,2\pi]}
        |\widehat{H}(e^{i\theta})|,
\]
where $\widehat{H}(e^{i\theta})$ is the discrete-time Fourier transform
of the filter $\hbf$.  The bound here holds for all $N$.
Since there is no dependence on $\gamma$, the sequence $\gbf_1(\cdot)$ is uniformly Lipschitz
and satisfies Definition~\ref{def:uniflip}.  To prove that $\xbf^0$ and $\gbf_1(\cdot)$
satisfy Definition~\ref{def:conv},
consider two sequences $\zbf_1$ and $\zbf_2$ be as in Definition~\ref{def:conv}.
Let $\ybf_i$ be the outputs of the convolution (without truncation),
\[
    \ybf_i = \hbf*(\xbf^0+\zbf_i), \quad i=1,2.
\]
Let $\ybf$ and $\zbf$ denote the vector-valued process with components
$(y_{1n},y_{2n})$ and $(z_{1n},z_{2n})$.  By assumption, $\zbf$ is i.i.d.\ Gaussian.
Since $\xbf^0$ is stationary and ergodic and $\zbf$ is i.i.d. and $\hbf$ is a
finite length filter, $\ybf$ is a stationary and ergodic. In addition, $\ybf$ will have
bounded second moments.
Now,
\[
    \gbf_1(\xbf^0+\zbf_i) = T_N(\ybf_i),
\]
which is the first $N$ samples of $\ybf_i$.  Hence,
\[
    \lim_{N \arr \infty} \frac{1}{N} \gbf_1(\xbf^0+\zbf_1)\tran\gbf_1(\xbf^0+\zbf_2)
    = \lim_{N \arr \infty} \frac{1}{N} \sum_{n=0}^{N-1} y_{1n}y_{2n}
\]
and this limit converges almost surely due to the ergodicity of $\ybf$.
The other limits in Definition~\ref{def:conv} can be similarly proven to converge.

\paragraph*{Convolutional Neural Networks.}
As a simple model for a convolutional neural network denoiser,
suppose that the true signal, $\xbf^0$, arises from
$N$ time samples of a stationary and ergodic
multi-variate process $\xbf^0$.  Let $\xbf^0_n \in R^{d_0}$ denote the $n$-th sample
of the process and $d_0$ denote the dimension of the input.
Given an $N$-sample input $\rbf$, if we let
\[
    \zbf_{\lp1} = F_\ell(\zbf_\ell), \quad \zbf_0 = \rbf,
\]
then the $\zbf_L = \gbf_1(\rbf)$.  Assume that each layer output $\zbf_\ell$
has $N$ time samples with dimension $d_\ell$ at each time sample.
Also, assume that each layer $F_\ell(\cdot)$ of the denoiser in \eqref{eq:gcnn}
is one of two possibilities:
\begin{itemize}
\item \emph{Convolutional layer:}  In this case, the layer mapping
$z_{\lp1} = F_\ell(z_\ell)$ is given by a linear multi-channel convolution,
\[
    \zbf_{\ell+1,n} = \sum_{k=0}^{K_\ell-1} \Hbf_{\ell,k}\zbf_{\ell,n-k}, \quad n=0,\ldots,N-1,
\]
where $\Hbf_{\ell,k}$ are the matrix coefficients in a convolution kernel.
We assume the convolution filter are fixed with finite length.

\item \emph{Separable activation:}  In this case, the layer mapping is given by
\[
    \zbf_{\lp1} = \phi_\ell( \zbf_\ell ),
\]
where $\phi_\ell(\cdot)$ is separable and Lipschitz.  This model would include
most common activation functions including sigmoids and ReLUs.
\end{itemize}

Since the convolutional kernels are finite in length, the convolution layers are Lipschitz.
In fact, the Lipschitz constant is given by the spectral norm,
\[
    \|F_\ell(\zbf_\ell)\| \leq A_\ell \|\zbf_\ell\|, \quad
    A_\ell := \max_{\theta \in [0,2\pi]} \sigma_{\rm max}(\widehat{\Hbf}_\ell(e^{i\theta})),
\]
where $\widehat{\Hbf}_\ell(e^{i\theta})$ is the discrete-time multivariable Fourier transform
of the convolution kernel $\Hbf_\ell$ and $\sigma_{\rm max}(\cdot)$ is the maximum
singular value.  By assumption, the activation layers are also Lipschitz.
Since the composition of Lipschitz functions is Lipschitz, the mapping $\gbf_1(\cdot)$
in \eqref{eq:gcnn} is Lipschitz and satisfies Definition~\ref{def:uniflip}.

Also, since $\xbf^0$ is a multi-variate stationary and ergodic random process,
similar arguments as in the convolutional example can be used to show that the limits in
\eqref{eq:convlim} hold almost surely.  Thus, the $\gbf_1(\cdot)$ satisfies
Assumption~\ref{as:denoiser}.

\paragraph*{Singular-Value Thresholding (SVT) Denoiser.} 
To show that $\gbf_1$ in \eqref{eq:svt} is uniformly pseudo-Lipschitz, we first note that $\gbf_1$ is the proximal operator of the nuclear norm $\|\cdot\|_* $, i.e., 
\begin{align*}
    \gbf_1(\rbf,\gamma) = \argmin_{\xbf\in\R^{N_1\times N_2}} \gamma\|\xbf\|_* + \frac{1}{2}\|\xbf-\rbf\|_F^2. 
\end{align*}
From~\cite{candes2013unbiased}, we have that $\gbf_1$ is non-expansive because the nuclear norm is convex and proper, i.e.,
\begin{align}
    \|\gbf_1(\rbf_1,\gamma)-\gbf_1(\rbf_2,\gamma)\|_F^2 &\leq( \rbf_1-\rbf_2)\tran(\gbf_1(\rbf_1,\gamma)-\gbf_1(\rbf_2,\gamma)) \nonumber\\
    \Rightarrow\|\gbf_1(\rbf_1,\gamma)-\gbf_1(\rbf_2,\gamma)\|_F
    &\leq \|\rbf_1-\rbf_2\|_F
    \label{eq:svt_lipschitz} .
\end{align}
Let the SVD of $\rbf_2\in\R^{N_1\times N_2}$ be $\sum_{i=1}^{\min\{N_1,N_2\}} \sigma_i\ubf_i\vbf_i\tran$. 
We can generalize the Lipschitz condition in \eqref{eq:svt_lipschitz} into 
\begin{align*}
    \|\gbf_1(\rbf_1,\gamma_1)-\gbf_1(\rbf_2,\gamma_2)\|_F
    &= \|\gbf_1(\rbf_1,\gamma_1)-\gbf_1(\rbf_2,\gamma_1) + \gbf_1(\rbf_2,\gamma_1)-\gbf_1(\rbf_2,\gamma_2)\|_F \\
    &\leq \|\gbf_1(\rbf_1,\gamma_1)-\gbf_1(\rbf_2,\gamma_1)\|_F + \|\gbf_1(\rbf_2,\gamma_1)-\gbf_1(\rbf_2,\gamma_2)\|_F \\
    &\leq \|\rbf_1-\rbf_2\|_F + \|\gbf_1(\rbf_2,\gamma_1)-\gbf_1(\rbf_2,\gamma_2)\|_F\\
    &\stackrel{\text{(a)}}{=} \|\rbf_1-\rbf_2\|_F + \left\|\sum_{i=1}^{\min\{N_1,N_2\}}((\sigma_i-\gamma_1)_+-(\sigma_i-\gamma_2)_+)\ubf_i\vbf_i\tran\right\|_F\\
    &\leq \|\rbf_1-\rbf_2\|_F + \sum_{i=1}^{\min\{N_1,N_2\}}|(\sigma_i-\gamma_1)_+-(\sigma_i-\gamma_2)_+|\\
    &\leq \|\rbf_1-\rbf_2\|_F + \min\{N_1,N_2\}|\gamma_1-\gamma_2|\\
    &\stackrel{\text{(b)}}{\leq} \|\rbf_1-\rbf_2\|_F + \sqrt{N}|\gamma_1-\gamma_2|,
\end{align*}
where in (a) we have used the the definition of $\gbf_1$ from \eqref{eq:svt} and the SVD of $\rbf_2$, and in (b) we used $\min\{N_1,N_2\}\leq \sqrt{N_1 N_2} = \sqrt{N}$.
Next, we show that $\gbf_1$ also satisfies the convergence conditions in Definition~\ref{def:conv}. 
Let $\zbf_1$ and $\zbf_2$ be two sequences constructed according to Definition~\ref{def:uniflip} and let $\xbf^0$ be the true signal. 
Assume that
\begin{align}\label{eq:svt_limit}
    \lim_{N\rightarrow\infty}\frac{1}{N}\|\xbf^0\|_F^2
    ~~ \text{and} ~~
    \lim_{N\rightarrow\infty}\frac{1}{N} \zbf_1\tran\xbf^0
    ~~ \text{exist almost surely.} 
\end{align}

If we write $\gbf_1(\rbf,\gamma)=[g_1(\rbf,\gamma),\dotso,g_N(\rbf,\gamma)]\tran$, then the following series converges because it is bounded:
\begin{align*}
    \lim_{N\rightarrow\infty}\frac{1}{N}\sum_{i=1}^N|g_i(\xbf^0+\zbf_1,\gamma_1)g_i(\xbf^0+\zbf_2,\gamma_2)|
    &\leq \lim_{N\rightarrow\infty}\frac{1}{N}\|\gbf_1(\xbf^0+\zbf_1,\gamma_1)\|_F
        \|\gbf_1(\xbf^0+\zbf_2,\gamma_2)\|_F\\
    &\leq \lim_{N\rightarrow\infty}\sqrt{\frac{1}{N}\|\xbf^0+\zbf_1\|_F^2}\sqrt{\frac{1}{N}\|\xbf^0+\zbf_2\|_F^2}\\
    &\stackrel{\text{(a)}}{<}\infty,
\end{align*}
where (a) follows from the assumption~\eqref{eq:svt_limit}. 
Since absolute convergence implies convergence, the following series converges:
\begin{align}\label{eq:svt_svd}
    \lim_{N\rightarrow\infty}\frac{1}{N}\gbf_1(\xbf^0+\zbf_1,\gamma_1)\tran\gbf_1(\xbf^0+\zbf_2,\gamma_2)&=\lim_{N\rightarrow\infty}\frac{1}{N}\sum_{i=1}^Ng_i(\xbf^0+\zbf_1,\gamma_1)g_i(\xbf^0+\zbf_2,\gamma_2).
\end{align}

If we choose the covariance matrix in Definition~\ref{def:uniflip} to be $\Sbf=\left[\begin{smallmatrix}1 & 0\\0 & 0\end{smallmatrix}\right]$, then we get 
\begin{align}\label{eq:svt_matrix}
    \lim_{N\rightarrow\infty}\frac{1}{N}\gbf_1(\xbf^0+\zbf_1,\gamma_1)\tran\xbf^0 &=  \lim_{N\rightarrow\infty}\frac{1}{N}\gbf_1(\xbf^0+\zbf_1,\gamma_1)\tran\gbf_1(\xbf^0+\zbf_2,0).
\end{align}
Thus, ~\eqref{eq:svt_matrix} also converges since it is a special case of~\eqref{eq:svt_svd}.

It can be easily shown that $\frac{1}{N}\zbf_2\tran\gbf_1(\xbf^0+\zbf_1,\gamma_1)$ is uniformly Lipschitz. 
Using \cite[Lemma 23]{berthier2017state} and Stein's Lemma~\cite{Stein:72}, we get
\begin{align}
    \lim_{N\rightarrow\infty}\frac{1}{N}\zbf_2\tran\gbf_1(\xbf^0+\zbf_1,\gamma_1)&\stackrel{\text{P}}{\simeq}\lim_{N\rightarrow\infty}\frac{1}{N}\Exp[\zbf_2\tran\gbf_1(\xbf^0+\zbf_1,\gamma_1)] \nonumber\\
    &=\lim_{N\rightarrow\infty}\frac{S_{12}}{N}\Exp[\nabla\gbf_1(\xbf^0+\zbf_1,\gamma_1)]\label{eq:stein_eq},
\end{align}
where $\stackrel{\text{P}}{\simeq}$ denotes convergence in probability.
To show the final convergence condition in Definition~\ref{def:conv}, let us assume that $\langle\nabla\gbf_1(\rbf,\gamma)\rangle$ is uniformly Lipschitz. 
(We are as yet unable to prove this claim.)
Then we have $\lim_{N\rightarrow\infty}\langle\nabla\gbf_1(\rbf,\gamma)\rangle\stackrel{\text{P}}{\simeq}\lim_{N\rightarrow\infty}\Exp[\langle\nabla\gbf_1(\rbf,\gamma)\rangle]$ using \cite[Lemma 23]{berthier2017state}.
Thus, together with~\eqref{eq:stein_eq}, we get the desired result
\begin{align}
        \lim_{N\rightarrow\infty}\frac{1}{N}\langle\nabla\gbf_1(\xbf^0+\zbf_1,\gamma_1)\rangle &= \lim_{N\rightarrow\infty}\frac{1}{S_{12}N}\zbf_2\tran\gbf_1(\xbf^0+\zbf_1,\gamma_1).  
\end{align}

\section{Preliminary Results} \label{sec:prelim}

Since our proof will follow that of \cite{rangan2016vamp},
we review a few key results from that work that will be used here as well.
The most important provides a characterization of a Haar-distributed
matrix $\Vbf$ under linear constraints.  A similar result was key to
the original analysis of Gaussian matrices in the Bayati-Montanari work
\cite{BayatiM:11}.
Let $\Vbf \in \R^{N \x N}$ be Haar-distributed and suppose we wish to find
the conditional distribution of $\Vbf$ under the event that it satisfies
linear constraints
\beq \label{eq:AVB}
    \Abf = \Vbf\Bbf,
\eeq
for some matrices $\Abf, \Bbf \in \R^{N \x s}$ for some $s$.  Assume $\Abf$ and $\Bbf$
are full column rank (hence $s \leq N$).  Let
\beq \label{eq:UABdef}
    \Ubf_\Abf = \Abf(\Abf\tran\Abf)^{-1/2}, \quad
    \Ubf_\Bbf = \Bbf(\Bbf\tran\Bbf)^{-1/2}.
\eeq
Also, let $\Ubf_{\Abf^\perp}$ and $\Ubf_{\Bbf^\perp}$ be any $N \x (N-s)$
matrices whose columns are
an orthonormal bases for $\mathrm{Range}(\Abf)^\perp$ and
$\mathrm{Range}(\Bbf)^\perp$, respectively.

\begin{lemma} \cite[Lemma 4]{rangan2016vamp} \label{lem:orthogLin}
Let $\Vbf \in \R^{N \x N}$ be a random matrix
that is Haar distributed.  Suppose that $\Abf$ and $\Bbf$ are deterministic and $G$ is
the event that $\Vbf$ satisfies linear constraints \eqref{eq:AVB}.
Then, the can write $\Vbf$ as,
\[
    \Vbf =
    \Abf(\Abf\tran\Abf)^{-1}\Bbf\tran + \Ubf_{\Abf^\perp}\tilde{\Vbf}\Ubf_{\Bbf^\perp}\tran,
\]
where $\tilde{\Vbf}$ is a Haar distributed matrix independent of $G$.
\end{lemma}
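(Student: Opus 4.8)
The plan is to pass to coordinates adapted to $\Abf$ and $\Bbf$, use bi-invariance of Haar measure to reduce $\Vbf$ to a canonical block form under the constraint, and identify the single free orthogonal block as the fresh Haar matrix $\tilde{\Vbf}$.

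First I would record the consistency condition that makes $G$ non-vacuous. Since $\Vbf$ is orthogonal and $\Abf=\Vbf\Bbf$, one has $\Abf\tran\Abf=\Bbf\tran\Vbf\tran\Vbf\Bbf=\Bbf\tran\Bbf$, so $(\Abf\tran\Abf)^{1/2}=(\Bbf\tran\Bbf)^{1/2}$ (both invertible, since $\Abf,\Bbf$ have full column rank). This identity is exactly what reconciles the two equivalent forms of the deterministic term at the end.

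Next I would rotate into adapted coordinates. Set $\Pbf:=[\,\Ubf_\Abf~~\Ubf_{\Abf^\perp}\,]$ and $\Qbf:=[\,\Ubf_\Bbf~~\Ubf_{\Bbf^\perp}\,]$, which are $N\x N$ orthogonal, and define $\Wbf:=\Pbf\tran\Vbf\Qbf$; by left/right invariance of Haar measure $\Wbf$ is again Haar distributed. The constraint $\Vbf\Bbf=\Abf$ transforms into $\Wbf(\Qbf\tran\Bbf)=\Pbf\tran\Abf$. Using $\Ubf_\Bbf\tran\Bbf=(\Bbf\tran\Bbf)^{1/2}$, $\Ubf_{\Bbf^\perp}\tran\Bbf=\zero$ and the analogous identities for $\Abf$, this becomes
\[
    \Wbf\begin{bmatrix}(\Bbf\tran\Bbf)^{1/2}\\ \zero\end{bmatrix}
    = \begin{bmatrix}(\Abf\tran\Abf)^{1/2}\\ \zero\end{bmatrix}.
\]
Partitioning $\Wbf$ into $s$ and $N-s$ blocks, the top and bottom block rows give $\Wbf_{11}(\Bbf\tran\Bbf)^{1/2}=(\Abf\tran\Abf)^{1/2}$ and $\Wbf_{21}(\Bbf\tran\Bbf)^{1/2}=\zero$; invertibility together with the consistency identity yields $\Wbf_{11}=\Ibf_s$ and $\Wbf_{21}=\zero$. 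Since $\Wbf$ is orthogonal and its first $s$ columns are already $\left[\begin{smallmatrix}\Ibf_s\\ \zero\end{smallmatrix}\right]$, orthonormality of the columns forces $\Wbf_{12}=\zero$ as well, so $\Wbf=\diag(\Ibf_s,\Wbf_{22})$ with $\Wbf_{22}$ an arbitrary $(N-s)\x(N-s)$ orthogonal matrix.

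It then remains to show that, conditioned on $G$, $\Wbf_{22}$ is Haar distributed on the $(N-s)$-dimensional orthogonal group. Here I would invoke stabilizer invariance: for any orthogonal $\mathbf{O}_1,\mathbf{O}_2\in\R^{(N-s)\x(N-s)}$, left/right multiplying $\Wbf$ by $\diag(\Ibf_s,\mathbf{O}_1)$ and $\diag(\Ibf_s,\mathbf{O}_2)$ preserves both Haar measure and the event $G$, while inducing $\Wbf_{22}\mapsto\mathbf{O}_1\Wbf_{22}\mathbf{O}_2$; the unique law invariant under all such maps is Haar, and it does not depend on $G$. Undoing the rotation gives $\Vbf=\Pbf\,\diag(\Ibf_s,\Wbf_{22})\,\Qbf\tran=\Ubf_\Abf\Ubf_\Bbf\tran+\Ubf_{\Abf^\perp}\Wbf_{22}\Ubf_{\Bbf^\perp}\tran$, and the consistency identity converts $\Ubf_\Abf\Ubf_\Bbf\tran$ into $\Abf(\Abf\tran\Abf)^{-1}\Bbf\tran$; setting $\tilde{\Vbf}:=\Wbf_{22}$ yields the claim. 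The main obstacle is precisely this last distributional step: $G$ has probability zero, so the conditional law of $\Vbf$ (equivalently of $\Wbf_{22}$) must first be given rigorous meaning---via a disintegration of Haar measure along the constraint map $\Vbf\mapsto\Vbf\Bbf$, or a limiting argument over shrinking neighborhoods of the constraint---before the invariance argument can legitimately be applied to conclude the conditional Haar property.
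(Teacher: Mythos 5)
Your proof is correct, and it is essentially the standard argument for this result: the paper itself gives no proof (the lemma is imported verbatim from \cite[Lemma 4]{rangan2016vamp}), and the proof there rests on the same ingredients you use—reducing $\Vbf$ to the block form $\diag(\Ibf_s,\Wbf_{22})$ in coordinates adapted to $\Abf$ and $\Bbf$ via the identity $\Abf\tran\Abf=\Bbf\tran\Bbf$, and identifying the conditional law of $\Wbf_{22}$ as Haar by invariance and uniqueness of the invariant measure on the orthogonal group. The caveat you raise about conditioning on the probability-zero event $G$ is a genuine technicality (in the way the lemma is actually applied, $G$ is the sigma-algebra generated by finitely many iterates, and the conditional law is a regular conditional distribution obtained by disintegration), but flagging it rather than resolving it matches the level of rigor of the cited source and does not undermine your argument.
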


Lemma~\ref{lem:orthogLin} is used in conjunction with the following
result.

\begin{lemma} \label{lem:orthogGaussLim}
Fix a dimension $s \geq 0$,
and suppose that we have sequences
$\xbf = \xbf(N)$ and $\Ubf=\Ubf(N)$ are sequences such that
for each $N$,
\begin{enumerate}[(i)]
\item $\Ubf=\Ubf(N) \in \R^{N\x (N-s)}$ is a random
matrix with $\Ubf\tran\Ubf = \Ibf$;
\item $\xbf=\xbf(N) \in \R^{N-s}$ a random vector whose mean squared magnitude
converges almost surely as
\[
    \lim_{N \arr \infty} \frac{1}{N} \|\xbf\|^2 = \tau,
\]
for some $\tau > 0$.

\item $\Vbf=\Vbf(N) \in \R^{(N-s) \x (N-s)}$ is a Haar distributed, independent
of $\Ubf$ and $\xbf$.
\end{enumerate}
Then, if we define $\ybf = \Ubf\Vbf\xbf$, we have that the components of $\ybf$
are approximately Gaussian in that,
\beq \label{eq:ygausslim}
    \ybf = \tilde{\ybf} + \etabf,
\eeq
where $\tilde{\ybf} \sim \Norm(0,\tau \Ibf)$ and
\[
    \lim_{N \arr \infty} \frac{1}{N} \|\etabf\|^2 = 0,
\]
almost surely.
\end{lemma}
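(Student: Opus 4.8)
The plan is to exploit the rotational invariance of $\Vbf$ to reduce $\Vbf\xbf$ to a uniform draw on a high-dimensional sphere, and then to use the classical fact that a uniform vector on the sphere in $\R^{N-s}$ is, after rescaling, asymptotically indistinguishable from a Gaussian. First I would condition on $(\Ubf,\xbf)$: since $\Vbf$ is Haar and independent of $(\Ubf,\xbf)$, the conditional law of $\Vbf\xbf$ is that of a uniform random vector on the sphere of radius $\|\xbf\|$ in $\R^{N-s}$. I would then couple this to a Gaussian by enlarging the probability space with an independent radial variable $\rho$ satisfying $\rho^2\sim\chi^2_{N-s}$ (realized from a single i.i.d.\ sequence so the strong law applies uniformly in $N$) and setting $\gbf := \rho\,\Vbf\xbf/\|\xbf\|$, which is well defined for large $N$ because $\tfrac1N\|\xbf\|^2\arr\tau>0$ forces $\xbf\neq\zero$ eventually. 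Since the direction $\Vbf\xbf/\|\xbf\|$ is uniform on the unit sphere and $\rho$ is an independent chi radius, $\gbf\sim\Norm(\zero,\Ibf_{N-s})$ exactly; moreover $\|\gbf\|=\rho$ and $\Vbf\xbf=\|\xbf\|\,\gbf/\|\gbf\|$, so that $\ybf=\Ubf\Vbf\xbf=c_N\,\Ubf\gbf$ with $c_N:=\|\xbf\|/\|\gbf\|$.

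Next I would construct the target Gaussian so as to fill in the $s$ dimensions lost by the rank deficiency of $\Ubf$. Completing $\Ubf$ to a full orthogonal matrix $[\,\Ubf\ \ \Ubf_\perp\,]$, where $\Ubf_\perp\in\R^{N\x s}$ has orthonormal columns spanning $\Range(\Ubf)^\perp$ (a measurable completion exists), I introduce a fixed $\gbf'\sim\Norm(\zero,\Ibf_s)$ independent of the rest and define
\[
    \tilde{\ybf} := \sqrt{\tau}\,\bigl(\Ubf\gbf+\Ubf_\perp\gbf'\bigr).
\]
Because $[\,\Ubf\ \ \Ubf_\perp\,]$ is orthogonal and $(\gbf,\gbf')$ are independent standard Gaussians, $\Ubf\gbf+\Ubf_\perp\gbf'\sim\Norm(\zero,\Ibf_N)$ conditionally on $\Ubf$, hence unconditionally, so $\tilde{\ybf}\sim\Norm(\zero,\tau\Ibf_N)$ as required.

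Finally I would estimate the residual $\etabf:=\ybf-\tilde{\ybf}=(c_N-\sqrt\tau)\Ubf\gbf-\sqrt\tau\,\Ubf_\perp\gbf'$. Using $\Ubf\tran\Ubf=\Ibf$, $\Ubf_\perp\tran\Ubf_\perp=\Ibf$, and $\Ubf\tran\Ubf_\perp=\zero$, the two contributions are orthogonal and
\[
    \tfrac1N\|\etabf\|^2 = (c_N-\sqrt\tau)^2\,\tfrac1N\|\gbf\|^2 + \tau\,\tfrac1N\|\gbf'\|^2 .
\]
The strong law gives $\tfrac1N\|\gbf\|^2=\tfrac1N\rho^2\arr1$ a.s.; combined with the hypothesis $\tfrac1N\|\xbf\|^2\arr\tau$ this yields $c_N^2=(\tfrac1N\|\xbf\|^2)/(\tfrac1N\|\gbf\|^2)\arr\tau$, so $(c_N-\sqrt\tau)^2\arr0$. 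Since $\gbf'\in\R^s$ has fixed dimension, $\tfrac1N\|\gbf'\|^2\arr0$, and therefore $\tfrac1N\|\etabf\|^2\arr0$ almost surely, which is the claim. I expect the only genuinely delicate points to be measure-theoretic rather than analytic: realizing the Gaussian coupling of $\Vbf\xbf$ and the basis completion $\Ubf_\perp$ jointly across all $N$ on a common probability space, so that the stated almost-sure limit is meaningful. Once this construction is fixed, the remaining estimates are routine concentration, and the essential probabilistic content is simply that the $s$ discarded dimensions contribute only an $O(s/N)$ correction in normalized $\ell_2$.
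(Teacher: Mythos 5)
Your proposal is correct and takes essentially the same route as the paper, whose ``proof'' is a one-line deferral to \cite[Lemma 5]{rangan2016vamp}: the substance there is exactly your coupling of the uniform-on-the-sphere vector $\Vbf\xbf$ with a standard Gaussian via an independent $\chi$ radius, so that the error reduces to $\|\xbf\|/\|\gbf\|\arr\sqrt{\tau}$ together with the $O(s/N)$ contribution of the $s$ completed dimensions. Your write-up additionally supplies the details the paper leaves implicit (the measurable basis completion $\Ubf_\perp$, the auxiliary $s$-dimensional Gaussian $\gbf'$, and the common-probability-space realization needed for the almost-sure statement), and is sound as written.
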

\begin{proof}  This can be proven similar to that of \cite[Lemma 5]{rangan2016vamp}.
\end{proof}

\section{A General Convergence Result}

Similar to the proof in \cite{rangan2017vamp},
we prove our main result, Theorem~\ref{thm:se}, by considering
the following more general recursion.  We are given a dimension $N$,
an orthogonal matrix $\Vbf \in \R^{N \x N}$,
an initial random vector $\ubf_0 \in \R^N$, along with random vectors
$\wbf^p, \wbf^q \in \R^N$.
Then, we generate a sequence of iterates by the following recursion:%
\begin{subequations}
\label{eq:algoGen}
\begin{align}
    \pbf_k &= \Vbf\ubf_k \label{eq:pupgen} \\
    \alpha_{1k} &= \bkt{ \nabla \fbf_p(\pbf_k,\wbf^p,\gamma_{1k})},
    \quad \gamma_{2k} = \Gamma_1(\gamma_{1k},\alpha_{1k})
        \label{eq:alpha1gen}  \\
    \vbf_k &= C_1(\alpha_{1k})\left[
        \fbf_p(\pbf_k,\wbf^p,\gamma_{1k})- \alpha_{1k} \pbf_{k} \right] \label{eq:vupgen} \\
    \qbf_k &= \Vbf\tran\vbf_k \label{eq:qupgen} \\
    \alpha_{2k} &= \bkt{ \nabla \fbf_q(\qbf_k,\wbf^q,\gamma_{2k})},
    \quad \gamma_{1,\kp1} =\Gamma_2(\gamma_{2k},\alpha_{2k})
        \label{eq:alpha2gen} \\
    \ubf_{\kp1} &= C_2(\alpha_{2k})\left[
        \fbf_q(\qbf_k,\wbf^q,\gamma_{2k}) - \alpha_{2k}\qbf_{k} \right], \label{eq:uupgen}
\end{align}
\end{subequations}
which is initialized with $\ubf_0$ and a scalar $\gamma_{10}$.
We index the recursions  by $N$.
We assume that the initial constant and norm of the initial vector converges as
\beq \label{eq:gam10limgen}
    \lim_{N \arr \infty} \gamma_{10} = \gammabar_{10}, \quad \lim_{N \arr \infty} \frac{1}{N} \|\ubf_0\|^2 = \tau_{10},
\eeq
for some constants $\gammabar_{10}$ and $\tau_{10}$.
The matrix $\Vbf \in \R^{N \x N}$
is assumed to be uniformly distributed on the set of orthogonal matrices
independent of $\ubf_0$, $\wbf^p$ and $\wbf^q$.
For the functions $\fbf_p(\cdot)$ and $\fbf_q(\cdot)$ we need a slight generalization
of Definitions~\ref{def:uniflip} and \ref{def:conv}.

\begin{definition}  \label{def:uniflipgen}
For each $N$, suppose that
 $\ubf \in \R^N$ is a random vector
and $\fbf(\zbf,\ubf,\gamma) \in \R^N$ is a function on $\zbf\in \R^N$, $\ubf \in \R^N$
and $\gamma \in \R$.  Let $G$ be some closed, convex set of values $\gamma$.
We say the sequence is
\emph{uniformly Lipschitz continuous} if
there exists constants $A$, $B$ and $C > 0$, such that
\begin{align}
    \MoveEqLeft
    \limsup_{N \arr \infty} \lim_{N \arr \infty} \frac{1}{\sqrt{N}}\|\fbf(\zbf_2,\ubf,\gamma_2)-
        \fbf(\zbf_1,\ubf,\gamma_1)\| \\
    &\leq
    \limsup_{N \arr \infty} \frac{A + B|\gamma_2-\gamma_1|}{\sqrt{N}}\|\zbf_2-\zbf_1\| + C|\gamma_2-\gamma_1|,
    \label{eq:lipcondgen}
\end{align}
almost surely  for any $\zbf_1,\zbf_2$ and $\gamma_1,\gamma_2 \in G$.
\end{definition}

\begin{definition}  \label{def:convgen}
Let $\ubf$, $\fbf(\cdot)$ and $G$ be as in Definition~\ref{def:uniflipgen}.
The sequence $\ubf$ and $\fbf(\cdot)$ are said to
be \emph{convergent under Gaussian noise} if the following condition holds:
Let $\zbf_1,\zbf_2 \in \R^N$ be two sequences
where $(z_{n1},z_{n2})$ are i.i.d.\ with  $(z_{n1},z_{n2})=\Norm(0,\Sbf)$
for some positive definite covariance $\Sbf \in \R^{2 \x 2}$.
Then, the following limits exists almost surely,
\begin{align}
    {\mathcal M}(\Sbf,\gamma_1,\gamma_2) &:=
    \lim_{N \arr \infty} \fbf(\ubf,\zbf_1,\gamma_1)\tran \fbf(\ubf,\zbf_2,\gamma_2)
    \label{eq:momlimgen} \\
    {\mathcal A}(S_{11},\gamma_1) &:= \lim_{N \arr \infty}
        \bkt{\nabla \fbf(\ubf,\zbf_1,\gamma_1)}
        = \frac{1}{NS_{12}} \fbf(\ubf,\zbf_1,\gamma_1)\tran \zbf_2,
    \label{eq:divlimgen}
\end{align}
for all $\gamma_1,\gamma_2 \in G$ and covariance matrices $\Sbf$.
Moreover, the functions ${\mathcal M}(\cdot)$ and ${\mathcal A}(\cdot)$ are
continuous in $\Sbf$, $\gamma_1$ and $\gamma_2$.
\end{definition}

\medskip
Our critical assumption is that, following Definitions~\ref{def:uniflipgen} and
\ref{def:convgen},
the sequence of random vectors $\wbf^p$ and functions $\fbf_p(\pbf,\wbf^p,\gamma_1)$ (as indexed by $N$)
are uniformly Lipschitz continuous and convergent under Gaussian noise for $\gamma_1 \in G_1$
for some closed, convex set $G_1$.
Similarly, the sequence $\wbf^q$ and function $\fbf_q(\pbf,\wbf^q,\gamma_2)$ is also
uniformly Lipschitz continuous and convergent under Gaussian noise
for $\gamma_2 \in G_2$ for some closed, convex set $G_2$.
In this case, we can define the second moments,
\begin{subequations} \label{eq:momentpq}
\begin{align}
  {\mathcal M}_p(\tau_1,\gamma_1) &:= \lim_{N \arr \infty} \frac{1}{N} \|\fbf_p(\pbf,\wbf^p,\gamma_1)\|^2, \quad \pbf \sim \Norm(0,\tau_1\Ibf), \\
  {\mathcal M}_q(\tau_2,\gamma_2) &:= \lim_{N \arr \infty} \frac{1}{N} \|\fbf_q(\qbf,\wbf^q,\gamma_2)\|^2, \quad \qbf \sim \Norm(0,\tau_2\Ibf),
\end{align}
\end{subequations}
as well as the sensitivity functions,
\begin{subequations} \label{eq:divpq}
\begin{align}
  {\mathcal A}_p(\tau_1,\gamma_1) &:= \lim_{N \arr \infty} \bkt{ \nabla \fbf_p(\pbf,\wbf^p,\gamma_1)}), \quad \pbf \sim \Norm(0,\tau_1\Ibf), \\
  {\mathcal A}_q(\tau_2,\gamma_2) &:= \lim_{N \arr \infty} \bkt{ \nabla \fbf_q(\qbf,\wbf^q,\gamma_2)},
  \quad \qbf \sim \Norm(0,\tau_2\Ibf).
\end{align}
\end{subequations}
The limits exist due to the assumption of $\fbf_p$ and $\fbf_q$ being convergent under Gaussian noise.
In addition, Definition~\ref{def:convgen} shows that the sensitivity functions are also given by,
\begin{subequations} \label{eq:divpqstein}
\begin{align}
  {\mathcal A}_p(\tau_1,\gamma_1) &=  = \lim_{N \arr \infty} \frac{1}{N\tau_1}
  \pbf\tran\fbf_p(\pbf,\wbf^p,\gamma_1), \quad \pbf \sim \Norm(0,\tau_1\Ibf), \\
  {\mathcal A}_q(\tau_2,\gamma_2) &=   \lim_{N \arr \infty} \frac{1}{N\tau_2}
  \qbf\tran\fbf_q(\qbf,\wbf^q,\gamma_2)
  \quad \qbf \sim \Norm(0,\tau_2\Ibf).
\end{align}
\end{subequations}

Under the above assumptions, define the SE equations,
\begin{subequations} \label{eq:segen}
\begin{align}
    \alphabar_{1k} &= {\mathcal A}_p(\tau_{1k}, \gammabar_{1k})  \label{eq:a1segen} \\
    \tau_{2k} &= C_1^2(\alphabar_{1k}) \left\{  {\mathcal M}_p(\tau_{1k}, \gammabar_{1k})  - \alphabar_{1k}^2\tau_{1k} \right\}
        \label{eq:tau2segen} \\
    \gammabar_{2k} &= \Gamma_1(\gammabar_{1k},\alphabar_{1k}) \label{eq:gam2segen} \\
    \alphabar_{2k} &= {\mathcal A}_q(\tau_{2k}, \gammabar_{2k}) \label{eq:a2segen} \\
    \tau_{1,\kp1} &= C_2^2(\alphabar_{2k})\left\{
        {\mathcal M}_p(\tau_{2k}, \gammabar_{2k})  - \alphabar_{2k}^2\tau_{2k}\right\}
        \label{eq:tau1segen} \\
    \gamma_{1,\kp1} &= \Gamma_2(\gammabar_{2k},\alphabar_{2k}), \label{eq:gam1segen}
\end{align}
\end{subequations}
which are initialized with $\gammabar_{10}$ and $\tau_{10}$ in \eqref{eq:gam10limgen}.

For the sequel, we will use the notation that, if $\xbf=\xbf(N)$ and $\ybf=\ybf(N) \in \R^N$
are two sequences of random vectors that scale with $N$,
\beq \label{eq:orderN}
    \xbf = \ybf + \ONinv \Longleftrightarrow
    \lim_{N \arr \infty} \frac{1}{N}\|\xbf-\ybf\|^2 \mbox{ almost surely.}
\eeq
With this definition, we have the following result.

\begin{theorem} \label{thm:genConv}  Consider the recursions \eqref{eq:algoGen}
and SE equations \eqref{eq:segen} under the above assumptions.  Assume additionally that,
for all $k$ and $i=1,2$,
the functions $C_i(\alpha_i)$ and $\Gamma_i(\gamma_i,\alpha_i)$
are continuous at the points $(\gamma_i,\alpha_i)=(\gammabar_{ik},\alphabar_{ik})$
from the SE equations. Also, assume that $\gammabar_{ik} \in G_i$ for all $i$. Then,
\begin{enumerate}[(a)]
\item For each $k$, we can write $\pbf_k = \tilde{\pbf}_k + \ONinv$ such that the
matrix,
\beq \label{eq:Pkgen}
    \tilde{\Pbf}_k = [\tilde{\pbf}_0, \cdots, \tilde{\pbf}_k] \in \R^{N \x {\kp1}},
\eeq
is independent of $\wbf^p$ and has i.i.d.\ rows, $(\tilde{p}_{n0},\cdots,\tilde{p}_{nk})$,
that are zero mean, $\kp1$-dimensional Gaussian random vectors.
In addition, we have that
\beq \label{eq:ag1limgen}
    E\tilde{p}^2_{nk} = \tau_{1k}, \quad
    \lim_{N \arr \infty} (\alpha_{1k},\gamma_{2k}) = (\alphabar_{1k},\gammabar_{2k}),
\eeq
where the limit holds almost surely.

\item For each $k$, we can write $\qbf_k = \tilde{\qbf}_k + \ONinv$ such that the
matrix,
\beq \label{eq:Qkgen}
    \tilde{\Qbf}_k = [\tilde{\qbf}_0, \cdots, \tilde{\qbf}_k] \in \R^{N \x {\kp1}},
\eeq
is independent of $\wbf^q$ and has i.i.d.\ rows, $(\tilde{q}_{n0},\cdots,\tilde{q}_{nk})$,
that are zero mean, $\kp1$-dimensional Gaussian random vectors.
In addition, we have that
\beq \label{eq:ag2limgen}
    E\tilde{q}^2_{nk} = \tau_{2k}, \quad
    \lim_{N \arr \infty} (\alpha_{2k},\gamma_{1,\kp1}) = (\alphabar_{2k},\gammabar_{1,\kp1}),
\eeq
where the limit holds almost surely.
\end{enumerate}
\end{theorem}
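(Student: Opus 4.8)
The plan is to prove Theorem~\ref{thm:genConv} by induction on $k$, establishing parts (a) and (b) in the interleaved order $\pbf_0,\qbf_0,\pbf_1,\qbf_1,\dots$, and to follow the conditioning scaffolding of \cite{rangan2016vamp} essentially verbatim; the one new ingredient is that every appeal to the strong law of large numbers and to the scalar Stein lemma is replaced by the abstract limits supplied by Definition~\ref{def:convgen}. The induction hypothesis carried at each half-step is threefold: the iterates produced so far admit the jointly-Gaussian representations \eqref{eq:Pkgen}--\eqref{eq:Qkgen}; the scalars $(\alpha_{ik},\gamma_{ik})$ converge almost surely to the SE values $(\alphabar_{ik},\gammabar_{ik})$; and the normalized inner products among the $\pbf_j$, $\vbf_j$, $\qbf_j$ and the side-information vectors converge to the deterministic values predicted by \eqref{eq:momlimgen}, \eqref{eq:divlimgen} and the SE recursion \eqref{eq:segen}. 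The base case is immediate: $\pbf_0=\Vbf\ubf_0$ with $\Vbf$ Haar and independent of $\ubf_0$ and $\tfrac1N\|\ubf_0\|^2\to\tau_{10}$ by \eqref{eq:gam10limgen}, so Lemma~\ref{lem:orthogGaussLim} with $s=0$ and $\Ubf=\Ibf$ gives $\pbf_0=\tilde{\pbf}_0+\ONinv$ with $\tilde{\pbf}_0\sim\Norm(0,\tau_{10}\Ibf)$.

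Given the representation $\pbf_k=\tilde{\pbf}_k+\ONinv$, the first move is to replace the asymptotically-Gaussian argument $\pbf_k$ by the exactly-Gaussian $\tilde{\pbf}_k$ inside $\fbf_p$. The uniform Lipschitz bound \eqref{eq:lipcondgen} controls the error, since $\tfrac{1}{\sqrt N}\|\fbf_p(\pbf_k,\wbf^p,\gamma_{1k})-\fbf_p(\tilde{\pbf}_k,\wbf^p,\gamma_{1k})\|\to 0$, so neither the normalized norm nor the divergence of $\fbf_p$ is changed in the limit. With $\tilde{\pbf}_k$ now playing the role of the Gaussian input in Definition~\ref{def:convgen}, the form \eqref{eq:divpqstein} of the sensitivity function yields $\alpha_{1k}\to\alphabar_{1k}=\mathcal{A}_p(\tau_{1k},\gammabar_{1k})$, continuity of $\Gamma_1$ gives $\gamma_{2k}\to\gammabar_{2k}$, and \eqref{eq:momlimgen} together with the definition \eqref{eq:vupgen} of $\vbf_k$ gives $\tfrac1N\|\vbf_k\|^2\to\tau_{2k}$ as in \eqref{eq:tau2segen}.

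The heart of the induction --- and the step I expect to be the main obstacle --- is the conditioning argument that produces the Gaussianity of $\qbf_k=\Vbf\tran\vbf_k$. The idea is to gather the entire past of the recursion into the single linear constraint $\Vbf[\ubf_0,\dots,\ubf_k,\qbf_0,\dots,\qbf_{k-1}]=[\pbf_0,\dots,\pbf_k,\vbf_0,\dots,\vbf_{k-1}]$ and to invoke Lemma~\ref{lem:orthogLin}, which writes $\Vbf$, conditioned on this event, as a deterministic matrix determined by the history plus an independent fresh term $\Ubf_{A^\perp}\tilde{\Vbf}\Ubf_{B^\perp}\tran$. Decomposing $\vbf_k$ into its projection onto $\Range([\vbf_0,\dots,\vbf_{k-1}])$ and the orthogonal remainder, the projection maps under $\Vbf\tran$ to a deterministic linear combination of the earlier $\qbf_j$, while the orthogonal remainder, acted on by the fresh Haar factor $\tilde{\Vbf}$, becomes --- through Lemma~\ref{lem:orthogGaussLim} --- a Gaussian vector asymptotically independent of the history. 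Adding the two pieces yields $\qbf_k=\tilde{\qbf}_k+\ONinv$ with $\tilde{\qbf}_k$ jointly Gaussian with $\tilde{\qbf}_0,\dots,\tilde{\qbf}_{k-1}$; the mirror-image argument applied to $\pbf_{k+1}=\Vbf\ubf_{k+1}$ closes the induction for part (a) at index $k+1$. The delicate part is bookkeeping: one must verify that the finitely many almost-sure limits of all the cross inner products can be taken simultaneously and that the $\ONinv$ errors propagate harmlessly through the Lipschitz maps at each stage.

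It remains to check that the covariance of the emergent Gaussian is exactly the SE-predicted one, namely $E\tilde{q}_{nk}^2=\tau_{2k}$ with the prescribed cross-covariances. This is where the subtracted \emph{Onsager} terms $-\alpha_{1k}\pbf_k$ and $-\alpha_{2k}\qbf_k$ in \eqref{eq:vupgen} and \eqref{eq:uupgen} are indispensable: the generalized Stein identity \eqref{eq:divlimgen} identifies $\alpha_{1k}$ with the normalized correlation between $\fbf_p$ and its Gaussian input, so subtracting $\alpha_{1k}\pbf_k$ removes precisely the component of $\vbf_k$ lying along $\pbf_k$. This is what forces the orthogonal remainder --- and hence the fresh Gaussian produced by $\tilde{\Vbf}$ --- to carry the clean variance $\tau_{2k}$ and to be asymptotically decorrelated from the preceding iterates, so that the recursion \eqref{eq:segen} is reproduced exactly. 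Finally, continuity of $\mathcal{M}$, $\mathcal{A}$, $\Gamma_i$ and $C_i$ in their arguments lets the convergence of the scalars pass through the SE maps, completing the inductive step.
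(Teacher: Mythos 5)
Your proposal is correct and follows essentially the same route as the paper's proof: the same interleaved induction ($H_{0,-1}$, then $H_{k,k-1}\Rightarrow H_{k,k}\Rightarrow H_{k+1,k}$), the same base case via Lemma~\ref{lem:orthogGaussLim} with $s=0$, the same conditioning of the Haar matrix $\Vbf$ on the linear constraints $\Abf_k=\Vbf\Bbf_k$ via Lemma~\ref{lem:orthogLin}, the same split of $\qbf_k$ into a deterministic part (a combination of earlier $\tilde{\qbf}_j$, after the $\pbf$-components are killed by the Onsager-induced orthogonality $\tfrac{1}{N}\vbf_i\tran\pbf_j\to 0$) plus a fresh Gaussian part from $\tilde{\Vbf}$, and the same substitution of Definition~\ref{def:convgen} for the strong law and Stein's lemma when verifying $E\tilde{q}_{nk}^2=\tau_{2k}$ and the scalar limits.
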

\begin{proof}  We will prove this in the next Appendix,
Appendix~\ref{sec:genConvPf}.
\end{proof}

\section{Proof of Theorem~\ref{thm:genConv}} \label{sec:genConvPf}

\subsection{Induction Argument}

The proof has a similar structure to the proof of the general convergence
result in \cite{rangan2016vamp}.  So, we will highlight only the key differences.
Similar to \cite{rangan2016vamp}, we use
an induction argument.   Given iterations $k, \ell \geq 0$,
define the hypothesis, $H_{k,\ell}$ as the statement:
\begin{itemize}
\item Part (a) of Theorem~\ref{thm:genConv} is true up to $k$; and
\item Part (b) of Theorem~\ref{thm:genConv} is true up to $\ell$.
\end{itemize}
The induction argument will then follow by showing the following three facts:
\begin{itemize}
\item $H_{0,-1}$ is true;
\item If $H_{k,\km1}$ is true, then so is $H_{k,k}$;
\item If $H_{k,k}$ is true, then so is $H_{\kp1,k}$.
\end{itemize}

\subsection{Induction Initialization}
We first show that the hypothesis $H_{0,-1}$ is true.  That is,
we must show that the rows of \eqref{eq:Pkgen} are i.i.d.\ Gaussians
and the limits in \eqref{eq:ag1limgen} hold for $k=0$.
This is a special case of Lemma~\ref{lem:orthogGaussLim}.
Specifically, for each $N$, let $\Ubf=\Ibf_N$, the $N \x N$ identity
matrix, which trivially satisfies property (i) of Lemma~~\ref{lem:orthogGaussLim}
with $s=0$.  Also, $\xbf = \ubf_0$ satisfies property (ii) due to the assumption
\eqref{eq:gam10limgen}.  Then, since
$\pbf_0 = \Vbf\ubf_0=\Ubf\Vbf\xbf$ and $\Vbf$ is
Haar distributed independent of $\ubf_0$, we have that
\beq \label{eq:ptilde0}
    \pbf_0 = \tilde{\pbf}_0 + O(\tfrac{1}{N}), \quad \tilde{\pbf}_0 \sim \Norm(0,\tau_{10}\Ibf).
\eeq
This  proves the Gaussianity of the rows of \eqref{eq:Pkgen} for $k=0$.
Also,
\begin{align}
    \lim_{N \arr \infty} \alpha_{10} &\stackrel{(a)}{=}
    \lim_{N \arr \infty} \bkt{ \nabla \fbf_p(\pbf_0,\wbf^p,\gamma_{10}) }  \nonumber \\
    & \stackrel{(b)}{=}
    \lim_{N \arr \infty} \bkt{ \nabla \fbf_p(\tilde{\pbf}_0,\wbf^p,\gammabar_{10}) }
    \stackrel{(c)}{=} {\mathcal A}_p(\tau_{10},\gammabar_{10})
    \stackrel{(d)}{=} \alphabar_{10},     \label{eq:alpha0convgen}
\end{align}
where (a) follows from  \eqref{eq:alpha1gen};
(b) follows from \eqref{eq:gam10limgen}, \eqref{eq:ptilde0} along with the Lipschitz
continuity assumption of $\fbf_p(\cdot)$;
(c) follows from the definition \eqref{eq:divpq}; and
(d) follows from \eqref{eq:a1segen}.  In addition,
\beq \label{eq:gam0convgen}
    \lim_{N \arr \infty} \gamma_{10} \stackrel{(a)}{=} \lim_{N \arr \infty}
    \Gamma_1(\gamma_{10},\alpha_{10})
    \stackrel{(b)}{=} \Gamma_1(\gammabar_{10},\alphabar_{10})
    \stackrel{(c)}{=} \gammabar_{20}
\eeq
where (a) follows from \eqref{eq:alpha1gen};
(b) follows from \eqref{eq:gam10limgen}, \eqref{eq:alpha0convgen} and the continuity
of $\Gamma_1(\cdot)$ and (c) follows from \eqref{eq:gam2segen}.
This proves \eqref{eq:ag1limgen}.

\subsection{The Induction Recursion}
We next show the implication $H_{k,\km1} \Rightarrow H_{k,k}$. The implication
$H_{k,k} \Rightarrow H_{\kp1,k}$ is proven similarly.  Hence, fix $k$ and assume
that $H_{k,\km1}$ holds.  To show $H_{k,k}$, we need to show the Gaussianity of the rows of
\eqref{eq:Qkgen} and that the limits in \eqref{eq:ag2limgen} hold.

First, similar to the proof of \eqref{eq:alpha0convgen}, we have that
\beq
    \lim_{N \arr \infty} \alpha_{2k} = \alphabar_{2k}.     \label{eq:alphakconvgen}
\eeq
Also, by the induction hypothesis, $\gamma_{2k} \arr \gammabar_{2k}$, and similar
to the proof of \eqref{eq:gam0convgen},
\beq
    \lim_{N \arr \infty} \gamma_{1,\kp1} = \gammabar_{1,\kp1}.     \label{eq:gam2kconvgen}
\eeq
This proves \eqref{eq:ag2limgen}.  We next need to compute various correlations.

\begin{lemma} \label{lem:vpcorr}
Under the hypothesis $H_{k,\km1}$, then
for any $i,j=0,\ldots,k$ the following limits exist almost surely,
\beq \label{eq:vpcorr1}
    \lim_{N \arr \infty} \frac{1}{N} \pbf_i\tran\pbf_j, \quad
    \lim_{N \arr \infty} \frac{1}{N} \vbf_i\tran\vbf_j.
\eeq
Also,
\beq \label{eq:vpcorr2}
    \lim_{N \arr \infty} \frac{1}{N} \|\vbf_k\|^2 = \tau_{2k}, \quad
    \lim_{N \arr \infty} \frac{1}{N} \vbf_i\tran\pbf_j = 0.
\eeq
\end{lemma}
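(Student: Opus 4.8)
The plan is to exploit the fact that $\vbf_i$, defined in \eqref{eq:vupgen}, is a function only of $\pbf_i$, $\wbf^p$, $\gamma_{1i}$, and the scalar $\alpha_{1i}$; the vectors $\qbf_\ell$ and $\wbf^q$ never enter. Hence every quantity in the lemma is governed by Part~(a) of the induction hypothesis $H_{k,\km1}$, which provides that $\pbf_i = \tilde\pbf_i + \ONinv$ for $i=0,\ldots,k$, where the rows of $[\tilde\pbf_0,\ldots,\tilde\pbf_k]$ are i.i.d.\ zero-mean jointly Gaussian vectors independent of $\wbf^p$, together with $\alpha_{1i}\arr\alphabar_{1i}$ and $\gamma_{1i}\arr\gammabar_{1i}$ almost surely. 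The first limit in \eqref{eq:vpcorr1} is then immediate: writing $P_{ij}$ for the $(i,j)$ entry of the common row-covariance of $[\tilde\pbf_0,\ldots,\tilde\pbf_k]$, the strong law applied to the i.i.d.\ rows gives $\frac1N\tilde\pbf_i\tran\tilde\pbf_j\arr P_{ij}$, and a Cauchy--Schwarz bound on the $\ONinv$ cross terms shows $\frac1N\pbf_i\tran\pbf_j\arr P_{ij}$, with in particular $P_{kk}=\tau_{1k}$.

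Next I would reduce all inner products involving $\fbf_p$ to the convergent-under-Gaussian-noise limits of Definition~\ref{def:convgen}. Because $\fbf_p$ is uniformly Lipschitz (Definition~\ref{def:uniflipgen}), replacing the argument $\pbf_i$ by its Gaussian surrogate $\tilde\pbf_i$ and the random precision $\gamma_{1i}$ by its limit $\gammabar_{1i}$ perturbs $\frac1N\fbf_p(\pbf_i,\wbf^p,\gamma_{1i})\tran(\cdot)$ by a vanishing amount; I would record this perturbation step once and use it throughout. Since $(\tilde\pbf_i,\tilde\pbf_j)$ is jointly Gaussian and independent of $\wbf^p$, Definition~\ref{def:convgen} applies with $\ubf=\wbf^p$, $\zbf_1=\tilde\pbf_i$, $\zbf_2=\tilde\pbf_j$, and covariance $\Sbf$ having entries $P_{ii},P_{ij},P_{jj}$. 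This yields convergence of $\frac1N\fbf_p(\pbf_i,\ldots)\tran\fbf_p(\pbf_j,\ldots)$ via ${\mathcal M}$, and of the cross term $\frac1N\fbf_p(\pbf_i,\ldots)\tran\pbf_j$ to $P_{ij}\,{\mathcal A}_p(\tau_{1i},\gammabar_{1i})=P_{ij}\alphabar_{1i}$ via the Stein-type identity \eqref{eq:divlimgen} and \eqref{eq:a1segen}. Expanding $\frac1N\vbf_i\tran\vbf_j$ into its four inner products, each piece converges while the continuous prefactors $C_1(\alpha_{1i})$ and $\alpha_{1i}$ converge, so the second limit in \eqref{eq:vpcorr1} exists. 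Specializing to $i=j=k$ and using $P_{kk}=\tau_{1k}$, $\frac1N\fbf_p(\pbf_k,\ldots)\tran\pbf_k\arr\tau_{1k}\alphabar_{1k}$, and $\frac1N\|\fbf_p(\pbf_k,\ldots)\|^2\arr{\mathcal M}_p(\tau_{1k},\gammabar_{1k})$ from \eqref{eq:momentpq}, the algebra collapses to $C_1^2(\alphabar_{1k})[{\mathcal M}_p(\tau_{1k},\gammabar_{1k})-\alphabar_{1k}^2\tau_{1k}]$, which is exactly $\tau_{2k}$ by the SE equation \eqref{eq:tau2segen}.

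Finally, for the orthogonality claim $\frac1N\vbf_i\tran\pbf_j\arr0$ I would expand $\frac1N\vbf_i\tran\pbf_j = C_1(\alpha_{1i})\big[\frac1N\fbf_p(\pbf_i,\ldots)\tran\pbf_j - \alpha_{1i}\frac1N\pbf_i\tran\pbf_j\big]$. The first bracketed term tends to $P_{ij}\alphabar_{1i}$ by the Stein step above, while the second tends to $\alphabar_{1i}P_{ij}$; the two cancel, and the limit is $0$ for every $i,j$. This cancellation is the heart of the lemma: it is precisely the Onsager correction $-\alpha_{1i}\pbf_i$ built into the definition \eqref{eq:vupgen} of $\vbf_i$ that enforces asymptotic orthogonality between the $\vbf$- and $\pbf$-iterates. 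I expect the main obstacle to be the rigorous justification of the Stein-type cross-correlation limit for distinct, correlated indices $i\ne j$: the identity \eqref{eq:divlimgen} is stated for a generic Gaussian pair, so care is needed to verify that $(\tilde\pbf_i,\tilde\pbf_j)$ has a well-defined limiting covariance (supplied by $P_{ij}$), that the independence of $[\tilde\pbf_0,\ldots,\tilde\pbf_k]$ from $\wbf^p$ is what legitimizes treating $\wbf^p$ as the fixed argument $\ubf$, and that the uniform Lipschitz perturbation from $(\pbf_i,\gamma_{1i})$ to $(\tilde\pbf_i,\gammabar_{1i})$ is genuinely $o(1)$ after the $\frac1N$ normalization. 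Everything else is routine expansion together with the continuity of $C_1$, ${\mathcal M}_p$, and ${\mathcal A}_p$.
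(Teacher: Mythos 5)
Your proposal is correct and follows essentially the same route as the paper's proof: both reduce everything to the Gaussian surrogates $\tilde\pbf_i$ via the induction hypothesis and uniform Lipschitz continuity, invoke convergence under Gaussian noise together with the Stein-type identity to evaluate the cross terms $\frac{1}{N}\fbf_p(\pbf_i,\wbf^p,\gamma_{1i})\tran\pbf_j \to \alphabar_{1i}P_{ij}$, and obtain the orthogonality in \eqref{eq:vpcorr2} from exactly the cancellation $({\mathcal A}_p(\tau_{1i},\gammabar_{1i})-\alphabar_{1i})P_{ij}=0$ forced by the SE equation \eqref{eq:a1segen}, with the $i=j=k$ case collapsing to $\tau_{2k}$ via \eqref{eq:tau2segen}. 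The extra care you flag about justifying the Stein step for correlated pairs $i\neq j$ is a point the paper glosses over, but it is handled by the same mechanism you describe, so there is no substantive difference.
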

\begin{proof}
For the first part of \eqref{eq:vpcorr1},
\[
    \lim_{N \arr \infty} \frac{1}{N} \pbf_i\tran\pbf_j
        \stackrel{(a)}{=}\lim_{N \arr \infty} \frac{1}{N} \tilde{\pbf}_i\tran\tilde{\pbf}_j
        \stackrel{(b)}{=} \Exp(\tilde{p}_{in}\tilde{p}_{jn}),
\]
where (a) follows due to induction hypothesis that
$\pbf_\ell = \tilde{\pbf}_\ell + O(\tfrac{1}{N})$ for $\ell \leq k$
and (b) follows from the fact that $(\tilde{p}_{in},\tilde{p}_{jn})$
are i.i.d.,
so the limit occurs almost surely by the Strong Law of Large Numbers.
For the second part of \eqref{eq:vpcorr1},
\begin{align}
  \MoveEqLeft \lim_{N \arr \infty} \frac{1}{N} \vbf_i\tran\vbf_j \nonumber \\
    &\stackrel{(a)}{=} \lim_{N \arr \infty} \frac{C_1(\alpha_{1i})C_1(\alpha_{1j})}{N}
    \left[ \fbf_p(\pbf_i,\wbf^p,\gamma_{1i})-\alpha_{1i}\pbf_i \right]\tran
    \left[ \fbf_p(\pbf_j,\wbf^p,\gamma_{1j})-\alpha_{1j}\pbf_j \right] \nonumber \\
    &\stackrel{(b)}{=} \lim_{N \arr \infty} \frac{C_1(\alphabar_{1i})C_1(\alphabar_{1j})}{N}
    \left[ \fbf_p(\tilde{\pbf}_i,\wbf^p,\gammabar_{1i})-\alphabar_{1i}\tilde{\pbf}_i \right]\tran
    \left[ \fbf_p(\tilde{\pbf}_j,\wbf^p,\gammabar_{1j})-\alphabar_{1j}\tilde{\pbf}_j \right],
\end{align}
where (a) follows from \eqref{eq:vupgen};
(b) follows from the fact that $\pbf_k = \tilde{\pbf}_k + \ONinv$,
\eqref{eq:ag2limgen} and the continuity assumptions of $\fbf_p(\cdot)$ and $C_1(\cdot)$.
We can expand this sum into four terms and use the fact that $\fbf_p(\cdot)$ is convergent under Gaussian
noise to show that all the terms are converge almost surely.
Hence, both the limits in \eqref{eq:vpcorr1} exist almost surely.

In the special case when $i=j=k$, we have that,
\begin{align}
  \MoveEqLeft \lim_{N \arr \infty} \frac{1}{N} \|\vbf_k\|^2
    = \lim_{N \arr \infty} \frac{C_1^2(\alphabar_{1i})}{N}
    \| \fbf_p(\tilde{\pbf}_k,\wbf^p,\gammabar_{ki})-\alphabar_{1i}\tilde{\pbf}_k\|^2
    \nonumber \\
    & = \lim_{N \arr \infty} \frac{C_1^2(\alphabar_{1k})}{N}
    \left[ \|\fbf_p(\tilde{\pbf}_k,\wbf^p,\gammabar_{1k})\|^2
    - 2 \alphabar_{1k}\tilde{\pbf}_k\tran \fbf_p(\tilde{\pbf}_k,\wbf^p,\gammabar_{1k})
    + \alphabar_{1k}^2 \|\tilde{\pbf}_k \|^2 \right] \nonumber \\
    & \stackrel{(a)}{=} C_1^2(\alphabar_{1k}) \left(
        {\mathcal M}_p(\tau_{1k},\gammabar_{1k})
    - 2 \alphabar_{1k}^2 \tau_{1k}
    + \alphabar_{1k}^2 \tau_{1k} \right) \nonumber \\
    & = C_1^2(\alphabar_{1k}) \left(
        {\mathcal M}_p(\tau_{1k},\gammabar_{1k})
    - \alphabar_{1k}^2 \tau_{1k}
     \right) \stackrel{(b)}{=} \tau_{2k},
\end{align}
where (a) follows from the limits in \eqref{eq:momentpq} and
\eqref{eq:divpqstein}; and
(b) follows from \eqref{eq:tau2segen}.  This proves the first relation in
\eqref{eq:vpcorr2}.  For the second relation,
\begin{align}
  \MoveEqLeft \lim_{N \arr \infty} \frac{1}{N} \vbf_i\tran\pbf_j
  \stackrel{(a)}{=} \lim_{N \arr \infty} \frac{C_1(\alpha_{1i})}{N}
    \left( \fbf_p(\pbf_k,\wbf^p,\gamma_{1k})-\alpha_{1i}\pbf_i
    \right)\tran\pbf_j  \nonumber \\
    & \stackrel{(b)}{=} \lim_{N \arr \infty} \frac{C_1(\alphabar_{1i})}{N}
    \left( \fbf_p(\tilde{\pbf}_k,\wbf^p,\gammabar_{1k})-\alphabar_{1i}\tilde{\pbf}_i
    \right)\tran\tilde{\pbf}_j  \nonumber \\
   &\stackrel{(c)}{=} ({\mathcal A}_p(\tau_{1i},\gammabar_{1i}) - \alphabar_{1i})
    \mathrm{cov}(\tilde{p}_{ni}, \tilde{p}_{jn})
    \stackrel{(d)}{=} 0,
\end{align}
where (a) follows from \eqref{eq:vupgen};
(b) follows from the fact that $\pbf_k = \tilde{\pbf}_k + \ONinv$;
(c) follows from the assumption that $\fbf_p(\cdot)$ is
convergent under Gaussian noise as given in Definition~\ref{def:conv};
and (d) follows from \eqref{eq:a1segen}.
\end{proof}

The remainder of the proof now follows a very similar structure to that
in \cite{rangan2016vamp}.
First, let
\[
    \Ubf_k := \left[ \ubf_0 \cdots \ubf_k \right] \in \R^{N \x (\kp1)},
\]
represent the first $\kp1$ values of the vector $\ubf_\ell$.
Define the matrices $\Vbf_k$, $\Qbf_k$ and $\Pbf_k$ similarly.
Let $G_k$ be the set of random vectors,
\beq \label{eq:Gdef}
    G_k := \left\{ \Ubf_k, \Pbf_k, \Vbf_k, \Qbf_{\km1} \right\}.
\eeq
With some abuse of notation, we will also use $G_k$
to denote the sigma-algebra generated by these variables.
The set \eqref{eq:Gdef} contains all the outputs of the algorithm
\eqref{eq:algoGen} immediately \emph{before} \eqref{eq:qupgen} in iteration $k$.

Now, the actions of the matrix $\Vbf$ in the recursions \eqref{eq:algoGen}
are through the matrix-vector multiplications \eqref{eq:pupgen} and \eqref{eq:qupgen}.
Hence, if we define the matrices,
\beq \label{eq:ABdef}
    \Abf_k := \left[ \Pbf_k ~ \Vbf_{\km1} \right], \quad
    \Bbf_k := \left[ \Ubf_k ~ \Qbf_{\km1} \right],
\eeq
the output of the recursions in the set $G_k$ will be unchanged for all
matrices $\Vbf$ satisfying the linear constraints
\beq \label{eq:ABVconk}
    \Abf_k = \Vbf\Bbf_k.
\eeq
Hence, the conditional distribution of $\Vbf$ given $G_k$ is precisely
the uniform distribution on the set of orthogonal matrices satisfying
\eqref{eq:ABVconk}.  The matrices $\Abf_k$ and $\Bbf_k$ are of dimensions
$N \x s$ where $s=2k+1$.
From Lemma~\ref{lem:orthogLin},
\beq \label{eq:Vconk}
    \Vbf =
    \Abf_k(\Abf\tran_k\Abf_k)^{-1}\Bbf_k\tran + \Ubf_{\Abf_k^\perp}\tilde{\Vbf}\Ubf_{\Bbf_k^\perp}\tran,
\eeq
where $\Ubf_{\Abf_k^\perp}$ and $\Ubf_{\Bbf_k^\perp}$ are $N \x (N-s)$ matrices
whose columns are an orthonormal basis for $\Range(\Abf_k)^\perp$ and $\Range(\Bbf_k)^\perp$.
The matrix $\tilde{\Vbf}$ is  Haar distributed on the set of $(N-s)\x(N-s)$
orthogonal matrices and independent of $G_k$.

Next, similar to the proof in \cite{rangan2016vamp},
we use \eqref{eq:Vconk} to write $\qbf_k$ in \eqref{eq:qupgen} as a sum of two terms
\beq \label{eq:qpart}
    \qbf_k = \Vbf\tran\vbf_k = \qbf_k^{\rm det} + \qbf_k^{\rm ran},
\eeq
where $\qbf_k^{\rm det}$ is what we will call the \emph{deterministic} part:
\beq \label{eq:qkdet}
    \qbf_k^{\rm det} = \Bbf_k(\Abf\tran_k\Abf_k)^{-1}\Abf_k\tran\vbf_k,
\eeq
and $\qbf_k^{\rm ran}$ is what we will call the \emph{random} part:
\beq \label{eq:qkran}
    \qbf_k^{\rm ran} = \Ubf_{\Bbf_k^\perp}\tilde{\Vbf}\tran \Ubf_{\Abf_k^\perp}\tran \vbf_k.
\eeq
The next two lemmas evaluate the asymptotic distributions of the two
terms in \eqref{eq:qpart} and are similar to those in the proof in
\cite{rangan2016vamp}.

\begin{lemma} \label{lem:qconvdet}
Under the induction hypothesis $H_{k,\km1}$, there exists constants
$\beta_{k,0},\ldots,\beta_{k,\km1}$ such that
\beq \label{eq:Qkdetlim}
    \qbf_k^{\rm det} = \beta_{k0}\tilde{\qbf}_0 +
        \cdots + \beta_{k,\km1}\tilde{\qbf}_{\km1} + \ONinv.
\eeq
\end{lemma}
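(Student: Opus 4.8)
The plan is to identify the limiting value of the coefficient vector that represents $\qbf_k^{\rm det}$ in the columns of $\Bbf_k=[\Ubf_k~\Qbf_{\km1}]$, and to show that the coefficients multiplying the $\ubf$-columns vanish, leaving only a combination of $\qbf_0,\ldots,\qbf_{\km1}$. Rescaling \eqref{eq:qkdet} (the factors of $N$ cancel), write $\qbf_k^{\rm det}=\Bbf_k\mathbf{b}_k$ with $\mathbf{b}_k:=\big(\tfrac1N\Abf_k\tran\Abf_k\big)^{-1}\tfrac1N\Abf_k\tran\vbf_k\in\R^{2k+1}$. Partition $\mathbf{b}_k=(\mathbf{b}_k^u,\mathbf{b}_k^q)$ according to the blocks $\Ubf_k$ (length $\kp1$) and $\Qbf_{\km1}$ (length $k$). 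It then suffices to show that $\mathbf{b}_k$ converges to a limit $\overline{\mathbf{b}}=(\overline{\mathbf{b}}^u,\overline{\mathbf{b}}^q)$ with $\overline{\mathbf{b}}^u=\zero$, and to control the resulting error.

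First I would pass to the limit in both the Gram matrix and the right-hand side using Lemma~\ref{lem:vpcorr}. Recalling $\Abf_k=[\Pbf_k~\Vbf_{\km1}]$ from \eqref{eq:ABdef}, the convergence of $\tfrac1N\pbf_i\tran\pbf_j$ and $\tfrac1N\vbf_i\tran\vbf_j$ in \eqref{eq:vpcorr1} together with $\tfrac1N\vbf_i\tran\pbf_j\arr 0$ in \eqref{eq:vpcorr2} shows that $\tfrac1N\Abf_k\tran\Abf_k$ converges to a \emph{block-diagonal} matrix with diagonal blocks $\overline{\mathbf{P}}$ (from $\Pbf_k$) and $\overline{\mathbf{V}}$ (from $\Vbf_{\km1}$) and vanishing off-diagonal blocks. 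The same two facts show that $\tfrac1N\Abf_k\tran\vbf_k$ splits into a top block $\tfrac1N\Pbf_k\tran\vbf_k\arr\zero$ and a bottom block $\tfrac1N\Vbf_{\km1}\tran\vbf_k\arr\overline{\mathbf{c}}$, where $\overline{\mathbf{c}}_j=\lim_{N\arr\infty}\tfrac1N\vbf_j\tran\vbf_k$.

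Because the limiting Gram matrix is block diagonal and the top block of the limiting right-hand side is $\zero$, the limit has the form $\overline{\mathbf{b}}=(\zero,\,\overline{\mathbf{V}}^{-1}\overline{\mathbf{c}})$; that is, the $\kp1$ entries multiplying $\Ubf_k$ vanish. To finish, control $\qbf_k^{\rm det}-\Bbf_k\overline{\mathbf{b}}=\Bbf_k(\mathbf{b}_k-\overline{\mathbf{b}})$: since $\tfrac1N\Bbf_k\tran\Bbf_k$ converges (its entries are the $\ubf$/$\qbf$ correlations, which converge by the induction hypothesis $H_{k,\km1}$ and the argument symmetric to Lemma~\ref{lem:vpcorr}) and $\mathbf{b}_k-\overline{\mathbf{b}}\arr 0$, the quadratic form $\tfrac1N\|\Bbf_k(\mathbf{b}_k-\overline{\mathbf{b}})\|^2=(\mathbf{b}_k-\overline{\mathbf{b}})\tran\big(\tfrac1N\Bbf_k\tran\Bbf_k\big)(\mathbf{b}_k-\overline{\mathbf{b}})\arr 0$. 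Hence $\qbf_k^{\rm det}=\Bbf_k\overline{\mathbf{b}}+\ONinv=\sum_{j=0}^{\km1}\beta_{kj}\qbf_j+\ONinv$ with $\beta_{kj}:=(\overline{\mathbf{V}}^{-1}\overline{\mathbf{c}})_j$, and applying $\qbf_j=\tilde{\qbf}_j+\ONinv$ for $j\le\km1$ (again from $H_{k,\km1}$) yields \eqref{eq:Qkdetlim}.

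The step I expect to need the most care is the invertibility of the limiting blocks (in particular $\overline{\mathbf{V}}$, and of $\tfrac1N\Abf_k\tran\Abf_k$ itself): the coefficient vector $\mathbf{b}_k$ must converge to a well-defined limit, which requires the normalized Gram matrix to converge to a \emph{nonsingular} matrix so that matrix inversion is continuous at the limit. This is where a nondegeneracy condition on the jointly Gaussian limits $(\tilde{q}_{n0},\ldots,\tilde{q}_{n,\km1})$ enters. Following \cite{rangan2016vamp}, the possibly-degenerate case is handled by restricting to a maximal linearly independent subcollection (equivalently, a pseudoinverse/perturbation argument), so that the degenerate directions are absorbed into the $\ONinv$ term and the decomposition \eqref{eq:Qkdetlim} still holds.
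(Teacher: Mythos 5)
Your proposal is correct and follows essentially the same route as the paper's proof: both use Lemma~\ref{lem:vpcorr} to show that $\tfrac{1}{N}\Abf_k\tran\Abf_k$ converges to a block-diagonal limit and that the $\Pbf_k$-block of $\tfrac{1}{N}\Abf_k\tran\vbf_k$ vanishes, so the coefficient vector converges to $(\zero,[\Qbf^v]^{-1}\bbf^v)$ and only the $\tilde{\qbf}_\ell$ columns survive. Your explicit control of the residual $\Bbf_k(\mathbf{b}_k-\overline{\mathbf{b}})$ via the quadratic form $(\mathbf{b}_k-\overline{\mathbf{b}})\tran\bigl(\tfrac{1}{N}\Bbf_k\tran\Bbf_k\bigr)(\mathbf{b}_k-\overline{\mathbf{b}})$, and your flagging of the nonsingularity of the limiting Gram matrix, are details the paper leaves implicit but are handled exactly in the spirit of \cite{rangan2016vamp}.
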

\begin{proof}
From Lemma~\ref{lem:vpcorr}, these exists almost surely.
We evaluate the asymptotic values of various terms in \eqref{eq:qkdet}.
Using the definition of $\Abf_k$ in \eqref{eq:ABdef},
\[
    \Abf\tran_k\Abf_k = \left[ \begin{array}{cc}
        \Pbf_k\tran\Pbf_k & \Pbf_k\tran\Vbf_{\km1} \\
        \Vbf_{\km1}\tran\Pbf_k & \Vbf_{\km1}\tran\Vbf_{\km1}
        \end{array} \right]
\]
For $i,j \leq k$, define
\[
    Q^p_{ij} :=
    \lim_{N \arr \infty} \frac{1}{N} \pbf_i\tran\pbf_j, \quad
    Q^v_{ij} := \lim_{N \arr \infty} \frac{1}{N} \vbf_i\tran\vbf_j.
\]
From Lemma~\ref{lem:vpcorr}, these limits exists almost surely.
Let $\Qbf^p$ be the matrix with components $Q^p_{ij}$ for $i,j\leq k$
and let $\Qbf^v$ be the matrix with components $Q^v_{ij}$ for $i,j< k$.
Then, since $\pbf_i$ and $\pbf_j$ are the $i$-th and $j$-th column of
$\Pbf_k$, the
$(i,j)$-th component of the matrix $\Pbf_k\tran\Pbf_k$ is given by
\begin{align*}
    \lim_{N \arr \infty} \frac{1}{N} \left[ \Pbf_k\tran\Pbf_k \right]_{ij}
        = \lim_{N \arr \infty} \frac{1}{N} \pbf_i\tran\pbf_j = Q^p_{ij}.
\end{align*}
Similarly,
\[
    \lim_{N \arr \infty} \frac{1}{N} \Vbf_{\km1}\tran\Vbf_{\km1} = \Qbf^v
\]
almost surely.
Also, from Lemma~\ref{lem:vpcorr},
\[
    \lim_{N \arr \infty} \frac{1}{N} \Pbf_k\tran\Vbf_{\km1} = 0,
\]
almost surely.
The above calculations show that
\beq \label{eq:AAlim}
    \lim_{N \arr \infty} \frac{1}{N}\Abf\tran_k\Abf_k = \left[ \begin{array}{cc}
        \Qbf^p & \mathbf{0} \\
        \mathbf{0} & \Qbf^v
        \end{array} \right].
\eeq
A similar calculation shows that
\beq \label{eq:Aslim}
    \lim_{N \arr \infty} \frac{1}{N} \Abf\tran_k\vbf_k = \left[
    \begin{array}{c} \mathbf{0} \\ \bbf^v \end{array} \right],
\eeq
where $\bbf^v$ is the vector of correlations
\beq
    \bbf^v = \left[ Q^v_{0k} ~ Q^v_{1k} ~\cdots ~Q^v_{\km1,k} \right]\tran.
\eeq
Combining \eqref{eq:AAlim} and \eqref{eq:Aslim} shows that
\beq \label{eq:Vsmult1}
    \lim_{N \arr \infty} (\Abf\tran_k\Abf_k)^{-1}\Abf_k\tran \vbf_k =
    \left[ \begin{array}{c} \mathbf{0} \\ \mathbf{\beta}_k \end{array} \right],
\eeq
where
\[
    \mathbf{\beta}_k := \left[ \Qbf^v\right]^{-1}\bbf^v.
\]
Therefore,
\begin{align}
    \MoveEqLeft \qbf_k^{\rm det} = \Bbf_k(\Abf\tran_k\Abf_k)^{-1}\Abf_k\tran\vbf_k 
    = \left[ \Ubf_k ~ \Qbf_{\km1} \right]
    \left[ \begin{array}{c} \mathbf{0} \\ \mathbf{\beta}_k \end{array} \right]
    +\ONinv 
    = \sum_{\ell=0}^{\km1} \beta_{k\ell} \tilde{\qbf}_\ell +\ONinv.
\end{align}
This completes the proof of the lemma.
\end{proof}

\begin{lemma} \label{lem:rhoconv}
Under the induction hypothesis $H_{k,\km1}$, the following limit
holds almost surely
\beq
    \lim_{N \arr \infty} \frac{1}{N} \| \Ubf_{\Abf_k^\perp}\tran\vbf_k\|^2 =
    \rho_k,
\eeq
for some constant $\rho_k \geq 0$.
\end{lemma}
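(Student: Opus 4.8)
The plan is to reduce $\|\Ubf_{\Abf_k^\perp}\tran\vbf_k\|^2$ to quantities whose almost-sure limits were already computed in the proof of Lemma~\ref{lem:qconvdet}. Since the columns of $\Ubf_{\Abf_k^\perp}$ form an orthonormal basis for $\Range(\Abf_k)^\perp$, the product $\Ubf_{\Abf_k^\perp}\Ubf_{\Abf_k^\perp}\tran$ is the orthogonal projector onto $\Range(\Abf_k)^\perp$, i.e.\ $\Ibf - \Abf_k(\Abf_k\tran\Abf_k)^{-1}\Abf_k\tran$. I would therefore write
\[
    \|\Ubf_{\Abf_k^\perp}\tran\vbf_k\|^2
    = \|\vbf_k\|^2 - (\Abf_k\tran\vbf_k)\tran(\Abf_k\tran\Abf_k)^{-1}(\Abf_k\tran\vbf_k),
\]
so that it remains only to take the limits of $\frac1N\|\vbf_k\|^2$, $\frac1N\Abf_k\tran\Abf_k$, and $\frac1N\Abf_k\tran\vbf_k$.

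The first term is handled directly by Lemma~\ref{lem:vpcorr}, which gives $\frac1N\|\vbf_k\|^2 \arr \tau_{2k}$ almost surely. For the quadratic form I would rescale inside and out,
\[
    \tfrac1N(\Abf_k\tran\vbf_k)\tran(\Abf_k\tran\Abf_k)^{-1}(\Abf_k\tran\vbf_k)
    = \big(\tfrac1N\Abf_k\tran\vbf_k\big)\tran\big(\tfrac1N\Abf_k\tran\Abf_k\big)^{-1}\big(\tfrac1N\Abf_k\tran\vbf_k\big),
\]
and then invoke the limits \eqref{eq:AAlim} and \eqref{eq:Aslim} already established in Lemma~\ref{lem:qconvdet}, namely
\[
    \tfrac1N\Abf_k\tran\Abf_k \arr \left[\begin{array}{cc}\Qbf^p & \zero\\\zero & \Qbf^v\end{array}\right], \quad
    \tfrac1N\Abf_k\tran\vbf_k \arr \left[\begin{array}{c}\zero\\\bbf^v\end{array}\right].
\]
Using continuity of matrix inversion at the limit together with continuity of the quadratic form, this gives
\[
    \lim_{N \arr \infty}\tfrac1N\|\Ubf_{\Abf_k^\perp}\tran\vbf_k\|^2
    = \tau_{2k} - (\bbf^v)\tran(\Qbf^v)^{-1}\bbf^v =: \rho_k,
\]
a deterministic constant; only the block $\Qbf^v$ survives because the leading block of the limit of $\frac1N\Abf_k\tran\vbf_k$ vanishes. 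Non-negativity of $\rho_k$ is automatic, since the left-hand side is a limit of non-negative quantities.

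The main obstacle is justifying the passage to the limit inside the inverse, i.e.\ ensuring that the limiting Gram matrix is nonsingular---equivalently, that the Gaussian covariance blocks $\Qbf^p$ and $\Qbf^v$ are positive definite---since matrix inversion is continuous only at invertible points and a degenerate limit would break the argument. This is exactly the non-degeneracy that the standing hypotheses $\alphabar_{1k}\in(0,1)$ and $\gammabar_{ik}>0$ are meant to supply, and it parallels the corresponding step in \cite{rangan2016vamp}. The vanishing of the leading block of $\lim\frac1N\Abf_k\tran\vbf_k$ is itself a consequence of the asymptotic orthogonality $\frac1N\vbf_i\tran\pbf_j\arr 0$ from Lemma~\ref{lem:vpcorr}; should $\Qbf^p$ nonetheless degenerate, I would first restrict $\Abf_k$ to a maximal asymptotically independent subset of its columns so that the reduced Gram matrix is invertible before inverting.
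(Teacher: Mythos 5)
Your proof is correct and follows essentially the same route as the paper's: both expand $\|\Ubf_{\Abf_k^\perp}\tran\vbf_k\|^2$ via the projection identity $\|\vbf_k\|^2 - \vbf_k\tran\Abf_k(\Abf_k\tran\Abf_k)^{-1}\Abf_k\tran\vbf_k$, invoke the limits \eqref{eq:AAlim}, \eqref{eq:Aslim} and Lemma~\ref{lem:vpcorr} from the preceding lemma, and arrive at the same constant $\rho_k = \tau_{2k} - (\bbf^v)\tran(\Qbf^v)^{-1}\bbf^v$. Your additional attention to the invertibility of the limiting Gram matrix is a point the paper leaves implicit, but it does not change the argument.
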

\begin{proof}  From \eqref{eq:ABdef}, the matrix  $\Abf_k$ has $s=2k+1$ columns.
From Lemma~\ref{lem:orthogLin},
$\Ubf_{\Abf_k^\perp}$ is an orthonormal basis of $N-s$ in the $\mathrm{Range}(\Abf_k)^\perp$.
Hence, the energy $\| \Ubf_{\Abf_k^\perp}\vbf_k\|^2$ is precisely
\[
    \| \Ubf_{\Abf_k^\perp}\sbf_k\|^2 = \vbf_k\tran\vbf_k - \vbf_k\tran\Abf_k
        (\Abf_k\tran\Abf_k)^{-1}\Abf_k\tran\vbf_k.
\]
Using similar calculations as the previous lemma, we have
\[
    \lim_{N \arr \infty} \frac{1}{N} \| \Ubf_{\Abf_k}\sbf_k\|^2
    = \tau_{2k} - (\bbf^v)\tran\left[ \Qbf^v \right]^{-1}\bbf^v.
\]
Hence, the lemma is proven if we define $\rho_k$ as the right hand side of this
equation.
\end{proof}

\begin{lemma} \label{lem:qconvran}
Under the induction hypothesis $H_{k,\km1}$, the
``random" part $\qbf_k^{\rm ran}$ is given by,
\beq \label{eq:qranlim}
    \qbf_k^{\rm ran} = \ubf_k + \ONinv,
\eeq
where $\ubf_k$ is an i.i.d.\ zero mean Gaussian random vector
independent of $\wbf^p$ and $\tilde{\qbf}_j$, $j=0,\ldots,\km1$.
\end{lemma}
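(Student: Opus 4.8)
The plan is to recognize that $\qbf_k^{\rm ran}$ in \eqref{eq:qkran} has exactly the structure treated by Lemma~\ref{lem:orthogGaussLim}. Writing \eqref{eq:qkran} as $\qbf_k^{\rm ran} = \Ubf_{\Bbf_k^\perp}\,\tilde{\Vbf}\tran\,\xbf$ with $\xbf := \Ubf_{\Abf_k^\perp}\tran\vbf_k \in \R^{N-s}$ and $s = 2k+1$, I would identify $\Ubf := \Ubf_{\Bbf_k^\perp}$ (which has orthonormal columns, so $\Ubf\tran\Ubf = \Ibf$) with the matrix $\Ubf$ of the lemma, and $\tilde{\Vbf}\tran$ with the Haar matrix (the transpose of a Haar matrix is again Haar). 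Since $s$ is fixed while $N \arr \infty$, the normalization by $N$ versus $N-s$ is immaterial.

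Next I would verify the three hypotheses of Lemma~\ref{lem:orthogGaussLim}. Property (i) holds by construction of $\Ubf_{\Bbf_k^\perp}$. Property (ii) is supplied by Lemma~\ref{lem:rhoconv}, which gives $\lim_{N\arr\infty}\tfrac1N\|\xbf\|^2 = \rho_k$ almost surely, so the role of $\tau$ is played by $\rho_k$. Property (iii) follows from the conditional characterization of $\Vbf$: by Lemma~\ref{lem:orthogLin} applied to the constraints \eqref{eq:ABVconk}, the residual matrix $\tilde{\Vbf}$ is Haar distributed on the $(N-s)\x(N-s)$ orthogonal group and independent of the sigma-algebra $G_k$, while both $\Ubf_{\Bbf_k^\perp}$ and $\xbf = \Ubf_{\Abf_k^\perp}\tran\vbf_k$ are $G_k$-measurable. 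Lemma~\ref{lem:orthogGaussLim} then yields $\qbf_k^{\rm ran} = \ubf_k + \ONinv$ with $\ubf_k \sim \Norm(0,\rho_k\Ibf)$ i.i.d.

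The remaining and most delicate point is the claimed independence of $\ubf_k$ from $\wbf^p$ and from $\tilde{\qbf}_0,\dots,\tilde{\qbf}_{\km1}$. For the latter, the vectors $\tilde{\qbf}_j$ with $j<k$ were generated at earlier stages of the induction from the previously revealed Haar components and are therefore $G_k$-measurable; since $\ubf_k$ is built from $\tilde{\Vbf}$, which is independent of $G_k$, we obtain $\ubf_k \perp \{\tilde{\qbf}_j\}_{j<k}$. For independence from $\wbf^p$, I would fold $\wbf^p$ (and $\wbf^q$) into the conditioning from the outset: because $\Vbf$ is independent of $(\ubf_0,\wbf^p,\wbf^q)$, Lemma~\ref{lem:orthogLin} may be invoked conditionally on $(\wbf^p,\wbf^q)$, and the resulting $\tilde{\Vbf}$ remains Haar and independent of $G_k$ even after this augmentation. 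The induced Gaussian $\ubf_k$ has limiting covariance $\rho_k\Ibf$ whose value is deterministic, so its law does not depend on the conditioned value of $\wbf^p$, giving $\ubf_k \perp \wbf^p$.

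I expect the main obstacle to be exactly this independence bookkeeping: $\wbf^p$ does not appear explicitly in the conditioning set $G_k$ yet enters it implicitly through the recursion, so one must check that augmenting the conditioning by $(\wbf^p,\wbf^q)$ does not disturb the conditional-Haar structure of $\tilde{\Vbf}$. A secondary technical point is carrying through the exact-Gaussian coupling of Lemma~\ref{lem:orthogGaussLim}, so that $\ubf_k$ genuinely has deterministic covariance $\rho_k\Ibf$ rather than merely the correct limiting second moments; this is what makes the asymptotic independence statement meaningful and is what allows the next half-step of the induction to treat $\tilde{\qbf}_k$ as fresh Gaussian noise.
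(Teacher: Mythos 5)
Your proposal is correct and follows essentially the same route as the paper: a direct application of Lemma~\ref{lem:orthogGaussLim} with $\Ubf=\Ubf_{\Bbf_k^\perp}$, the Haar matrix $\tilde{\Vbf}$ from Lemma~\ref{lem:orthogLin}, and $\xbf=\Ubf_{\Abf_k^\perp}\tran\vbf_k$, whose normalized energy converges to $\rho_k$ by Lemma~\ref{lem:rhoconv}. In fact your treatment of the independence claims (folding $\wbf^p,\wbf^q$ into the conditioning and noting that $\tilde{\qbf}_0,\dots,\tilde{\qbf}_{\km1}$ are $G_k$-measurable) is more explicit than the paper's proof, which asserts independence of $\tilde{\Vbf}$ from $G_k$ and leaves the rest of that bookkeeping implicit.
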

\begin{proof}
This is a direct application of Lemma~\ref{lem:orthogGaussLim}.
Let $\xbf = \Ubf_{\Abf_k^\perp}\tran\vbf_k$ so that
\[
    \qbf_k^{\rm det} = \Ubf_{\Bbf_k^\perp}\Vbf\tran\xbf_k.
\]
For each $N$, $\Ubf_{\Bbf_k^\perp} \in \R^{N \x (N-s)}$ is a matrix
with orthonormal columns spanning $\Range(\Bbf_k)^\perp$.
Also, since $\tilde{\Vbf}$ is uniformly distributed on the set of
$(N-s)\x (N-s)$ orthogonal matrices, and independent of $G_k$,
it is independent of $\xbf$.
Lemma~\ref{lem:rhoconv} also shows that
\[
    \lim_{N \arr \infty} \frac{1}{N} \|\xbf\|^2 = \rho_k,
\]
almost surely.
The limit \eqref{eq:qranlim} now follows from
Lemma~\ref{lem:orthogGaussLim}.
\end{proof}

Using the partition \eqref{eq:qpart} and Lemmas~\ref{lem:qconvdet} and \ref{lem:qconvran},
we have that
\[
    \qbf_k = \tilde{\qbf}_k + \ONinv, \quad
     \tilde{\qbf}_k := \beta_{k0}\tilde{\qbf}_0 +
        \cdots + \beta_{k,\km1}\tilde{\qbf}_{\km1} + \ubf.
\]
Now, by the induction by hypothesis, the matrix $\tilde{\Qbf}_{\km1}$ are
have i.i.d.\ rows that are jointly Gaussian.   The matrix $\tilde{\Qbf}_k$ is formed by
adding the column $\tilde{\qbf}_k$ to $\tilde{\Qbf}_{\km1}$.
Since $\ubf$ is Gaussian i.i.d.\ independent of $\tilde{\qbf}_j$ for $j < k$,
we have that the matrix $\tilde{\Qbf}_k$ will have i.i.d.\ rows that are jointly Gaussian.

It remains to show all the limits in \eqref{eq:ag2limgen}.
First,
\begin{align*}
    E[\tilde{q}_{nk}^2] &\stackrel{(a)}{=} \lim_{N \arr \infty} \frac{1}{N} \|\tilde{\qbf}_k\|^2
    \\
    &\stackrel{(b)}{=} \lim_{N \arr \infty} \frac{1}{N} \|\qbf_k\|^2
    \stackrel{(d)}{=}
    \lim_{N \arr \infty} \frac{1}{N} \|\vbf_k\|^2 \stackrel{(d)}{=} \tau_{2k},
\end{align*}
where (a) follows from the Strong Law of Large Numbers and the fact that the components of
$\tilde{\qbf}_k$ are i.i.d.;
(b) follows from the fact that $\qbf_k = \tilde{\qbf}_k + \ONinv$;
(c) follows from \eqref{eq:qupgen} and the fact that $\Vbf$ is orthogonal;
and (d) follows from Lemma~\ref{lem:vpcorr}.
Now the function $\Gamma_1(\gamma_1,\alpha_1)$ is assumed to be
continuous at $(\gammabar_{1k},\alphabar_{1k})$.  Also, the induction hypothesis assumes that
$\alpha_{1k} \arr \alphabar_{1k}$ and $\gamma_{1k} \arr \gammabar_{1k}$ almost surely.
Hence,
\beq \label{eq:gam2limpf}
    \lim_{N \arr \infty} \gamma_{2k} = \lim_{N \arr \infty} \Gamma_1(\gamma_{1k},\alpha_{1k})
    = \gammabar_{2k}.
\eeq
In addition,
\begin{align}
    \lim_{N \arr \infty} \alpha_{2k}
    &\stackrel{(a)}{=} \lim_{N \arr \infty} \bkt{\nabla \fbf_q(\qbf_k,\wbf^q,\gamma_{2k})} \nonumber \\
    &\stackrel{(b)}{=} \lim_{N \arr \infty} \bkt{\nabla \fbf_q(\tilde{\qbf}_k,\wbf^q,\gammabar_{2k})}
    \stackrel{(c)}{=} {\mathcal A}_q(\tau_{2k},\gammabar_{2k})
    \stackrel{(d)}{=} \alphabar_{2k}, \label{eq:a2limpf}
\end{align}
where (a) follows from \eqref{eq:alpha2gen};
(b) follows from the Lipschitz continuity assumptions of $\fbf_q(\cdot)$;
(c) follows from \eqref{eq:divpq}
and (d) follows from \eqref{eq:a2segen}.
The limits \eqref{eq:gam2limpf} and \eqref{eq:a2limpf} prove \eqref{eq:ag2limgen}.
This completes the induction argument and the proof of the theorem.

\section{Proof of Theorem~\ref{thm:se} } \label{sec:sepf}

The proof is virtually identical to that used in \cite{rangan2016vamp}.
Specifically, we show that
Theorem~\ref{thm:se} is a special case of Theorem~\ref{thm:genConv}.
As in \cite{rangan2016vamp},
we need to simply rewrite the recursions in Algorithm~\ref{algo:vamp} in the form
\eqref{eq:algoGen} by defining the error terms
\beq \label{eq:pvslr}
    \pbf_k := \rbf_{1k}-\xbf^0, \quad
    \vbf_k := \rbf_{2k}-\xbf^0,
\eeq
and their transforms,
\beq \label{eq:uqslr}
    \ubf_k := \Vbf\tran\pbf_k, \quad
    \qbf_k := \Vbf\tran\vbf_k.
\eeq
Also, define the disturbance terms
\beq \label{eq:wpqslr}
    \wbf^q := (\xibf,\sbf), \quad
    \wbf^p := \xbf^0, \quad \xibf := \Ubf\tran\wbf.
\eeq
Also, define the update functions,
\begin{subequations} \label{eq:fqpslr}
\begin{align}
    \fbf_q(\qbf,(\xibf,\sbf),\gamma_2) &:= \frac{\gamma_w \sbf\xibf + \gamma_2 \qbf}{
        \gamma_w \sbf^2 + \gamma_2}, \label{eq:fqslr} \\
    \fbf_p(\pbf,\xbf^0,\gamma_1) &:= \gbf_1(\pbf+\xbf^0,\gamma_1) - \xbf^0.
        \label{eq:fpslr}
\end{align}
\end{subequations}
In the definition of the function $\fbf_q(\cdot)$, the product $\sbf\xibf$ and the division
are to be taken componentwise.
Also, let
\[
    C_i(\alpha_i) := \frac{1}{1-\alpha_i}, \quad \Gamma_i(\gamma_i,\alpha_i)
    :=  \gamma_i\left[\frac{1}{\alpha_i}-1 \right].
\]
Then, it is shown in \cite{rangan2016vamp} that the
recursions in Algorithm~\ref{algo:vamp} exactly match \eqref{eq:algoGen}.

So, all we need to do is show that the update functions in \eqref{eq:fqpslr}
satisfy Definitions~ \ref{def:uniflipgen} and \ref{def:convgen}.
These conditions are proven in the next two lemmas.
By the assumption of Theorem~\ref{thm:se}, $\gammabar_{2k} > 0$ for all $k$.
So, for any finite $k$, there exists a lower bound $\gamma_{2,min} > 0$
such that $\gammabar_{2\ell} \geq \gamma_{2,min}$ for all $\ell \leq k$.
Let $G_2 = \{\gamma_2 | \gamma_2 \geq \gamma_{2,min}\}$.

\begin{lemma} \label{lem:fqas}
The sequence of random vectors $\wbf^q$ in \eqref{eq:wpqslr},
functions $\fbf_q(\cdot)$ in \eqref{eq:fqslr} satisfy Definitions~\ref{def:uniflipgen}
and \ref{def:convgen} for $\gamma_2 \in G_2$.
\end{lemma}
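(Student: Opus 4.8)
The plan is to exploit the fact that $\fbf_q$ in \eqref{eq:fqslr} is \emph{separable} across components and \emph{affine} in its first argument, which collapses almost every requirement to the Strong Law of Large Numbers (SLLN). Writing the $n$-th component as $[\fbf_q(\qbf,(\xibf,\sbf),\gamma_2)]_n = a_n(\gamma_2)q_n + b_n(\gamma_2)\xi_n$ with $a_n(\gamma):=\gamma/(\gamma_w s_n^2+\gamma)$ and $b_n(\gamma):=\gamma_w s_n/(\gamma_w s_n^2+\gamma)$, the restriction $\gamma_2\in G_2$ (so $\gamma_2\geq\gamma_{2,\min}>0$) together with $s_n\in[0,s_{\max}]$ gives the uniform bounds $a_n(\gamma_2)\in(0,1]$ and $b_n(\gamma_2)\in[0,\gamma_w s_{\max}/\gamma_{2,\min}]$. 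A second structural point I would establish at the outset is that, because $\wbf\sim\Norm(\zero,\gamma_{w0}^{-1}\Ibf)$ is rotationally invariant and independent of $(\Ubf,\sbf)$, the vector $\xibf=\Ubf\tran\wbf$ is i.i.d.\ $\Norm(0,\gamma_{w0}^{-1})$ and independent of $\sbf$; hence the pairs $(\xi_n,s_n)$, and together with the test noise $(z_{1n},z_{2n})$, the quadruples $(\xi_n,s_n,z_{1n},z_{2n})$ are i.i.d.\ across $n$. This is the single fact that licenses every componentwise SLLN below.

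For Definition~\ref{def:uniflipgen} I would use the splitting
\[
    [\fbf_q(\qbf_2,\wbf^q,\gamma_2)]_n - [\fbf_q(\qbf_1,\wbf^q,\gamma_1)]_n = a_n(\gamma_2)(q_{2n}-q_{1n}) + \big(a_n(\gamma_2)-a_n(\gamma_1)\big)q_{1n} + \big(b_n(\gamma_2)-b_n(\gamma_1)\big)\xi_n .
\]
Summing squares and using $a_n(\gamma_2)\leq1$, the first piece contributes at most $\|\qbf_2-\qbf_1\|$, i.e.\ the $A\|\zbf_2-\zbf_1\|$ term with $A=1$. For the other two, the elementary estimates $|a_n(\gamma_2)-a_n(\gamma_1)|\leq\gamma_{2,\min}^{-1}|\gamma_2-\gamma_1|$ and $|b_n(\gamma_2)-b_n(\gamma_1)|\leq(\gamma_w s_{\max}/\gamma_{2,\min}^2)|\gamma_2-\gamma_1|$ turn them into $|\gamma_2-\gamma_1|$ times $\tfrac1{\sqrt N}\|\qbf_1\|$ and $\tfrac1{\sqrt N}\|\xibf\|$, respectively. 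Since $\tfrac1N\|\xibf\|^2\arr\gamma_{w0}^{-1}$ almost surely and the inputs arising in the recursion have convergent $\tfrac1N\|\qbf_1\|^2$, the $\limsup$ of both coefficients is finite, so these two pieces are exactly the $C|\gamma_2-\gamma_1|$ term of \eqref{eq:lipcondgen} (one may take $B=0$).

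For Definition~\ref{def:convgen} I would substitute the affine form into the inner product and invoke SLLN on the i.i.d.\ summands. Expanding $\tfrac1N\fbf_q(\zbf_1,\wbf^q,\gamma_1)\tran\fbf_q(\zbf_2,\wbf^q,\gamma_2)$ produces four averages; the two cross terms carry a factor $\Exp[z_{in}\xi_n]=0$ and vanish, leaving
\[
    {\mathcal M}(\Sbf,\gamma_1,\gamma_2) = S_{12}\,\Exp_{s}\!\Big[\tfrac{\gamma_1\gamma_2}{(\gamma_w s^2+\gamma_1)(\gamma_w s^2+\gamma_2)}\Big] + \gamma_{w0}^{-1}\,\Exp_{s}\!\Big[\tfrac{\gamma_w^2 s^2}{(\gamma_w s^2+\gamma_1)(\gamma_w s^2+\gamma_2)}\Big].
\]
Because $\fbf_q$ is affine in $\zbf_1$ with constant slope $a_n(\gamma_1)$, the divergence is simply $\bkt{\nabla\fbf_q}=\tfrac1N\sum_n a_n(\gamma_1)\arr\Exp_{s}[a(\gamma_1)]$, while the Stein correlation $\tfrac1{NS_{12}}\fbf_q(\zbf_1,\wbf^q,\gamma_1)\tran\zbf_2\arr\Exp_{s}[a(\gamma_1)]$ by the same computation (the $\xi$-contribution drops since $\Exp[\xi z_2]=0$); this verifies \eqref{eq:divlimgen} as an essentially trivial instance of Stein's lemma. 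Boundedness of $a$ and $b$ on $[0,s_{\max}]\times G_2$ then lets dominated convergence deliver the required continuity of ${\mathcal M}$ and ${\mathcal A}$ in $(\Sbf,\gamma_1,\gamma_2)$.

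The main thing to be careful about is not any single estimate but the two structural inputs that make them legitimate: first, that rotational invariance of the Gaussian noise genuinely renders $(\xi_n,s_n)$ i.i.d.\ and independent of the test noise, so that the componentwise SLLN applies; and second, that the $\gamma$-variation term in the Lipschitz bound—which really does scale like $\|\qbf_1\|$ and $\|\xibf\|$—is absorbed into the $C|\gamma_2-\gamma_1|$ slot only because the relevant inputs have bounded normalized norm. Neither point is deep, but both must be stated explicitly, since the entire argument rests on the affine and separable structure of $\fbf_q$ rather than on any delicate concentration.
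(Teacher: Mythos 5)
Your proof is correct and follows the same overall strategy as the paper's: exploit the separable (in fact affine) structure of $\fbf_q$, reduce the convergence conditions of Definition~\ref{def:convgen} to componentwise strong laws of large numbers using that $\xibf=\Ubf\tran\wbf$ is i.i.d.\ Gaussian and independent of $\sbf$, and verify the uniform Lipschitz condition through elementary bounds on the coefficients $\gamma_2/(\gamma_w s^2+\gamma_2)$ and $\gamma_w s/(\gamma_w s^2+\gamma_2)$ valid for $\gamma_2\geq\gamma_{2,\min}$. Where you genuinely diverge is the Lipschitz step, and there your treatment is the more careful one. The paper bounds $|\partial\phi/\partial\gamma_2|\leq(|q|+|\xi|)/\gamma_{2,\min}^2$ and then asserts the componentwise inequality $|\phi(q_2,s,\xi,\gamma_{22})-\phi(q_1,s,\xi,\gamma_{21})|\leq(A+B|\gamma_{22}-\gamma_{21}|)|q_2-q_1|+C|\xi|\,|\gamma_{22}-\gamma_{21}|$, which silently converts the unavoidable $|q_2|\,|\gamma_{22}-\gamma_{21}|$ contribution of the $\gamma$-derivative into a $|q_2-q_1|\,|\gamma_{22}-\gamma_{21}|$ one; as stated that inequality is false (take $q_1=q_2\neq 0$, $\xi=0$, $s>0$, $\gamma_{21}\neq\gamma_{22}$: the left side is strictly positive while the right side vanishes). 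You instead retain the $\bigl(a_n(\gamma_2)-a_n(\gamma_1)\bigr)q_{1n}$ term and absorb its aggregate $\tfrac{1}{\sqrt N}\|\qbf_1\|\,|\gamma_2-\gamma_1|$ into the $C|\gamma_2-\gamma_1|$ slot using boundedness of the normalized input norm, which is exactly what the general convergence theorem needs, since the iterates and their Gaussian surrogates all have a.s.\ convergent $\tfrac1N\|\cdot\|^2$. The one caveat you should state explicitly is that, read literally, Definition~\ref{def:uniflipgen} quantifies over arbitrary sequences $\zbf_1,\zbf_2$, so your verification holds on the class of sequences with $\limsup_N \tfrac{1}{\sqrt N}\|\zbf_1\|<\infty$; this restriction is unavoidable (by the same counterexample no uniform constant $C$ exists otherwise), so the definition can only sensibly be read in that restricted sense, but the dependence of your constant $C$ on the input sequence deserves a sentence. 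Two further points in your favor: you verify the Stein identity \eqref{eq:divlimgen} explicitly (the slope $a_n(\gamma_1)$ appears on both sides), where the paper only remarks it "can be proven similarly," and you correctly use the true precision $\gamma_{w0}$ in $\lim_N\tfrac1N\|\xibf\|^2$, where the paper conflates it with the postulated $\gamma_w$.
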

\begin{proof}
First note that the function $\fbf_q(\cdot)$ in \eqref{eq:fqslr}
is separable meaning that its $n$-th output is given by,
\beq \label{eq:fqsep}
    \left[ \fbf_q(\qbf,(\xibf,\sbf),\gamma_2)\right]_n
    = \phi(q,s,\xi,\gamma_2) := \frac{\gamma_w s\xi + \gamma_2 q}{
        \gamma_w s + \gamma_2}.
\eeq
For any $\gamma_2 \in G_2$, we can bound the partial derivatives,
\begin{align*}
    \left|\frac{\partial \phi(q,s,\xi,\gamma_2)}{\partial q}\right| &=
        \left|\frac{\gamma_2}{\gamma_w s + \gamma_2}\right| \leq 1, \\
    \left|\frac{\partial \phi(q,s,\xi,\gamma_2)}{\partial \gamma_2}\right| &=
        \left|\frac{q(\gamma_w s + \gamma_2) - \gamma_ws\xi-\gamma_2 q}{(\gamma_w s + \gamma_2)^2}\right|
    \\
    &\leq [|q|+|\xi|] \frac{\gamma_w s}{(\gamma_w s + \gamma_2)^2}
        \leq [|q|+|\xi|] \frac{1}{\gamma_{2,min}^2}.
\end{align*}
Therefore, if we let $A=1$, $B=C=1/\gamma_{2,min}^2$, we get that,
\[
    |\phi(q_2,s,\xi,\gamma_{22}) - \phi(q_1,s,\xi,\gamma_{21})|
    \leq (A+B|\gamma_{22}-\gamma_{21}|)|q_2-q_1| + C|\xi||\gamma_{22}-\gamma_{21}|,
\]
for and $q_1,q_2$ and $\gamma_{21},\gamma_{22} \in G_2$.
This implies that for any vectors $\qbf_1,\qbf_2$,
\begin{align*}
    \MoveEqLeft \frac{1}{\sqrt{N}}
    \| \fbf_q(\qbf_2,(\xibf,\sbf),\gamma_{22}) -
        \fbf_q(\qbf_1,(\xibf,\sbf),\gamma_{21})\| \\
    & \leq \frac{(A+B|\gamma_{22}-\gamma_{21}|)}{\sqrt{N}}\|\qbf_2-\qbf_1\| +
    C\frac{\|\xibf\|}{\sqrt{N}}|\gamma_{22}-\gamma_{21}|.
\end{align*}
Since $\xibf := \Ubf\tran\wbf$ and $\Ubf$
is orthogonal, $\|\xibf\|=\|\wbf\|$.  Also, since $\wbf \sim \Norm(\mathbf{0},\Ibf/\gamma_w)$,
\[
    \lim_{N \arr \infty} \frac{1}{N} \|\xibf\|^2 = \lim_{N \arr \infty} \frac{1}{N} \|\wbf\|^2
    = \frac{1}{\gamma_w},
\]
almost surely.
Therefore,
\begin{align*}
    \MoveEqLeft \limsup_{N \arr \infty}
    \frac{1}{\sqrt{N}}
    \| \fbf_q(\qbf_2,(\xibf,\sbf),\gamma_{22}) -
        \fbf_q(\qbf_1,(\xibf,\sbf),\gamma_{21})\| \nonumber \\
    & \leq
    \limsup_{N \arr \infty}
    \frac{(A+B|\gamma_{22}-\gamma_{21}|)}{\sqrt{N}}\|\qbf_2-\qbf_1\| + \frac{C}{\sqrt{\gamma_w}}
    |\gamma_{22}-\gamma_{21}|,
\end{align*}
which proves that $\fbf_q(\cdot)$ satisfies the uniform Lipschitz condition in
Definition~\ref{def:uniflipgen}.

We turn to the convergence properties in Definition~\ref{def:convgen}.
For each $N$, let $\qbf_1,\qbf_2$ be vectors with components
$(q_{1n},q_{2n})$ that are i.i.d.\ and Gaussian
$(q_{1n},q_{2n}) \sim \Norm(0,\Sbf)$ for some positive definite covariance matrix $\Sbf$.
Let $\gamma_{21},\gamma_{22} > 0$.
Since $\xibf := \Ubf\tran\wbf$, $\Ubf$
is orthogonal, and $\wbf \sim \Norm(\mathbf{0},\Ibf/\gamma_w)$,  we have that
$\xibf \sim \Norm(\mathbf{0},\Ibf/\gamma_w)$.
Hence, the components of $\xibf$ are i.i.d.
Also, by assumption, $\sbf$ has i.i.d.\ components,
independent of $\xibf$.  Therefore,
\begin{align*}
    \MoveEqLeft \lim_{N \arr \infty} \frac{1}{N}
    \fbf_q(\qbf_2,(\xibf,\sbf),\gamma_{22})\tran
        \fbf_q(\qbf_1,(\xibf,\sbf),\gamma_{21}) \\
    & \stackrel{(a)}{=}
    \lim_{N \arr \infty} \frac{1}{N}
    \phi(q_{2n},\xi_n,s_n,\gamma_{22})\phi(q_{1n},\xi_n,s_n,\gamma_{21}) \nonumber \\
    & \stackrel{(b)}{=}
    \lim_{N \arr \infty} \Exp\left[
    \phi(q_{2n},\xi_n,s_n,\gamma_{22})\phi(q_{1n},\xi_n,s_n,\gamma_{21}) \right],
\end{align*}
where (a) follows from the separability of $\fbf_q(\cdot)$ in \eqref{eq:fqsep}
and (b) follows from the fact that terms are i.i.d., so we can apply the Strong Law
of Large Numbers.  The convergence of the limit is almost sure.
This proves \eqref{eq:momlimgen}.  The limit \eqref{eq:divlimgen} can be proven
similarly.  Hence, the sequences $\wbf^q$ and $\fbf_q(\cdot)$ satisfy Definition~\ref{def:convgen}.
\end{proof}

Next, consider $\fbf_p(\cdot)$ in \eqref{eq:fpslr}.

\begin{lemma} \label{lem:fpas}
The sequence of random vectors $\wbf^p$ in \eqref{eq:wpqslr},
functions $\fbf_p(\cdot)$ in \eqref{eq:fpslr} satisfy Definitions~\ref{def:uniflipgen}
and \ref{def:convgen}.
\end{lemma}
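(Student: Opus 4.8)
\end{lemma}

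\begin{proof}
The plan is to transfer both required properties directly from the denoiser $\gbf_1$, which by Assumption~\ref{as:denoiser} is uniformly Lipschitz (Definition~\ref{def:uniflip}) and convergent under Gaussian noise (Definition~\ref{def:conv}). The key observation is that, by \eqref{eq:fpslr}, $\fbf_p(\pbf,\xbf^0,\gamma_1)=\gbf_1(\pbf+\xbf^0,\gamma_1)-\xbf^0$ is merely a recentering of $\gbf_1$ in both its input and its output, so the additive vector $\xbf^0$ will cancel in all the differences and derivatives that appear in Definitions~\ref{def:uniflipgen} and \ref{def:convgen}. Since Assumption~\ref{as:denoiser} places no restriction on $\gamma_1$, one may take the set for $\fbf_p$ to be the whole range of admissible $\gamma_1$.

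First I would verify the uniform Lipschitz condition. Because the constant shift cancels,
\[
  \fbf_p(\pbf_2,\xbf^0,\gamma_2)-\fbf_p(\pbf_1,\xbf^0,\gamma_1)
  = \gbf_1(\pbf_2+\xbf^0,\gamma_2)-\gbf_1(\pbf_1+\xbf^0,\gamma_1),
\]
and $(\pbf_2+\xbf^0)-(\pbf_1+\xbf^0)=\pbf_2-\pbf_1$. Applying the bound \eqref{eq:lipcond} for $\gbf_1$ with $\rbf_i=\pbf_i+\xbf^0$, dividing by $\sqrt{N}$, and taking $\limsup_{N\arr\infty}$ reproduces \eqref{eq:lipcondgen} with the same constants $A,B,C$, so this part is immediate. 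Next I would establish convergence under Gaussian noise. Expanding the moment quantity with $\fbf_p=\gbf_1-\xbf^0$ gives
\[
  \tfrac1N\fbf_p(\zbf_1,\xbf^0,\gamma_1)\tran\fbf_p(\zbf_2,\xbf^0,\gamma_2)
  = \tfrac1N\gbf_1(\xbf^0+\zbf_1,\gamma_1)\tran\gbf_1(\xbf^0+\zbf_2,\gamma_2)
  - \tfrac1N\gbf_1(\xbf^0+\zbf_1,\gamma_1)\tran\xbf^0
  - \tfrac1N(\xbf^0)\tran\gbf_1(\xbf^0+\zbf_2,\gamma_2)
  + \tfrac1N\|\xbf^0\|^2 ,
\]
and each of the four limits exists by \eqref{eq:momlim} together with the assumed convergence of $\tfrac1N\|\xbf^0\|^2$; this yields \eqref{eq:momlimgen}. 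For the sensitivity identity \eqref{eq:divlimgen}, the divergence is taken in the input $\pbf$, so the constant term $-\xbf^0$ contributes nothing and $\bkt{\nabla\fbf_p(\zbf_1,\xbf^0,\gamma_1)}=\bkt{\nabla\gbf_1(\xbf^0+\zbf_1,\gamma_1)}$; \eqref{eq:divlim} then identifies this with $\tfrac{1}{NS_{12}}\gbf_1(\xbf^0+\zbf_1,\gamma_1)\tran\zbf_2$. Rewriting in terms of $\fbf_p$ produces the stray term $\tfrac{1}{NS_{12}}(\xbf^0)\tran\zbf_2$, which I would argue vanishes almost surely: conditioned on $\xbf^0$, the average $\tfrac1N(\xbf^0)\tran\zbf_2$ is a sum of independent zero-mean terms whose variance is $O(1/N)$ because $\tfrac1N\|\xbf^0\|^2$ stays bounded, so it tends to $0$ by the strong law of large numbers. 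Continuity of the resulting functions ${\mathcal M}$ and ${\mathcal A}$ in $(\Sbf,\gamma_1,\gamma_2)$ is inherited from the corresponding continuity asserted in Definition~\ref{def:conv}.

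The main obstacle is organizational rather than analytical: the generic hypothesis (Definition~\ref{def:convgen}) bundles all the limiting behaviour into the two functions ${\mathcal M}$ and ${\mathcal A}$, whereas the denoiser hypothesis (Definition~\ref{def:conv}) supplies a longer list of elementary limits. The real work is to confirm that the recentering $\fbf_p=\gbf_1-\xbf^0$ carries one menu onto the other, and in particular that the lone cross-correlation $\tfrac1N(\xbf^0)\tran\zbf_2$ tends to zero so that the sensitivity identity \eqref{eq:divlimgen} holds in exactly the form demanded by Definition~\ref{def:convgen}.
\end{proof}
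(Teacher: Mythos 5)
Your proof is correct and follows essentially the same route as the paper's: the shift by $\xbf^0$ cancels in the Lipschitz bound so that Definition~\ref{def:uniflip} transfers directly, and the moment limit expands into inner-product terms each covered by \eqref{eq:convlim}. You are in fact more explicit than the paper on the divergence identity \eqref{eq:divlimgen} (which the paper dispatches with ``can be proven similarly''), correctly isolating the stray cross term $\tfrac{1}{N}(\xbf^0)\tran\zbf_2$ and showing it vanishes almost surely.
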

\begin{proof}
For any vectors $\pbf_1$, $\pbf_2$ and $\gamma_1$, $\gamma_2$,
\begin{align*}
  \MoveEqLeft
  \|\fbf_p(\pbf_2,\xbf^0,\gamma_2)- \fbf_p(\pbf_1,\xbf^0,\gamma_1)\|
  = \|\gbf_1(\pbf_2+\xbf^0,\gamma_2)- \gbf_1(\pbf_1+\xbf^0,\gamma_1)\| \nonumber \\
  &\leq (A + B|\gamma_2-\gamma_1|)\|\pbf_2-\pbf_1\|^2 + \sqrt{N}C|\gamma_2-\gamma_1|,
\end{align*}
where the last step follows from the fact that $\gbf_1(\cdot)$
is uniformly Lipschitz continuous as per Definition~\ref{def:uniflip}.
This shows that $\fbf_p(\cdot)$ satisfies the uniform Lipschitz continuity assumption
in Definition~\ref{def:uniflipgen}.

Now suppose that $\pbf_1,\pbf_2$ are Gaussian vectors such that the components,
$(p_{1n},p_{2n})$ are i.i.d.\ with $(p_{1n},p_{2n}) \sim \Norm(0,\Sbf)$.
Then,
\begin{align*}
    \MoveEqLeft
    \lim_{N \arr \infty} \frac{1}{N} \fbf_p(\pbf_1,\xbf^0,\gamma_1)\tran
        \fbf_p(\pbf_2,\xbf^0,\gamma_2) = \\
    &=\lim_{N \arr \infty} \frac{1}{N} \left[ \gbf_1(\pbf_1+\xbf^0,\gamma_1)\tran
        \gbf_1(\pbf_2+\xbf^0,\gamma_2)
        - 2(\xbf^0)\tran \gbf_1(\pbf_1+\xbf^0,\gamma_1) + \|\xbf^0\|^2 \right].
\end{align*}
All three terms on the right-hand side of this equation converge due to
the assumption that the limits in~\eqref{eq:convlim} converge.
Moreover, the limits are continuous in $\Sbf$, $\gamma_1$ and $\gamma_2$.
The convergence of \eqref{eq:divlimgen} can be proven similarly.
Hence, the sequences $\wbf^p$ and $\fbf_p(\cdot)$ satisfy Definition~\ref{def:convgen}.
\end{proof}

\medskip
Lemmas~\ref{lem:fqas} and \ref{lem:fpas} show that the vectors $\wbf^q$ and $\wbf^p$
and functions $\fbf_q(\cdot)$ and $\fbf_p(\cdot)$ satisfy the necessary conditions
of Theorem~\ref{thm:genConv}, which completes the proof of Theorem~\ref{thm:se}.

\section{Example Image Recoveries} \label{sec:image}

Figure~\ref{fig:images} shows the original images and examples of recovered
images for various algorithms after 12 iterations under
sampling rate $M/N=0.3$, $\text{cond}(\Abf)=1$, and no noise.
There we see that the quality of DnCNN-based recovery far exceeds that of LASSO.
The figure also shows that, in all cases, LASSO-VAMP outperformed LASSO-AMP 
and that in all but one case DnCNN-VAMP outperformed DnCNN-AMP.

\begin{figure}[t]
  \includegraphics[height=\textwidth,
                   angle=90,origin=cB,
                   clip=false]{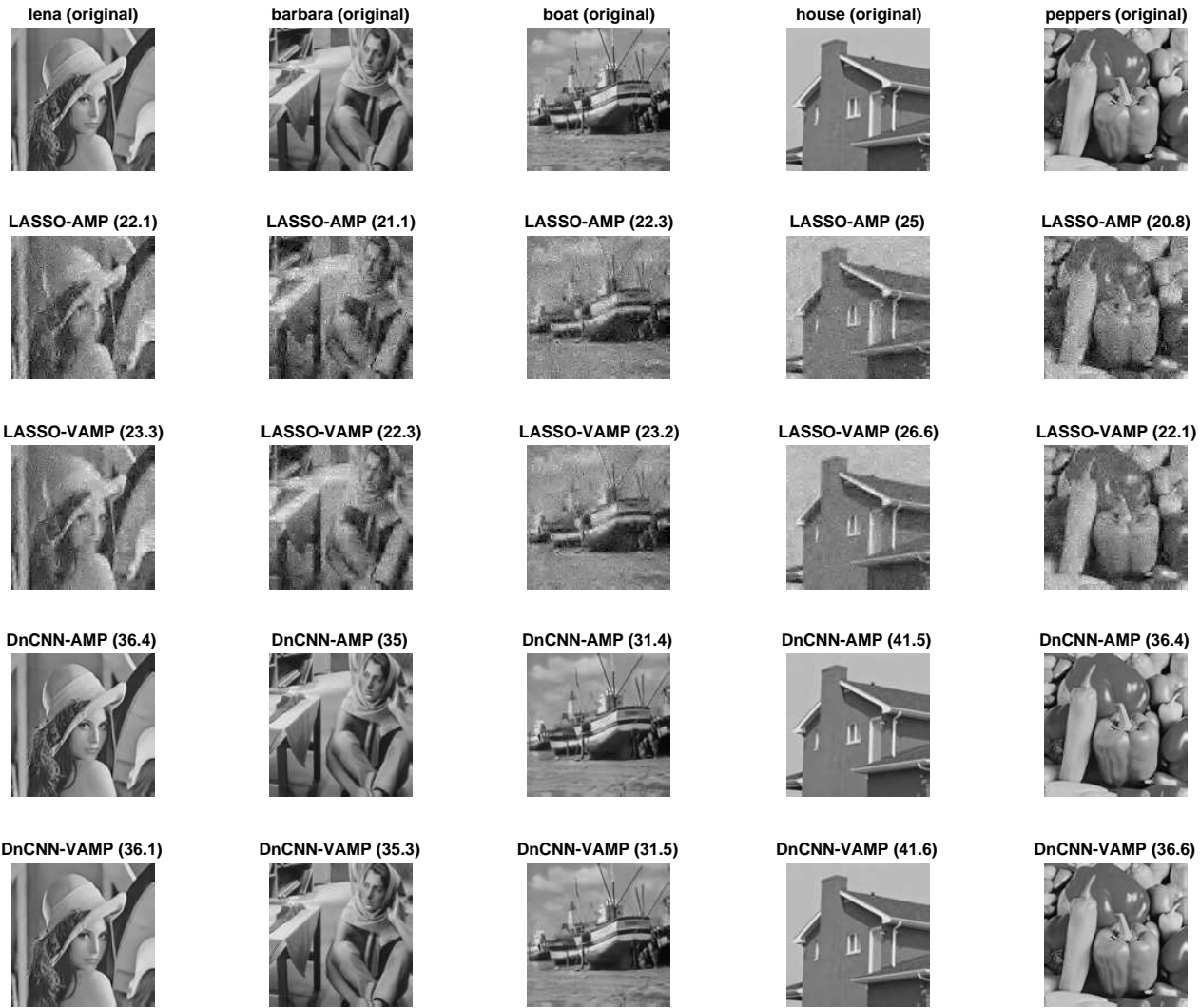}
  \caption{Compressive image recovery at $M/N=0.3$: Original and recovered images (with PSNR)}
  \label{fig:images}
\end{figure}

\end{document}